\crefname{claim}{Claim}{Claims}
\Crefname{claim}{Claim}{Claims}
\crefname{problem}{Problem}{Problems}
\Crefname{problem}{Problem}{Problems}
\title{Deterministic Sparse Pattern Matching\\via the Baur-Strassen Theorem}
\author{Nick Fischer\footnote{This work is part of the project CONJEXITY that has received funding from the European Research Council (ERC) under the European Union's Horizon Europe research and innovation programme (grant agreement No.~101078482).}\\[.2ex] (Weizmann Institute of Science)}
\date{}
\begin{document}
\maketitle

\begin{abstract}
\noindent
How fast can you test whether a constellation of stars appears in the night sky? This question can be modeled as the computational problem of testing whether a set of points $P$ can be moved into (or close to) another set $Q$ under some prescribed group of transformations. Problems of this kind are subject to intensive study in computational geometry and enjoy countless theoretical and practical applications.

Consider, as a simple representative, the following problem: Given two sets of at most~$n$ integers~$P, Q \subseteq [N]$, determine whether there is some shift $s$ such that $P$ shifted by $s$ is a subset of $Q$, i.e.,~\makebox{$P + s = \set{p + s : p \in P} \subseteq Q$}. This problem, to which we refer as the \emph{Constellation} problem, can be solved in near-linear time $\Order(n \log n)$ by a Monte Carlo randomized algorithm~[Cardoze, Schulman; FOCS~'98] and time $\Order(n \log^2 N)$ by a Las Vegas randomized algorithm~[Cole, Hariharan;~STOC~'02]. Moreover, there is a deterministic algorithm running in time~\smash{$n \cdot 2^{\Order(\sqrt{\log n \log\log N})}$}~[Chan, Lewenstein;~STOC~'15]. An interesting question left open by these previous works is whether Constellation is in \emph{deterministic near-linear} time (i.e., with only polylogarithmic overhead).

We answer this question positively by giving an $\Order(n \polylog(N))$-time deterministic algorithm for the Constellation problem. Our algorithm extends to various more complex Point Pattern Matching problems in higher dimensions, under translations and rigid motions, and possibly with mismatches, and also to a near-linear-time derandomization of the Sparse Wildcard Matching problem on strings.

We find it particularly interesting \emph{how} we obtain our deterministic algorithm. All previous algorithms are based on the same baseline idea, using additive hashing and the Fast Fourier Transform. In contrast, our algorithms are based on new ideas, involving a surprising blend of combinatorial and \emph{algebraic} techniques. At the heart lies an innovative application of the Baur-Strassen theorem from algebraic complexity theory.
\end{abstract}

\setcounter{page}{0}
\thispagestyle{empty}
\newpage

\section{Introduction}
Consider a constellation of stars; how fast can you test whether this constellation appears in the night~sky? This question, modeled in various flavors as the computational problem of testing whether a set of points~$P$ can be moved into (or close to) another point set $Q$ under some prescribed group of transformations, defines a rich family of pattern matching tasks from computational geometry and string algorithms. These problems have been extensively studied~\cite{AltMWW88,HuttenlocherKK92,Rucklidge93,GoodrichMO94,RezendeL95,Boxer96,IraniR96,ChewGHKKK97,CardozeS98,AkutsuTT98,GavrilovIMV99,IndykMV99,ColeH02,IndykV03,AmirKP07,ChoM08,AigerK09,Ukkonen10,ChanL15} and their applications within theoretical and practical computer science are numerous: In the context of image processing and computer vision, point pattern matching is employed for \emph{image registration}~\cite{MountNM98} (the task of determining for two images of the same scene which transformation most nearly maps one image into the other), and \emph{model-based object recognition}~\cite{Rucklidge96}. In computational chemistry, the applications include \emph{pharmacophore identification}~\cite{FinnKLMSVY97} and \emph{protein structure alignment}~\cite{Akutsu96}.

For the first few pages of this paper let us focus on a single basic representative of this colorful class of problems, to which we will simply refer as the \emph{Constellation} problem: Determine for two sets~\makebox{$P, Q \subseteq [N]$} of at most $n$ integers, whether there is a shift $s$ with~\makebox{$P + s = \set{p + s : p \in P} \subseteq Q$}. The perks of this problem are twofold: First, it is appealingly fundamental and simple to state. Second, the Constellation problem is in fact not just a toy problem, but captures the core hardness of many pattern matching problems on points and strings, as we will describe in \cref{sec:introduction:sec:results:sec:point,sec:introduction:sec:results:sec:string}. In particular, many (but not all) problems that we will discuss later even admit black-box reductions to the Constellation problem.

Despite its apparent simplicity, the Constellation problem is far from trivial and underwent an interesting history. The naive solution takes time $\Order(n^2)$.\footnote{Note that there are at most $|Q|$ candidate shifts $s$ (as an arbitrary point in $P$ can be mapped to any point in $Q$). Verifying the candidate shifts takes time $\Order(|P|)$ each. The total time is $\Order(|P| \cdot |Q|) = \Order(n^2)$.} In terms of randomized algorithms, it is well-established by two seminal papers from over 20 years ago that the Constellation problem can be solved in near-linear time. Specifically, Cardoze and Schulman~\cite{CardozeS98} designed a Monte Carlo randomized algorithm for the Constellation problem and many of its generalizations, running in near-linear time $\Order(n \log n)$. Their key idea was to use additive hashing to reduce the universe size from $N$ to $\Order(n)$, and to make use of the Fast Fourier Transform on these smaller universes. Shortly after, Cole and Hariharan~\cite{ColeH02} devised a verification approach leading to a Las Vegas randomized algorithm with near-linear running time $\Order(n \log^2 N)$. In the same paper, they managed to give near-linear \emph{deterministic} algorithms for many related pattern matching problems such as Wildcard Matching; however, finding a near-linear deterministic algorithm for Constellation remained open.

The first nontrivial deterministic algorithm, improving over the $\Order(n^2)$-time baseline solution, was suggested by Amir, Kapah and Porat~\cite{AmirKP07} who demonstrated how to preprocess the set $Q$ in time \smash{$\widetilde\Order(n^2)$} in order to support queries for a given set $P$ in near-linear time. Chan and Lewenstein~\cite{ChanL15} improved the preprocessing time to $\Order(n^{1+\epsilon})$, for any constant $\epsilon > 0$, using a very general technique of constructing pseudo-additive pseudo-perfect hash families. In particular, their result implies a deterministic algorithm for the Constellation problem in time $\Order(n^{1+\epsilon})$, for any constant~\makebox{$\epsilon > 0$}. A closer inspection reveals that they even obtain a deterministic algorithm in almost-linear time~\smash{$n \cdot 2^{\Order(\sqrt{\log n \log\log N})}$}.

Their result settles that the deterministic complexity is linear up to subpolynomial factors. However, the super-polylogarithmic overhead is quite unsatisfactory---especially since for many related pattern matching problems~\cite{ColeHI99,ColeH02}, and also for the related sparse convolution problem~\cite{BringmannFN22}, deterministic near-linear time algorithms are known. In this paper, we therefore revisit the following question:
\medskip
\begin{center}
    \emph{Can the Constellation problem be solved in deterministic near-linear time $\Order(n \polylog(N))$?}
\end{center}
\medskip
We remark that in general efficient derandomizations are sought after in the area of geometric pattern matching~\cite{Indyk97,ColeHI99,ColeH02,CliffordC07,Gawrychowski11}. Moreover, finding a deterministic near-linear algorithm for the Constellation problem is the only derandomization question which is still open from Cole and Hariharan's impactful paper~\cite{ColeH02}. 

\subsection{Our Core Results}
Our core result is that we answer our driving question affirmatively:

\begin{theorem}[Deterministic Constellation] \label{thm:constellation}
Given two sets $P, Q \subseteq [N]$ of size at most $n$, we can list all shifts $s$ satisfying $P + s \subseteq Q$ in deterministic time $\Order(n \polylog(N))$.
\end{theorem}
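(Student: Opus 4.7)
My plan is to phrase Constellation algebraically as a heavy-coefficient extraction problem. Let $P(x) = \sum_{p \in P} x^p$ and $Q(x) = \sum_{q \in Q} x^q$, and consider the cross-correlation $C(x) = P(x)\, Q(x^{-1}) = \sum_s c_s x^s$, where $c_s = |P \cap (Q-s)|$ counts the pairs $(p,q) \in P \times Q$ with $q-p = s$. A shift $s$ is valid iff $c_s = |P|$. Since every valid shift is determined by the image of any single point of $P$ under the shift, there are at most $n$ valid shifts, so the output has near-linear size and I only need to extract the heavy coefficients of $C$, not all of $C$.

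The centerpiece will be the use of the Baur-Strassen theorem to read off these heavy coefficients. The plan is to design a multivariate polynomial $\Phi(\vec z)$ with one variable $z_s$ per candidate shift such that (i) a suitable first-order partial $\partial \Phi/\partial z_s$, evaluated at a well-chosen point, is nonzero exactly when $s$ is a valid shift, and (ii) $\Phi$ admits an arithmetic circuit of size $\widetilde\Order(n)$. A first attempt is $\Phi(\vec z) = \prod_{p \in P} \sum_{q \in Q} z_{q-p}$, whose pure monomials $z_s^{|P|}$ are in bijection with valid shifts (they arise precisely from the constant tuple $q_p = p+s$). Applied to such a circuit, the Baur-Strassen theorem then produces all partials simultaneously in time $\widetilde\Order(n)$, certifying every candidate shift in one shot.

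To avoid a factor of $N$ coming from having one variable per shift, I would plug this algebraic filter into a deterministic partitioning scheme that splits the instance into $\Order(\polylog N)$ sub-instances over a universe of size $\Order(n)$. Since the algebraic filter only has to sieve down to the at most $n$ valid shifts, rather than avoid all additive collisions inside a full convolution, the required partitioning can be much weaker than the pseudo-perfect additive hash families of Chan-Lewenstein (whose construction costs $n^{o(1)}$) and should be realisable with polylogarithmic overhead. Candidate shifts obtained in different sub-instances are then combined and verified in near-linear time by direct checking.

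The hardest part will be realising $\Phi$ so that requirements (i) and (ii) are met simultaneously. The naive product-of-linear-forms representation already has circuit size $|P| \cdot |Q| = \Theta(n^2)$ and must be compressed to $\widetilde\Order(n)$ by exploiting the shift structure $L_p(\vec z)$ is a translation of $L_0(\vec z)$; this is the place where the combinatorial part of the argument must meet the algebraic part. At the same time, $\partial \Phi/\partial z_s$ must distinguish genuine valid shifts from spurious ``mixed'' tuples $(q_p)$ in which only some of the $q_p - p$ equal $s$, since such tuples also contribute to the same partial. Resolving this will likely require an extra algebraic separation---multilinearity detection in characteristic two, higher-order partials in several $z_s$ simultaneously, or a Vandermonde-style auxiliary constraint that forces the differences $q_p-p$ to be constant across $p$---and is where the real conceptual novelty of the proof must reside.
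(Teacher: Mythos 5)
There is a genuine gap here---in fact two, and together they leave the main difficulties of the theorem unresolved. You have the correct high-level intuition (Baur--Strassen plus an algebraic filter for heavy coefficients, preceded by a universe reduction), but the concrete objects you propose do not realise it, and you acknowledge as much.

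First, the polynomial. The paper does \emph{not} use a degree-$|P|$ product of linear forms $\Phi(\vec z)=\prod_{p\in P}\sum_{q\in Q}z_{q-p}$. It uses a single \emph{trilinear} form: starting from a sparse-convolution circuit (via a transposed-Vandermonde FFT over a small superset $T\supseteq C-B$) computing $w_a=\sum_{b+c=a,\,b\in B,c\in C}y_b z_c$, it adds one layer to form $w=\sum_a x_a w_a=\sum_{a+b=c}x_a y_b z_c$, and then applies Baur--Strassen once. The point is that here $\partial w/\partial z_c=\sum_{a+b=c}x_a y_b$ \emph{is} exactly the partial convolution value at $c$, with no spurious cross terms. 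In your product form, $\partial\Phi/\partial z_s$ picks up every monomial containing $z_s$, not just the pure $z_s^{|P|}$, so a first-order partial cannot distinguish a valid shift from a mixed tuple $(q_p)$; you flag this yourself. Likewise, your product has $|P|$ factors each of width $|Q|$, so the naive circuit is $\Theta(n^2)$, and you do not say how to compress it; by contrast the paper's circuit has size $\widetilde\Order(|T|)=\widetilde\Order(n)$ essentially for free, because it is built around an FFT on a set $T$ of size $\Order(n)$.

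Second, the universe reduction. You propose ``a deterministic partitioning scheme'' into $\Order(\polylog N)$ sub-instances over universe $\Order(n)$ and argue it should be weaker than Chan--Lewenstein's hash families, but this is precisely the step that all previous deterministic attempts stumbled on, and you give no construction. The paper avoids hashing entirely: it recurses on the \emph{same} instance taken modulo $N/2$, obtains a candidate set $S'$ of shifts, and sets $C=S'+\{0,N'\}$. Because every candidate comes from a feasible shift over the smaller universe, one gets $C-B=S'+P'+\{0,N',2N'\}\subseteq Q'+\{0,N',2N'\}$, hence $|C-B|=\Order(n)$ automatically. This structured bound on $|C-B|$ is what makes a partial-convolution oracle (which is 3SUM-hard in general) affordable, and it is the ingredient your proposal is missing. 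Without something playing this role, your plan reduces either to additive hashing (the known hard-to-derandomize route) or to an unexplained miracle in circuit compression.
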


Recall that this improves the subpolynomial overhead \smash{$2^{\Order(\sqrt{\log n \log\log N})}$} from the previously best deterministic algorithm to a polylogarithmic overhead. (The fastest known randomized algorithms are still faster in terms of the number of log-factors.) As an additional feature, our algorithm (as well as its predecessors) not only decides the existence of a feasible shift $s$, but also \emph{lists} all feasible shifts.

In contrast to the previous works~\cite{CardozeS98,ColeH02,AmirKP07,ChanL15} which basically follow the same baseline idea, our algorithm is based on completely new ideas, involving a surprising blend of combinatorial and \emph{algebraic} techniques. Moreover, up to technical complications in the underlying algebra machinery, in its core our algorithm is arguably simple. In \cref{sec:introduction:sec:technical} we will sketch our algorithm in some detail, highlighting in particular our surprising usage of a theorem from algebraic complexity theory: the Baur-Strassen theorem.

\paragraph{With \boldmath$k$ Mismatches}
Before, we discuss some interesting generalizations of \cref{thm:constellation}. Note that in the Constellation problem we are very strict in what we consider a feasible shift $s$---namely, $P + s$ must be an exact subset of $Q$. However, there are many applications for which we consider $s$ a feasible shift under more relaxed assumptions. For instance, suppose that some stars in the pattern constellation are incorrect\footnote{Perhaps they were mistaken for airplanes.}, and we would only like to test whether the constellation appears in the night sky \emph{up to $k$ mismatches.} Formally, the goal is to report shifts $s$ with $|(P + s) \setminus Q| \leq k$. This problem was also studied by Cardoze and Schulman~\cite{CardozeS98}. They obtained a Monte Carlo randomized algorithm in time\footnote{To avoid verbose notation, we implicitly understand that $k$ means $\max\set{1, k}$ in the $\Order$-notation.} \smash{$\widetilde\Order(n k)$}, subject to the technical assumption that $k$ is smaller than a constant fraction of $|P|$, $k \leq (1 - \Omega(1)) |P|$. (This assumption implies in particular that the number of solutions is at most $\Order(n)$.)

In stark contrast to the Constellation problem without mismatches, for the mismatch version there are no non-trivial Las Vegas algorithms let alone deterministic algorithms known, to the best of our knowledge. In particular, the best known deterministic algorithm is the $\Order(n^2 k)$-time naive solution.\footnote{This $\Order(n^2 k)$-time algorithm works as follows: Fix $k + 1$ arbitrary points in $P$, called the \emph{anchor} points. Under each feasible shift, at least one of the anchor points matches with $Q$. Therefore, we can enumerate the $n(k + 1)$ anchor-to-$Q$ alignments, and test each alignment in time $\Order(n)$.} Our contribution is that we give a deterministic algorithm that is faster by a linear factor, and matches the randomized time complexity of~\cite{CardozeS98} at the cost of a larger polylogarithmic term:

\begin{theorem}[Deterministic Constellation with Mismatches] \label{thm:constellation-mismatches}
Given sets $P, Q \subseteq [N]$ of size at~most~$n$ and $0 \leq k \leq (1 - \Omega(1))|P|$, we can list all shifts $s$ with $|(P + s) \setminus Q| \leq k$ in deterministic time $\Order(n k \polylog(N))$.
\end{theorem}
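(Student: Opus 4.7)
The plan is to reduce $k$-mismatch Constellation to $k+1$ invocations of exact Constellation (\Cref{thm:constellation}) together with a single batched verification of the resulting candidate shifts.

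\textbf{Candidate generation.} I would partition $P$ arbitrarily into $k+1$ disjoint blocks $P_0,\ldots,P_k$ of size at most $\lceil |P|/(k+1)\rceil$ each. If a shift $s$ satisfies $|(P+s)\setminus Q|\le k$, then by pigeonhole at least one block must be entirely mapped into $Q$ by $s$, i.e., $P_i+s\subseteq Q$. Hence every feasible shift lies in
\[
T := \bigcup_{i=0}^{k} S_i, \qquad S_i := \{\,s : P_i+s\subseteq Q\,\}.
\]
Applying \Cref{thm:constellation} to each pair $(P_i,Q)$ computes $S_i$ deterministically in time $\Order(n\polylog N)$; summing over $i$, this phase costs $\Order(nk\polylog N)$. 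Because each $|S_i|\le|Q|$ (fixing any point $p\in P_i$, every $s\in S_i$ is uniquely determined by $p+s\in Q$), we also have $|T|\le (k+1)|Q| = \Order(nk)$.

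\textbf{Batched verification.} It remains to decide, for each $s\in T$, whether the match count $c(s):=|(P+s)\cap Q|$ satisfies $c(s)\ge |P|-k$. A naive per-candidate check costs $\Omega(|P|)$ and would blow the running time up to $\widetilde\Order(n^2k)$, so I need a batched evaluation: compute the numbers $\{c(s)\}_{s\in T}$ in total time $\widetilde\Order(n+|T|)=\widetilde\Order(nk)$. I would accomplish this by running the sparse-correlation subroutine underlying \Cref{thm:constellation} in an ``evaluation at prescribed positions'' mode: deterministically build a universe reduction of size $\widetilde\Order(nk)$ that separates the (explicitly known) points of $T$, evaluate the wrapped correlation $\chi_P \star \chi_Q$ on the reduced universe via FFT, and read off $c(s)$ for each $s\in T$. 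The hypothesis $k\le(1-\Omega(1))|P|$ enters here via $|P|-k=\Omega(|P|)$, which caps the number of feasible shifts at $|P||Q|/(|P|-k)=\Order(n)$, so listing the output is never the bottleneck.

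\textbf{Main obstacle.} Candidate generation via the block partition is routine; the real difficulty is making the batched verification both deterministic and within the $\widetilde\Order(nk)$ budget. Extracting a batched evaluation-mode subroutine from the Baur--Strassen-based machinery behind \Cref{thm:constellation}---as opposed to using that theorem as a black box, which only decides $c(s)=|P|$---is where I expect to spend most of the effort, and it is also the step where the deterministic universe reduction that separates the candidates in $T$ has to be carried out.
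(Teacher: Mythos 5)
Your candidate-generation step is correct: the anchor-partition pigeonhole argument is exactly the $\Order(n^2 k)$ baseline that the paper mentions in a footnote, and the bound $|T| \leq (k+1)|Q| = \Order(nk)$ is right. The gap is in the batched verification, and it is not merely technical; both variants you sketch fail. First, hashing into a universe of size $\widetilde\Order(nk)$ that ``separates the points of $T$'' does not remove wrap-around errors from the wrapped correlation, because those errors come from collisions among the \emph{difference} values $q - p$ ($p \in P$, $q \in Q$), not from collisions among the candidate shifts. To read off $c(s)$ correctly you would need a hash that is isolating on $Q - P$, a set of size up to $\Omega(n^2)$---and deterministically constructing such hashes is exactly the derandomization problem the paper is circumventing. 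Second, plugging $T$ directly into the paper's Partial Convolution engine (\Cref{thm:partial-conv}) with $A = Q$, $B = -P$, $C = T$ requires $|C - B| = |T + P| = \widetilde\Order(nk)$, and this fails: take $k = 1$, $P_0 = \set{0, \dots, m-1}$, $P_1 = \set{L, 2L, \dots, mL}$ with $L \geq n$, and $Q = \set{0, \dots, n-1}$; then $S_0 = \set{0, \dots, n - m}$ and $|S_0 + P_1| = m(n - m + 1) = \Theta(n^2)$ when $m = \Theta(n)$, an order of magnitude above the $\widetilde\Order(n)$ budget.

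The paper sidesteps precisely this obstruction by \emph{not} producing candidates as a union of per-block exact solutions. Its \Cref{lem:constellation-mismatches-to-partial-conv} generates candidates by a scaling recursion (solve the same $k$-mismatch problem over $\Int/(N/2)\Int$ with weights merged on collisions), so every candidate $s'$ already has at most $k$ (weighted) mismatches in the coarsened instance. This forces $S' + P'$ to land in $Q'$ up to at most $|S'| \cdot k$ leakage, and combined with the $\Order(n)$ bound on the number of near-feasible shifts (\Cref{lem:constellation-mismatches-number}) one gets $|C - B| = \Order(nk)$, which is what makes Partial Convolution affordable. Your set $T$ is polluted with shifts that align a single small block while misplacing everything else arbitrarily, so the crucial sumset bound collapses---the doubling recursion, not the block partition, is the load-bearing idea.
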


This theorem is most effective in the relevant setting where $k$ is comparably small. In particular, we can compute all shifts $s$ with at most $\polylog(N)$ mismatches in near-linear deterministic time $\Order(n \polylog(N))$. We remark that our algorithm works as well for a \emph{weighted} version, where the goal is to tolerate mismatches with total weight up to $k$. We defer the precise statement to the technical sections; see \cref{thm:constellation-weights}.

\paragraph{Realistic Point Pattern Matching}
Even taking mismatches into account, the Constellation problem treats a very restricted pattern matching setting. In reality, there are many other reasonable ways of further relaxing what we consider a feasible shift. For instance, an actual constellation of stars consists of \emph{2-dimensional} points, and it is reasonable to take \emph{rotations} into account. Moreover, in most applications the input data is not perfectly precise, so it is reasonable to consider \emph{approximate matches.}

Our algorithm naturally extends to many of these settings. We will precisely state our results in \cref{sec:introduction:sec:results:sec:point}, starting with a systematic definition of Point Pattern Matching. Interestingly, all of our results in the following sections can be proven via black-box reductions to \cref{thm:constellation} or \cref{thm:constellation-mismatches}. This approach---reducing realistic Point Pattern Matching problems to the Constellation problem---was also taken by Cardoze and Schulman~\cite{CardozeS98}, and we adopt many of their reductions without any changes. Others require (minor) modifications to avoid randomization.

\subsection{Consequences for Point Pattern Matching} \label{sec:introduction:sec:results:sec:point}
Recall that the general task in Point Pattern Matching is to test whether there is a transformation that carries a point set $P$ into or near some other point set~$Q$. This problem has been studied in several variations; here we consider the following four design choices.
\begin{itemize}
    \item\emph{Dimension $d$:} Throughout, we assume that $P, Q$ contain $d$-dimensional points. Most works study Point Pattern Matching for constant dimension $d = 1, 2, 3$.
    \item\emph{Transformation group:} Here we consider either \emph{translations} or \emph{rigid motions} (i.e., translations plus rotations). These are the most commonly studied transformations groups, but there are also works devoted to even richer transformations, involving e.g.\ scaling.
    \item\emph{Exact versus approximate:} In many applications we would like to classify transformations of~$P$ as matches even if they slightly differ from $Q$. We will focus on the following two settings:
    \begin{itemize}
        \item \emph{Exact:} Is there a transformation of $P$ that is a subset of $Q$? This is the most fundamental similarity measure and has been extensively studied before~\cite{AltMWW88,RezendeL95,Boxer96,IraniR96,AkutsuTT98,CardozeS98,IndykMV99,ColeH02}.
        \item \emph{Approximate under the Hausdorff distance:} For $\epsilon, \delta > 0$, distinguish whether there is a transformation of $P$ that has directed Hausdorff distance at most $\delta$ to $Q$, versus any transformation of $P$ has directed Hausdorff distance more than $(1 + \epsilon) \delta$. This problem was studied e.g.\ in \cite{CardozeS98,IndykMV99}.
    \end{itemize}
    There are many other reasonable notions of exact and approximate matches. For instance, the Point Pattern Matching with exact thresholds under the Hausdorff distance (i.e., with $\delta > 0$ and $\epsilon = 0$) was studied in \cite{HuttenlocherKK92,Rucklidge93,ChewGHKKK97}, leading to algorithms with a significant polynomial overhead. The main advantage of the approximate version is that it allows for a reduction to the exact case, where the overhead depends only on the secondary parameters $\delta$ and $\epsilon$. Another popular similarity measure is the \emph{bottleneck distance} (which requires that each point in $P$ is matched to a unique point in $Q$)~\cite{AltMWW88,IndykMV99}.

    \item \emph{Mismatches:} While the Hausdorff distance nicely captures that two sets are close up to small perturbations, it does not account for rare outliers. The typical model, as mentioned before, is to consider up to $k$ \emph{mismatches}; i.e., we only require that $P$ matches $Q$ after deleting the~$k$ worst points from $P$. This problem was studied in~\cite{AkutsuTT98,CardozeS98,IndykMV99}, often under the alias \emph{Largest Common Point Set}.
\end{itemize}
A final less important distinction is whether we consider integer or real points. Following~\cite{CardozeS98} we consider integer points for all exact problems, and bounded-precision reals for all approximate problems.
We defer the formal definitions of these problems to \cref{sec:corollaries}.

The entire landscape of algorithms for these Point Pattern Matching problems is complex and too large to be summarized here. Instead, we will closely follow the presentation of Cardoze and Schulman's paper~\cite{CardozeS98} and consider five specific variants. In short, our contribution is that we replicate \emph{all} results from~\cite{CardozeS98} with deterministic algorithms, at the cost of worsening the running times by $\poly(d \log N)$.

\paragraph{Without Mismatches}
We first consider the case without mismatches. Cardoze and Schulman established that Point Pattern Matching with translations is in near-linear randomized time (for both exact and approximate matches). For rigid motions their running time is \smash{$\widetilde\Order(n^d)$} (for approximate matches, the only reasonable variant for rotations) which is faster than the brute-force~$\Order(n^{d+1})$ time by a linear factor. As for the Constellation problem, there are Las Vegas algorithms with overhead $\polylog(N)$~\cite{ColeH02} and deterministic algorithms with overhead \smash{$2^{\Order(\sqrt{\log n \log\log N})}$}~\cite{AmirKP07,ChanL15}. Our contribution is again that we reduce the overhead from subpolynomial~\smash{$2^{\Order(\sqrt{\log n \log\log N})}$} to polylogarithmic:

\begin{corollary}[Exact Point Pattern Matching with Translations] \label{cor:ppm-translations-exact}
The exact Point Pattern Matching problem with translations is in deterministic time $\Order(n \poly(d \log N))$.
\end{corollary}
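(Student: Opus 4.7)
The plan is to reduce this $d$-dimensional problem to the $1$-dimensional Constellation problem via a base-$M$ linearization and invoke \cref{thm:constellation} as a black box. After normalization we may assume $P, Q \subseteq \{0, 1, \ldots, N-1\}^d$, so every feasible translation $t$ lies in $\{-(N-1), \ldots, N-1\}^d$. Set $M := 2N$ and define $\phi \colon \mathbb{Z}^d \to \mathbb{Z}$ by $\phi(x_1, \ldots, x_d) := \sum_{i=1}^{d} x_i M^{i-1}$. First I would compute $\phi(P), \phi(Q) \subseteq [N']$ with $N' := M^d = (2N)^d$, then run the deterministic Constellation algorithm on these two sets to list every shift $s$ with $\phi(P) + s \subseteq \phi(Q)$, and finally for each such $s$ decode the balanced base-$M$ digits of $s$ into a translation $t \in \{-(N-1), \ldots, N-1\}^d$.

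Correctness hinges on two elementary properties of $\phi$. First, $\phi$ is $\mathbb{Z}$-linear, so $\phi(P + t) = \phi(P) + \phi(t)$ as sets. Second, because $M > 2(N - 1)$, the map $\phi$ is injective on the cube $\{-(N-1), \ldots, N-1\}^d$ by uniqueness of the balanced base-$M$ expansion. Combining these facts induces a bijection between the Constellation-feasible shifts $s$ and the desired translations $t$: if $\phi(P) + s \subseteq \phi(Q)$, then fixing any $p \in P$ forces $s = \phi(q_p) - \phi(p) = \phi(q_p - p)$ for some $q_p \in Q$; injectivity makes $q_p - p$ uniquely determined by $s$, hence the \emph{same} value $t$ for every $p$; and by linearity $P + t \subseteq Q$ follows at once. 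Thus no spurious shifts arise and no separate verification step is needed.

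The step I anticipate to be the main obstacle is precisely this spurious-shift control, which is what dictates the choice $M = 2N$: a naively smaller $M$ would allow wrap-around in the encoding so that distinct $p \in P$ could be matched by distinct differences $q_p - p$, forcing an expensive $\Order(n)$-time verification per candidate and blowing up the running time. With $M = 2N$ the Constellation call costs $\Order(n \polylog((2N)^d)) = \Order(n \poly(d \log N))$, and the balanced base-$M$ decoding adds only $\Order(n d)$, giving the claimed bound $\Order(n \poly(d \log N))$.
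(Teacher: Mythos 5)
Your proposal is correct and follows essentially the same route as the paper: the paper likewise reduces $d$-dimensional exact matching to one-dimensional Constellation by encoding each point as a single integer through a base-$M$ expansion (Lemma~\ref{lem:ppm-exact-mismatches-to-constallation}, using base $4N$ after first shifting both sets into the nonnegative orthant; \cref{cor:ppm-translations-exact} is then obtained as the $k=0$ case of \cref{cor:ppm-mismatches-exact}). Your balanced base-$2N$ encoding is a minor variant that accommodates negative translation coordinates without the preliminary shift, and your observation that every Constellation-feasible shift automatically decodes to a genuine translation (by linearity of $\phi$ and its injectivity on the difference cube) cleanly replaces the paper's explicit ``admissibility'' filter.
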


\begin{corollary}[Approximate Point Pattern Matching with Translations] \label{cor:ppm-translations-approx}
The approximate Point Pattern Matching problem with translations is in deterministic time $\Order(n \epsilon^{-\Order(d)} \polylog(N \delta^{-1}))$.
\end{corollary}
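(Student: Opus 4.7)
The plan is to reduce this approximate decision problem to an (exact) instance of the Constellation problem via a two-sided discretization of $P$ and $Q$, and then apply \cref{thm:constellation} after collapsing dimensions. Fix a grid spacing $s = \Theta(\epsilon \delta / \sqrt{d})$ and work with the lattice $G = (s\mathbb{Z})^d$. For each point $p \in P$, let $\tilde{p} \in G$ be its nearest lattice point, and set $\tilde{P} = \{\tilde{p} : p \in P\}$. For each point $q \in Q$, let $\tilde{Q}_q \subseteq G$ be the set of lattice points within Euclidean distance $(1 + \epsilon/4) \delta$ from $q$, and set $\tilde{Q} = \bigcup_{q \in Q} \tilde{Q}_q$. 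A volume estimate for Euclidean balls gives $|\tilde{Q}_q| \leq \epsilon^{-\Order(d)}$, so $|\tilde{P}| \leq n$ and $|\tilde{Q}| \leq n \cdot \epsilon^{-\Order(d)}$.

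Next I claim it is enough to decide whether there exists a lattice translation $t \in G$ with $\tilde{P} + t \subseteq \tilde{Q}$. For the \emph{completeness} direction, suppose some real translation $t^\star$ achieves directed Hausdorff distance at most $\delta$ from $P + t^\star$ to $Q$; take $t \in G$ nearest to $t^\star$, so that the combined rounding error between $\tilde{p} + t$ and $p + t^\star$ is at most $s\sqrt{d}$ for every $p$. By the triangle inequality, $\tilde{p} + t$ lies within $(1 + \epsilon/4) \delta$ of some $q \in Q$ (for a sufficiently small constant in $s$), hence $\tilde{p} + t \in \tilde{Q}_q \subseteq \tilde{Q}$. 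For the \emph{soundness} direction, any lattice $t$ with $\tilde{P} + t \subseteq \tilde{Q}$ implies that each $p + t$ is within $(1 + \epsilon/4)\delta + s\sqrt{d} \leq (1 + \epsilon)\delta$ of some $q \in Q$, so $t$ witnesses directed Hausdorff distance at most $(1+\epsilon)\delta$ and cannot occur on a no-instance.

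It remains to transport this $d$-dimensional Constellation instance into the one-dimensional setting of \cref{thm:constellation}. For this I flatten lattice coordinates via $(x_1, \dots, x_d) \mapsto \sum_{i=1}^d x_i M^{i-1}$ for some $M$ polynomial in the coordinate range, chosen large enough that the flattened coordinates of $\tilde{P} \cup \tilde{Q}$ are pairwise distinct and every relevant lattice translation acts without wrap-around on the window we care about; translations of $G$ then correspond bijectively to 1D integer shifts. The resulting 1D Constellation instance has total size $n \cdot \epsilon^{-\Order(d)}$ over a universe of size $M^d = \poly(N/\delta, 1/\epsilon)^d$, so \cref{thm:constellation} solves it deterministically in time $\Order(n \cdot \epsilon^{-\Order(d)} \cdot \polylog(N / \delta))$, absorbing $\poly(d)$ and $\log(1/\epsilon)$ overhead into the $\epsilon^{-\Order(d)}$ factor and the polylog. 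The only nontrivial step is the choice of constants in the discretization so that both completeness and soundness fit inside the promised $\epsilon$-gap, and this is a straightforward triangle-inequality calculation.
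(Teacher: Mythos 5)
Your proposal takes essentially the same route as the paper: discretize $P$ to a lattice with spacing $\Theta(\epsilon\delta/\sqrt d)$, inflate each point of $Q$ to a ball of $\epsilon^{-\Order(d)}$ lattice points, then flatten $d$ dimensions into one via a base-$M$ positional encoding and invoke \cref{thm:constellation}. The paper obtains this corollary by plugging $k = 0$ into its mismatch version, which is built from exactly these two reductions (rounding $P$, inflating $Q$, then base-$(4N)$ dimension collapse); the only structural difference is that the paper tracks rounding collisions in $P$ via integer weights so the same lemma can also handle $k > 0$, whereas you handle the $k = 0$ case directly, where collisions merely shrink $\widetilde P$ and are harmless for exact containment. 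Your constants and the running-time accounting (absorbing $\poly(d)$ and $\log(1/\epsilon)$ into $\epsilon^{-\Order(d)}$ and $\polylog(N\delta^{-1})$) match the paper's.
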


\begin{corollary}[Approximate Point Pattern Matching with Rigid Motions] \label{cor:ppm-rigid-approx}
The approximate Point Pattern Matching problem with rigid motions is in deterministic time $\Order(n^d \epsilon^{-\Order(d)} \polylog(N \delta^{-1}))$.
\end{corollary}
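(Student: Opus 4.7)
Plan: The approach is to reduce approximate rigid-motion matching to the translation version by enumerating a small family of candidate rotations and invoking \cref{cor:ppm-translations-approx} for each; this is essentially the reduction of Cardoze and Schulman~\cite{CardozeS98}, which is black-box and therefore carries over verbatim once the translation subroutine is made deterministic. Concretely, I would first fix $d-1$ ``anchor'' points $p_1, \dots, p_{d-1} \in P$ chosen to span an affine subspace of diameter within a constant factor of the diameter of $P$ (if no such anchors exist, $P$ already lives in a lower-dimensional affine subspace and the problem reduces to a smaller value of $d$). Under any feasible rigid motion, each anchor $p_i$ lies within distance $\delta$ of some $q_i \in Q$, so enumerating image tuples $(q_1, \dots, q_{d-1}) \in Q^{d-1}$ gives $\Order(n^{d-1})$ coarse rotation candidates. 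I would then refine each by an $\epsilon$-net over nearby elements of $\mathrm{SO}(d)$, of size $\epsilon^{-\Order(d)}$, for a total of $\Order(n^{d-1} \epsilon^{-\Order(d)})$ refined rotations.

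For each refined rotation $R$, I would apply $R$ to $P$ and invoke \cref{cor:ppm-translations-approx} on $(R(P), Q)$ with tolerance $\delta$ and slack $\epsilon/C$ for a suitable constant $C$. Each such call runs in deterministic time $\Order(n \epsilon^{-\Order(d)} \polylog(N \delta^{-1}))$, and summing over all refined rotations yields the claimed bound $\Order(n^d \epsilon^{-\Order(d)} \polylog(N \delta^{-1}))$. Correctness follows because any rigid motion with directed Hausdorff distance at most $\delta$ must send the anchors within distance $\delta$ of $Q$, so after $\epsilon$-net refinement some enumerated rotation approximates the true one; by the anchor spread, this translates into a pointwise error of at most $\epsilon \delta / C$ on all of $P$, and the subsequent translation call with slack $\epsilon/C$ catches the true match. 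In the negative direction, the total distortion from rounding and refinement is at most $(1 + \Order(\epsilon/C))\delta \le (1 + \epsilon)\delta$ for $C$ chosen large enough.

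The main obstacle in carrying out this plan carefully is the calibration between the coarseness of the anchor-based rotation enumeration, the fineness of the $\epsilon$-net refinement, and the slack permitted inside the translation routine; in particular, one has to track how a small rotation error propagates into pointwise error as a function of the distance to the anchors, which is exactly what motivates choosing anchors of maximal spread. Degeneracies and numerical-precision issues are handled by standard preprocessing---rounding coordinates to a grid of resolution $\Theta(\epsilon \delta)$ relative to the diameter of $P$, and recursing into a lower ambient dimension whenever $P$ is found to be affinely degenerate.
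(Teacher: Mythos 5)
Your plan is correct and arrives at the right bound, but it re-derives from scratch a reduction that the paper simply imports as a black box. The paper's proof is a one-liner: plug the new $\Order(n\polylog N)$ deterministic Constellation algorithm into the Cardoze--Schulman reduction from approximate rigid-motion matching to Constellation (\cref{lem:ppm-approx-rigid-to-constallation}, cited verbatim from \cite[Section~6]{CardozeS98}), which the paper observes is already deterministic and hence needs no modification. Your reconstruction---fixing $d-1$ anchors, enumerating $\Order(n^{d-1})$ anchor-to-$Q$ assignments to get coarse rotations, refining each by an $\epsilon^{-\Order(d)}$-size net in $\mathrm{SO}(d)$, and solving an approximate translation matching problem per refined rotation---is exactly what sits inside that black box, and the arithmetic $\Order(n^{d-1}\cdot\epsilon^{-\Order(d)}\cdot n\polylog(N\delta^{-1}))$ matches the paper's bound. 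The one structural difference is that you route each refined rotation through \cref{cor:ppm-translations-approx}, whereas the paper's lemma calls Constellation directly (absorbing the approximate-to-exact rounding and the $\epsilon^{-\Order(d)}$ blowup inside the Cardoze--Schulman reduction); these are equivalent since \cref{cor:ppm-translations-approx} itself reduces to Constellation with that same blowup. The calibration issues you flag (anchor spread vs.\ net fineness, error propagation, degenerate $P$) are real but are exactly the details of \cite{CardozeS98} that the paper deliberately declines to reprove.
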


\paragraph{With \boldmath$k$ Mismatches}
Consider again the more general Point Pattern Matching problems with up to $k$ mismatches. For translations, the known Monte Carlo algorithms run in time $\widetilde\Order(nk)$~\cite{CardozeS98}. We remark that, as for the Constellation problem, for Point Pattern Matching with mismatches there are no known Las Vegas algorithms or derandomizations, and so the best-known deterministic approach is the naive $\Order(n^2 k)$-time algorithm. Our deterministic algorithms match the randomized bounds up to logarithmic factors, and therefore improve what is known by nearly a linear factor:

\begin{corollary}[Exact Point Pattern Matching with Mismatches] \label{cor:ppm-mismatches-exact}
The exact Point Pattern Matching problem with translations and up to $0 \leq k \leq (1 - \Omega(1)) |P|$ mismatches is in deterministic time $\Order(n k \poly(d \log N))$.
\end{corollary}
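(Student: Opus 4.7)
The plan is to reduce the $d$-dimensional problem to a single one-dimensional instance of Constellation with mismatches and then invoke \cref{thm:constellation-mismatches} as a black box. After shifting if necessary, I may assume that $P, Q \subseteq \{0, 1, \dots, N\}^d$. I pick $M := 2N + 1$ and define the additive embedding $\phi : \mathbb{Z}^d \to \mathbb{Z}$ by $\phi(x_1, \dots, x_d) = \sum_{i=1}^d x_i \cdot M^{i-1}$. The map $\phi$ is a group homomorphism, so translations in $\mathbb{Z}^d$ correspond bijectively to shifts of the embedded set, and $\phi$ is injective on $\{-N, \dots, N\}^d$ because its values are precisely the signed base-$M$ digit representations of coordinate vectors whose components strictly fit inside one digit.

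Next I verify correctness. Any translation $t \in \mathbb{Z}^d$ that produces at least one matching pair $p + t = q$ with $p \in P$, $q \in Q$ must have coordinates in $\{-N, \dots, N\}$, so $\phi(t)$ is well defined. For such $t$ and any $p \in P$, the vector $p + t$ has coordinates in $\{-N, \dots, 2N\} \subseteq \{-N, \dots, 2N\}^d$, and by injectivity of $\phi$ on that range, $\phi(p) + \phi(t) = \phi(q)$ for some $q \in Q$ if and only if $p + t = q$. Consequently, the mismatch counts agree: $|(P + t) \setminus Q| = |(\phi(P) + \phi(t)) \setminus \phi(Q)|$. Conversely, any 1D shift $s$ returned by the subroutine must yield at least $|P| - k = \Omega(|P|) > 0$ matches (using the hypothesis $k \leq (1 - \Omega(1))|P|$), and any single such match $\phi(p) + s = \phi(q)$ forces $s = \phi(q - p) = \phi(t)$ for the unique $t = q - p \in \{-N, \dots, N\}^d$. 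Thus every returned $s$ lies in the image of $\phi$ and can be decoded in time $\Order(d)$ by base-$M$ digit extraction.

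For the running time, $\phi(P)$ and $\phi(Q)$ live in a universe of size $M^d = \Order(N^d)$, and invoking \cref{thm:constellation-mismatches} yields running time $\Order(nk \polylog(M^d)) = \Order(nk \cdot (d \log N)^{\Order(1)}) = \Order(nk \poly(d \log N))$. The hypothesis $k \leq (1-\Omega(1))|P|$ also bounds the output list length by $\Order(n)$, so the $\Order(d)$ decoding per shift contributes only an additive $\Order(nd)$ term, absorbed by the stated bound.

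The only delicate point is the choice of $M$: it must be large enough that $\phi$ is injective on the \emph{sum} range $\{-N, \dots, 2N\}^d$ (so that matches in 1D really mirror matches in $d$D) and so that spurious 1D shifts outside the image of $\phi$ cannot sneak past the mismatch threshold. The first is ensured by $M \geq 3N + 1$ (I take $M = 2N+1$ since the relevant digits stay in $\{-N, \dots, N\}$ after cancellation), and the second is ensured precisely by the assumption $k \leq (1-\Omega(1))|P|$, which guarantees at least one matched pair pinning $s$ into $\phi(\mathbb{Z}^d)$. Everything else is routine bookkeeping.
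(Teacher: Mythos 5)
Your proof is correct and follows essentially the same route as the paper's \cref{lem:ppm-exact-mismatches-to-constallation}: encode $d$-dimensional integer points into one-dimensional integers via a base-$M$ embedding $\phi$, then invoke Constellation with mismatches on the encoded sets (the only cosmetic differences are your choice $M = 2N+1$ versus the paper's $4N$, and your argument that non-image shifts cannot be returned since $k < |P|$ forces at least one matched pair, versus the paper's explicit admissibility filter). One small imprecision worth fixing: $\phi$ is \emph{not} injective on $\{-N,\dots,2N\}^d$ when $M = 2N+1$ (differences of such points reach magnitude $3N > M - 1$); the argument nonetheless goes through because the actually relevant differences $p + t - q$ have coordinates in $\{-2N,\dots,2N\}$, which is strictly within one signed base-$M$ digit.
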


\begin{corollary}[Approximate Point Pattern Matching with Mismatches] \label{cor:ppm-mismatches-approx}
The approximate Point Pattern Matching problem with translations and up to $0 \leq k \leq (1 - \Omega(1)) |P|$ mismatches is in deterministic time $\Order(n k \epsilon^{-\Order(d)} \polylog(N \delta^{-1}))$.    
\end{corollary}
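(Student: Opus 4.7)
The approach is a direct black-box reduction to \cref{thm:constellation-mismatches}, following the discretize-and-thicken strategy pioneered by Cardoze and Schulman~\cite{CardozeS98}. Since the no-mismatch counterpart \cref{cor:ppm-translations-approx} is obtained by reducing approximate Point Pattern Matching to exact Constellation, the analogous reduction---now invoking the mismatch version of Constellation---should yield the stated bound with minimal changes.

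Concretely, first I would impose an axis-aligned grid of side length $\Theta(\epsilon \delta / \sqrt{d})$ on $d$-space, round every point of $P$ to its enclosing grid cell to obtain a discrete pattern $P'$ of integer coordinates with $|P'| \le |P|$, and \emph{thicken} $Q$ by replacing each $q \in Q$ with the $\epsilon^{-\Order(d)}$ grid cells whose centers lie within distance about $(1 + \epsilon/2)\delta$ of $q$, obtaining a discrete set $Q'$ of size $\Order(|Q| \cdot \epsilon^{-\Order(d)})$. Restricting attention to grid-aligned integer translations, the task becomes: list all shifts $s$ with $|(P' + s) \setminus Q'| \le k$. To apply \cref{thm:constellation-mismatches}, which operates over a one-dimensional integer universe, I then embed the $d$-dimensional integer grid into $[N^{\Order(d)}]$ via a sufficiently stretched linear map; this embedding preserves set-inclusion and mismatch counts under integer translations.

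The correctness argument rests on the standard grid-approximation lemma from~\cite{CardozeS98}: if some translation realizes directed Hausdorff distance at most $\delta$ with at most $k$ outliers, then snapping it to the grid gives an integer shift $s$ with $|(P' + s) \setminus Q'| \le k$; conversely, any discrete shift satisfying this bound lifts to a translation of directed Hausdorff distance at most $(1+\epsilon)\delta$ with at most $k$ outliers. The one-to-one correspondence between points of $P$ and $P'$ ensures that the mismatch count is preserved exactly, so the precondition $k \le (1 - \Omega(1))|P'|$ is inherited from the input.

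The running time follows by applying \cref{thm:constellation-mismatches} to the reduced instance, of size $n' = |P'| + |Q'| = \Order(n \epsilon^{-\Order(d)})$ over a linearized universe of size $N' = (N/(\epsilon \delta))^{\Order(d)}$, yielding $\Order(n' k \polylog(N')) = \Order(n k \epsilon^{-\Order(d)} \polylog(N \delta^{-1}))$. The main subtlety---and the only substantive difference from the no-mismatch reduction---is the verification that discretize-and-thicken translates Hausdorff-distance-plus-outliers in the continuous world faithfully into set-inclusion-plus-mismatches in the discrete world while respecting the mismatch budget. Since~\cite{CardozeS98} already handled the analogous randomized reduction, I expect this to transfer essentially unchanged, with the derandomization concentrated entirely in the invocation of \cref{thm:constellation-mismatches}.
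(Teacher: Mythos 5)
Your overall strategy matches the paper's---discretize $P$ to a grid, thicken $Q$ by $\epsilon^{-\Order(d)}$ cells, linearize the $d$-dimensional integers, and call a Constellation-with-mismatches oracle---but there is a genuine gap in the mismatch accounting. You write that $|P'| \le |P|$ and then claim a ``one-to-one correspondence between points of $P$ and $P'$,'' which is a contradiction: rounding to a grid of side $\Theta(\epsilon\delta/\sqrt d)$ can and in general does collapse several points of $P$ onto the same lattice point, since the grid spacing can far exceed the $N^{-1}$ minimum interpoint gap. When this happens, $|(P'+s)\setminus Q'|$ systematically \emph{undercounts} the true number of outliers in $P$: if two original points collide into a single mismatched lattice point, the discrete instance charges one mismatch where the continuous instance owes two. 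This breaks soundness (an accepted $s$ need not admit $P_0 \subseteq P$ with $|P_0|\ge|P|-k$) and also breaks the precondition you need for the oracle, since $k \le (1-\Omega(1))|P|$ does not imply $k \le (1-\Omega(1))|P'|$ once $|P'| < |P|$.

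The paper closes exactly this gap by invoking the \emph{weighted} mismatch variant (\cref{thm:constellation-weights} via \cref{lem:constellation-mismatches-to-partial-conv} and \cref{lem:ppm-approx-mismatches-to-constallation}), not the unweighted \cref{thm:constellation-mismatches}. Each discretized point $p'$ is assigned weight $w(p')$ equal to the number of points of $P$ that rounded to it, so $w(P') = |P|$, the weighted mismatch count $\sum_{p'+s'\notin Q'} w(p')$ equals the true outlier count over $P$, and the precondition $k \le (1-\Omega(1))\,w(P')$ is inherited verbatim. The rest of your reduction (thickening the target, linearization, the running-time accounting) is sound, so switching the oracle to the weighted version and defining these multiplicities is the missing ingredient that turns your sketch into a correct proof.
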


\subsection{Consequences for Sparse Wildcard Matching} \label{sec:introduction:sec:results:sec:string}
Another interesting consequence of our results concerns the \emph{Sparse Wildcard Matching} problem. In the \emph{Wildcard Matching} problem, the task is to check whether a pattern text $P \in (\Sigma \cup \set{*})^M$ matches a substring of some text $T \in \Sigma^N$, where the wildcard ``$*$'' is understood to match any character from the alphabet $\Sigma$. This problem has an interesting history, starting with a classical algorithm due to Fischer and Paterson~\cite{FischerP74} in time~\makebox{$\Order(N \log M \log |\Sigma|)$}, and leading to improvements in deterministic time~\makebox{$\Order(N \log M)$}~\cite{Indyk98a,Kalai02,ColeH02}. In the \emph{Sparse} Wildcard Matching problem the task is the same; however, we assume that the pattern and text are both sparse in the following sense: The text contains at most $n$ non-zero characters (for some designated character ``zero''), and the pattern contains no zero characters and at most $n$ non-wildcard characters.

This problem was introduced by Cole and Hariharan~\cite{ColeH02} as a string pattern matching problem that is \emph{equivalent} to the Constellation problem. It thus shares the same history, with an $\Order(n \log^2 N)$ Las Vegas algorithm~\cite{ColeH02}, and an \smash{$n \cdot 2^{\Order(\sqrt{\log n \log\log N})}$}-time deterministic algorithm~\cite{AmirKP07,ChanL15}. Besides, this problem was studied in parallel models of computation~\cite{HajiaghayiSSS21}. As a direct corollary of \cref{thm:constellation}, we obtain that the Sparse Wildcard Matching problem is in deterministic near-linear time.

\begin{corollary}[Sparse Wildcard Matching] \label{cor:sparse-wildcard-matching}
The Sparse Wildcard Matching problem is in deterministic time $\Order(n \polylog(N))$.
\end{corollary}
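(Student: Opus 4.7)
The plan is to reduce Sparse Wildcard Matching to the Constellation problem deterministically, with only polylogarithmic overhead, and then invoke \cref{thm:constellation}. Cole and Hariharan~\cite{ColeH02} already showed that the two problems are equivalent; the work to do here is to verify that the reduction in the direction ``Sparse Wildcard Matching to Constellation'' can be made fully deterministic and introduces only a logarithmic blowup.

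Given a text $T \in \Sigma^N$ with at most $n$ non-zero characters and a pattern $P \in (\Sigma \cup \set{*})^M$ with at most $n$ non-wildcard characters and no zeros, I first deterministically rename the alphabet: comparison-sorting the at most $2n$ distinct characters appearing in $T$ or $P$ in time~$\Order(n \log n)$ lets me assume $\Sigma \subseteq [n]$. I then set $U := 2N$ and encode each non-trivial position by $(i, c) \mapsto c \cdot U + i$, packing the character into a high-order coordinate, yielding
\[
    P'' := \set{c \cdot U + i : P[i] = c \in \Sigma}, \qquad
    T'' := \set{c \cdot U + j : T[j] = c \in \Sigma}.
\]
Both sets have size at most $n$ and lie in a universe of size $\Order(nN) = \Order(N^2)$. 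I hand them to \cref{thm:constellation}, which returns all shifts $s$ with $P'' + s \subseteq T''$; I keep those with $0 \leq s \leq N - M$ as the reported match positions of $P$ in $T$.

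Correctness rests on two simple observations. First, $P$ matches $T$ at position $s$ iff for every non-wildcard $P[i] = c$ we have $T[s+i] = c$, which in the encoding is $P'' + s \subseteq T''$ whenever $s$ is a ``pure'' position shift, i.e.\ $|s| < U$. Second, any ``spurious'' shift found by Constellation must involve a character shift and hence satisfy $|s| \geq U - N = N$, which is strictly outside the valid range $[0, N - M]$ and is thus pruned. The only subtle calibration is choosing $U$ large enough to separate the character and position coordinates, for which $U = 2N$ suffices. I do not anticipate a real obstacle: the reduction is elementary, and the heavy lifting has already been carried out in \cref{thm:constellation}. Summing the cost of renaming and encoding ($\Order(n \log n)$) with the Constellation call ($\Order(n \polylog(nN)) = \Order(n \polylog(N))$) yields the claimed deterministic near-linear bound.
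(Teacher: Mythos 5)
Your proof is correct, but it follows a genuinely different route from the paper. The paper constructs, for each non-wildcard character $\sigma$ appearing in $P$, a separate Constellation instance $(P_\sigma, Q_\sigma)$ over the original universe $[N]$, where $P_\sigma$ and $Q_\sigma$ collect the positions of $\sigma$ in $P$ and $T$ respectively; it then reports the intersection $\bigcap_\sigma S_\sigma$ of the resulting shift sets, the running time being near-linear because $\sum_\sigma |P_\sigma| \leq n$ and $\sum_\sigma |Q_\sigma| \leq n$. You instead fold the character into a high-order coordinate, producing a single Constellation instance over the blown-up universe $[\Order(nN)]$, and filter the returned shifts to $[0, N - M]$ afterwards. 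Both give $\Order(n \polylog(N))$ since $\polylog(nN) = \polylog(N)$. Your version has the minor advantage of issuing only one oracle call with no post-hoc intersection, at the cost of quadratically inflating the universe; the paper's version keeps the universe at $N$ but must argue that computing the intersection of up to $|\Sigma(P)|$ shift sets is cheap. Your correctness argument hinges on the observation that for $s \in [0, N-M]$ and $i \in [0, M)$ we have $i + s < N < U$, so the base-$U$ decomposition of $cU + (i + s)$ uniquely recovers the pair $(c, i+s)$, which rules out cross-character collisions; this is sound with $U = 2N$.
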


\subsection{Technical Highlights} \label{sec:introduction:sec:technical}
In this section we briefly sketch the technical highlights of our paper. Our approach can be neatly split into two independent steps.

\paragraph{Step 1: From Pattern Matching to Convolutions}
Almost 50 years ago, Fischer and Paterson~\cite{FischerP74} proposed a very general framework for solving (string) pattern matching problems. Their idea was to reduce pattern matching to computing a small number of \emph{convolutions} (the convolution $x \conv y$ of two vectors $x$ and $y$ is defined via $(x \conv y)[k] = \sum_{i + j = k} x[i] \cdot y[j]$), that can be implemented in near-linear time by the Fast Fourier Transform. The approach applies in various contexts involving also matching with wildcards, and has been extended to many other problems since~\cite{Abrahamson87,AmirF95,AmirL91,GalilG88,Kosaraju89,Pinter85,MuthukrishnanR95,MuthukrishnanP94,Muthukrishnan95,CardozeS98,ColeH02}.

In light of this, it seems plausible that \emph{sparse} pattern matching problems (like Constellation) admit reductions to \emph{sparse} convolution problems. Muthukrishnan~\cite{Muthukrishnan95} suggested two sparse convolution variants both of which seem plausible targets for a reduction:
\begin{itemize}
\item In the \emph{Sparse Convolution} problem the task is to compute the convolution $x \conv y$ of two (nonnegative) integer vectors $x, y \in \Int^N$ in time proportional to the number of non-zeros in~$x$,~$y$ and $x \conv y$.
\item In the \emph{Partial Convolution} problem we are additionally given a set~\makebox{$C \subseteq [N]$} and the goal is to compute the convolution restricted to the coordinates in $C$.
\end{itemize}
By now, both problems have received considerable attention in the literature with quite different outcomes: On the one hand, the Sparse Convolution problem underwent a series of algorithmic improvements~\cite{ColeH02,AmirKP07,ArnoldR15,ChanL15,Nakos20,GiorgiGC20,BringmannFN21,BringmannFN22} and is known to be in near-linear output-sensitive time~\cite{ColeH02,Nakos20}, even by a deterministic algorithm~\cite{BringmannFN22} (for nonnegative vectors). On~the~other hand, via a simple reduction, the Partial Convolution problem requires time~\makebox{$n^{2-\order(1)}$} (where~\makebox{$n = \norm{x}_0 + \norm{y}_0 + |C|$} is the input size) unless the 3SUM conjecture from fine-grained complexity theory fails; see also~\cite{GoldsteinKLP16}.

In light of this, a reasonable approach for the Constellation problem is to look for a reduction to the Sparse Convolution problem. However, despite much focus and the apparent connection that both problems share (for instance in terms of algorithmic tools like additive hashing and FFT), previous work fell short of giving the desired reduction. 

Our first contribution is that we establish the missing reduction, but with an unexpected twist: We prove that the Constellation problem reduces to a logarithmic number of instances of the \emph{Partial} Convolution problem on \emph{structured} instances. Recall that, since it is unlikely that Partial Convolutions can be computed faster than quadratic time on worst-case instances, any efficient algorithmic reduction to the Partial Convolution problem is forced to produce structured instances. The structure is as follows: Let $A, B \subseteq [N]$ denote the support of $x, y$, respectively (i.e., the set of nonzero coordinates). Then the reduction guarantees that $|C - B|$ is small, where~\makebox{$C - B = \set{c - b : c \in C,\, b \in B}$} is the \emph{difference set} of $B$ and $C$. Formally:

\begin{restatable}[Constellation to Partial Convolution]{lemma}{lemconstellationtopartialconv} \label{lem:constellation-to-partial-conv}
If the Partial Convolution problem is in deterministic time $\Order((|A| + |C - B|) \cdot \log^c N)$, then the Constellation problem is in deterministic time~$\Order(n \log^{c+1} N)$.
\end{restatable}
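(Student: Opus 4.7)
The plan is to reduce Constellation to $O(\log N)$ invocations of Partial Convolution via iterative filtering of a candidate shift set. Fix any anchor $p_0 \in P$; every feasible shift $s$ must satisfy $p_0 + s \in Q$, so all candidates lie in $C_0 := Q - p_0$ with $|C_0| \leq n$. A shift $s$ is feasible if and only if $(\mathbf{1}_Q \conv \mathbf{1}_{-P})[s] = |P|$, i.e., the task is itself a partial convolution evaluation on~$C_0$.

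The algorithm partitions $P = P_1 \cup \cdots \cup P_L$ into batches and produces a chain $C_0 \supseteq C_1 \supseteq \cdots \supseteq C_L$ defined by $C_i := \{s \in C_{i-1} : P_i + s \subseteq Q\}$. Each $C_i$ is computed from $C_{i-1}$ by one Partial Convolution call with $A = Q$, $B = -P_i$, and partial set $C = C_{i-1}$, returning $(\mathbf{1}_Q \conv \mathbf{1}_{-P_i})[s]$ for $s \in C_{i-1}$; we filter to those $s$ whose value equals $|P_i|$, and $C_L$ is the output. The key structural observation is that $C_i + P_i \subseteq Q$ holds by the very definition of~$C_i$, so $|C_i + P_i| \leq |Q| \leq n$, and therefore
\[
|A| + |C_{i-1} - B| \;\leq\; n + |P_i|\cdot|C_{i-1} \setminus C_i|.
\]
This is the crucial bound: the per-round Partial Convolution cost is dominated by the number of shifts \emph{killed} during the round, weighted by the batch size.

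To keep this quantity $O(n)$ per round, I plan to use a doubling schedule $|P_i| = 2^{i-1}$ (so $L = O(\log n)$ rounds cover all of $P$) coupled with an adaptive ordering of $P$ that respects the per-round kill budget $|C_{i-1} \setminus C_i| \leq n / |P_i|$. The aggregate budget $\sum_i |C_{i-1} \setminus C_i| \leq |C_0| \leq n$ is globally consistent with this per-round bound; moreover, once $C_i$ stabilizes to the set of feasible shifts (so $C_i + P \subseteq Q$), the structural observation directly ensures that subsequent rounds remain cheap even for large batches.

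The main obstacle is realizing the per-round kill budget explicitly. This likely requires either (i) a careful a priori ordering of $P$ (e.g.\ front-loading ``high-kill'' elements into small early batches) together with a potential-function argument, or (ii) an adaptive retry scheme: tentatively use batch size $2^{i-1}$, halve it if the resulting kill count exceeds $n/2^{i-1}$, and amortize the retries across rounds by charging them to earlier rounds that under-killed. Once the schedule is established, each round costs $O((|A| + |C - B|) \log^c N) = O(n \log^c N)$ by the hypothesis on Partial Convolution, and the total over $L = O(\log N)$ rounds is $O(n \log^{c+1} N)$, as claimed.
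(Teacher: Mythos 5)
Your reduction sets up the same high-level target (compute $(\mathbf 1_Q \conv \mathbf 1_{-P})[s]$ on a candidate set and filter by the value $|P|$), but the mechanism for controlling $|C-B|$ is genuinely different from the paper's, and it has a gap that I do not think the sketched fixes can close.

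The paper controls $|C-B|$ by recursing on the universe size $N$, not by batching $P$: it solves the instance $(P \bmod N/2,\; Q \bmod N/2)$ recursively to get $S'$, takes $C = S' + \{0,N/2\}$, and observes that $C - B = S' + P' + \{0,N/2,N\} \subseteq Q' + \{0,N/2,N\}$, so $|C-B| \le 3n$ holds \emph{before} the oracle is ever invoked. This is the whole point of the scaling step: the recursion delivers an a priori, structural certificate that the sumset is small. In your scheme no such certificate exists. You know only that $|C_{i-1}+P_i| \le |C_i + P_i| + |P_i|\cdot|C_{i-1}\setminus C_i| \le n + |P_i| K_i$ where $K_i$ is the kill count of round $i$, and $K_i$ is precisely the quantity the oracle call is supposed to reveal. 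So you must commit to a batch size without knowing whether the call will be affordable.

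The doubling schedule can blow up. Let $m = n/2$ and take $Q=\{0,\dots,m-1\}$, $P = \{0,\dots,m/2-1\} \cup \{mM, 2mM, \dots, (m/2)\,mM\}$ for a huge $M$. With anchor $p_0=0$, the first $L-1$ rounds (batch sizes $1,2,\dots,2^{L-2}$, covering the interval part of $P$) leave $C_{L-1}$ of size $\Theta(m)$. Round $L$ has batch size $m/2$ and $C_{L-1}+P_L$ has size $\Theta(m^2)$ because the shifts $jmM$ scatter every candidate. All $\Theta(n)$ kills are concentrated in the last round, exactly where the batch is largest. The adaptive retry with early abort does not rescue this: to detect that a batch is too large you must pay the abort budget, and halving the batch leaves the remaining elements still to be processed. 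More to the point, with an adaptive choice $b_i \approx n/K_i$ (the best you can do, and it still requires magically knowing $K_i$), the constraints $\sum K_i \le n$ and $\sum b_i \ge |P| = \Theta(n)$ admit $K_i \equiv \sqrt n$ over $R=\sqrt n$ rounds, for a total cost $\Theta(n^{3/2}\log^c N)$. So even the ideal variant of your batching plateaus at $n^{3/2}$; the claimed $O(n\log^{c+1}N)$ is out of reach along this route. The missing ingredient is the scaling-on-$N$ recursion, which replaces the unknown kill distribution with a guaranteed containment $C - B \subseteq Q' + \{0,N/2,N\}$.
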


While this running time for Partial Convolution is not ruled out by the 3SUM lower bound (as in worst-case instances we have $|B - C| = \Omega(n^2)$), it seems nevertheless quite unclear how to achieve this running time. Using the deterministic algorithm for Sparse Convolution we can achieve time \smash{$\widetilde\Order(|A + B| + |C|)$}, which does not seem helpful here.\footnote{It is easy to construct sets $A, B, C$ for which $|A| + |C - B| = \Order(n)$ and $|A + B| + |C| = \Omega(n^2)$: Let $B = C = \set{1, 2, \dots, n}$ and let $A = \set{n, 2n, \dots, n^2}$.}

Our remaining goal, which we will soon complete in step 2 of the algorithm, is to design an algorithm for Partial Convolution with near-linear running time in $|A| + |B - C|$. Before, let us take a brief intermezzo to introduce our key tool:

\paragraph{The Baur-Strassen Theorem}
The Baur-Strassen theorem is a powerful tool from algebraic complexity, stating intuitively that we can compute the partial derivatives of a function with the same complexity as the function itself:

\begin{restatable}[Baur-Strassen~\cite{BaurS83,Morgenstern85}]{theorem}{thmbaurstrassen} \label{thm:baur-strassen}
For any arithmetic circuit $C$ computing $f(x_1, \dots, x_n)$, there is a circuit $C'$ that simultaneously computes the partial derivatives $\frac{\partial f}{\partial x_i}(x_1, \dots, x_n)$ for all $1 \leq i \leq n$. The circuit $C'$ has size $|C'| \leq \Order(|C|)$ and can be constructed in time $\Order(|C|)$.
\end{restatable}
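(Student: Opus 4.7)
The plan is to invoke the classical \emph{reverse-mode automatic differentiation} construction (backpropagation). First I would normalize $C$ so that every gate has fan-in at most two and is one of: input, constant, $+$, or $\times$; this costs only a constant factor in size. Let $v_1,\dots,v_m$ be the gates of the normalized circuit in topological order with $v_m = f$ the output, and define the \emph{adjoint} of each gate by $\bar v_i := \partial f / \partial v_i$. By the multivariate chain rule,
\[
    \bar v_i \;=\; \sum_{j \,:\, v_i \text{ feeds into } v_j} \bar v_j \cdot \frac{\partial v_j}{\partial v_i},
\]
with base case $\bar v_m = 1$. The local partial derivatives $\partial v_j / \partial v_i$ are trivial: for an addition gate $v_j = v_a + v_b$ they equal $1$, and for a multiplication gate $v_j = v_a \cdot v_b$ they equal $v_b$ and $v_a$ respectively.

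Given this setup, the construction of $C'$ is immediate. I would first embed the original circuit $C$ inside $C'$, so that the forward values $v_1,\dots,v_m$ are available as wires, and then generate the adjoint gates $\bar v_i$ by processing the $v_i$'s in reverse topological order using the formula above. For each edge $(v_i, v_j)$ of the normalized $C$, the reverse pass adds at most a constant number of new gates to $C'$: one to form the term $\bar v_j \cdot (\partial v_j / \partial v_i)$ (a pass-through for $+$, one multiplication against an already-computed forward wire for $\times$), and one addition to accumulate this term into the running sum for $\bar v_i$. The desired derivatives are then read off directly as $\partial f / \partial x_\ell = \bar x_\ell$.

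For the size and time bounds, the number of edges of the normalized $C$ is $\Order(|C|)$, so the reverse pass contributes $\Order(|C|)$ new gates; together with the forward copy this gives $|C'| \le \Order(|C|)$. The construction itself is a single traversal of $C$ in reverse topological order and thus runs in time $\Order(|C|)$.

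The only conceptually subtle point, and the one I would be most careful about, is what happens at gates $v_i$ with large fan-out: one might naively fear that the accumulator $\bar v_i = \sum_j \bar v_j \cdot (\partial v_j/\partial v_i)$ is expensive to form, since the out-degree of a single gate can be as large as $|C|$. However, the number of summands in the accumulator for $\bar v_i$ is exactly the out-degree of $v_i$, and the sum of out-degrees across all gates equals the total number of edges, which is $\Order(|C|)$. Hence the cost of \emph{all} accumulator chains across the whole circuit is still $\Order(|C|)$. This global amortization is precisely what makes reverse-mode differentiation strictly cheaper than the naive approach of repeating forward-mode differentiation for each input (which would produce a circuit of size $\Omega(n \cdot |C|)$).
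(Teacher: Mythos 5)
The paper does not prove this theorem: it cites it as a black box from \cite{BaurS83,Morgenstern85}, noting that Morgenstern gave a ``simplified constructive proof.'' Your reverse-mode automatic differentiation argument is in essence exactly that constructive proof, and your key observation---that the accumulator for each $\bar v_i$ has one term per outgoing edge of $v_i$, so the total work is bounded by the number of edges rather than (out-degree) $\times$ (number of gates)---is precisely the amortization that makes the construction linear-size. The argument is correct as far as it goes, and your use of the multivariate chain rule is legitimate for the \emph{formal} partial derivatives that the paper defines via the usual derivative rules (no limits needed), so it works over an arbitrary field including the finite fields $\Field_{p^{p-1}}$ the paper later instantiates it with.

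The one real gap is your normalization step. You reduce to gates of type input, constant, $+$, $\times$, but the paper's circuits are allowed $-$ and $/$ gates, and the proof of \cref{lem:sparse-conv-circuit} (via transposed Vandermonde inversion, \cref{lem:transposed-vandermonde}) genuinely uses division. Subtraction is harmless, but division cannot be eliminated in general (it produces rational functions, not polynomials; Strassen's division-removal trick applies only to polynomials, at a degree-dependent cost, and requires a generic point). Fortunately the Baur--Strassen theorem does hold with division gates: for $v_j = v_a / v_b$ the local derivatives are $\partial v_j/\partial v_a = 1/v_b$ and $\partial v_j/\partial v_b = -v_a/v_b^2 = -v_j/v_b$, both computable from already-available forward wires ($v_j$, $v_b$) with a constant number of extra gates per edge, so your size and time bounds are unaffected. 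You should add this case rather than claim it can be normalized away.
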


Baur and Strassen established this result in 1983~\cite{BaurS83}, and a simplified constructive proof followed shortly after~\cite{Morgenstern85}. Their original intention was to prove arithmetic circuit \emph{lower bounds} (in particular, the Baur-Strassen theorem implies that matrix multiplication and determinant computation have the same asymptotic complexity). But also in algorithm design, the Baur-Strassen theorem had an exciting appearance: Cygan, Gabow and Sankowski designed simple and fast algorithms for several graph problems including finding shortest cycles, computing the radius and diameter, and finding minimum-weight perfect matchings~\cite{CyganGS15}. Their algorithms are based on the Baur-Strassen theorem combined with fast matrix multiplication. Our work constitutes only the second application in algorithm design that we are aware of, and the first application in combination with an FFT-like circuit.

Intuitively, the Baur-Strassen theorem allows to \emph{invert the flow} of a circuit, and to propagate information from the outputs to the inputs---After all, this is exactly what a partial derivative does: It measures how the output is affected by an input. Under the name \emph{back-propagation} this usage of the Baur-Strassen theorem is omni-present in machine learning~\cite{Werbos74,Rumelhart86,Werbos94}. More relevant to us is the consequence that a bilinear circuit computing a function $z_k = \sum_{i, j} a_{i, j, k} \cdot x_i \cdot y_j$ can be transformed into a circuit computing the ``rotated'' function $x_i = \sum_{j, k} a_{i, j, k} \cdot y_j \cdot z_k$; details follow in the paragraph. In contrast to the typical use cases in algebraic complexity, we here critically exploit that the new circuit can also be efficiently constructed.

\paragraph{Step 2: Partial Convolutions via the Baur-Strassen Theorem}
Let us come back to the algorithm. The Baur-Strassen theorem is only useful in combination with an arithmetic circuit; in our case that circuit should naturally compute the convolution of two sparse vectors. Such circuits exist, and they have already been used by Bringmann, Fischer and Nakos~\cite{BringmannFN22} to obtain a deterministic near-linear algorithm for the Sparse Convolution problem (on nonnegative vectors). Specifically, the sparse convolution technology yields a circuit of size~\smash{$\widetilde\Order(|C - B|)$} with inputs~$y_b$ (for~$b \in B$) and $z_c$ (for $c \in C$) and outputs $w_a$ (for $a \in C - B$) that computes
\begin{equation*}
    w_a = \sum_{\substack{b \in B\\c \in C\\a = c - b}} y_b \cdot z_c.
\end{equation*}
We will now modify this circuit in two steps. First, we add new inputs $x_a$ (for $a \in A$) and with small overhead let the circuit compute
\begin{equation*}
    w := \sum_{a \in A \cap (C - B)} x_a \cdot w_a = \sum_{\substack{a \in A\\b \in B\\c \in C\\a + b = c}} x_a \cdot y_b \cdot z_c.
\end{equation*}
Then observe that the partial derivatives
\begin{equation*}
    \frac{\partial w}{\partial z_c} = \sum_{\substack{a \in A\\b \in B\\a + b = c}} x_a \cdot y_b
\end{equation*}
are exactly what we want to compute. We can thus apply the Baur-Strassen theorem to let our circuit compute the derivatives of $w$ with respect to all inputs $z_c$. The size has not increased asymptotically and is still \smash{$\widetilde\Order(|C - B|)$}.

Following this insight, our deterministic algorithm explicitly constructs this arithmetic circuit (applying the Baur-Strassen theorem in the process), and evaluates it on the given vectors $x, y$. Unfortunately, while this overview seems very simple, the implementation of this idea suffers from many technical complications due to insufficiencies in the underlying computer algebra machinery. One of the main problems is that we need to perform all computations over a finite field, but we cannot assume deterministic access to a prime number of size $N$. Luckily, we can deal with these problems as in~\cite{BringmannFN22}; see \cref{sec:conv} for a more detailed discussion.

\paragraph{3SUM on Structured Instances}
A notable consequence of this idea is for the 3SUM problem on structured instances. The 3SUM problem is to decide for three integer sets $A, B, C$ of size at most $n$, whether there is a triple $(a, b, c) \in A \times B \times C$ with $a + b + c = 0$. It is a central conjecture in fine-grained complexity theory that this problem requires quadratic time $n^{2-\order(1)}$~\cite{GajentaanO95}, and many conditional hardness results are known based on this conjecture. On the positive side, it is known that for structured sets, 3SUM can be solved faster. Specifically, there is a (deterministic) algorithm in time $\widetilde\Order(n + \min\set{|A + B|, |A + C|, |B + C|})$ that simply computes the smallest of the three sumsets via a sparse convolution algorithm. 3SUM algorithms of this type have recently found applications in fine-grained lower bounds~\cite{AbboudBF23,JinX23}. With our novel approach, we can solve the following ``count-all-numbers'' version of 3SUM in the same running time; prior to our work this statement was not even known in terms of randomized algorithms, as far as we are aware:

\begin{theorem}[\#AllNumbers3SUM] \label{thm:all-numbers-3sum}
Given three sets $A, B, C \subseteq \set{-n^{\Order(1)}, \dots, n^{\Order(1)}}$ of size at most $n$, we call a triple $(a, b, c) \in A \times B \times C$ that satisfies $a + b + c = 0$ a \emph{3-sum}. There is an algorithm that computes for each $z \in A \cup B \cup C$ in how many 3-sums it participates, and runs in deterministic time~\smash{$\widetilde\Order(n + \min\set{|A + B|, |A + C|, |B + C|})$}.
\end{theorem}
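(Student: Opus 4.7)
The plan is to realize all three types of participation counts as first-order partial derivatives of a single trilinear form and then invoke Baur-Strassen, mirroring Step~2 of the algorithm sketched above for Partial Convolution. Consider the trilinear polynomial
\[
T(x, y, z) \;=\; \sum_{\substack{(a,b,c) \in A \times B \times C \\ a+b+c = 0}} x_a\, y_b\, z_c
\]
in formal variables indexed by $A$, $B$, $C$. By direct inspection, $\partial T / \partial x_a$ evaluated at $x = \mathbf{1}_A$, $y = \mathbf{1}_B$, $z = \mathbf{1}_C$ equals the number of 3-sums through $a \in A$, and the analogous statements hold for $\partial T / \partial y_b$ and $\partial T / \partial z_c$.

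To build a small arithmetic circuit for~$T$, I would rewrite
\[
T(x, y, z) \;=\; \sum_{c \in C} z_c \cdot (x \conv y)[-c],
\]
invoke the deterministic sparse convolution construction of~\cite{BringmannFN22} to obtain an arithmetic circuit of size $\widetilde\Order(|A+B|)$ whose outputs are the values $(x \conv y)[t]$ for $t \in A+B$, and attach a final linear combination weighted by the $z_c$-inputs (summing only over the $c \in C$ with $-c \in A+B$). This yields a circuit~$\mathcal{C}$ for~$T$ of size $\widetilde\Order(n + |A+B|)$. Since we do not know in advance which of the three sumsets is smallest, I would run the three analogous factorizations in parallel and halt at the first to finish, so the overall circuit size is $\widetilde\Order(n + \min\set{|A+B|, |A+C|, |B+C|})$.

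Applying \cref{thm:baur-strassen} to~$\mathcal{C}$ produces a circuit~$\mathcal{C}'$ of the same asymptotic size that simultaneously computes every $\partial T / \partial x_a$, $\partial T / \partial y_b$, and $\partial T / \partial z_c$. Evaluating~$\mathcal{C}'$ at the indicator assignment over a suitable finite field (chosen large enough that the counts, which are all bounded by~$n^2$, are recovered uniquely) reads off all participation counts in one sweep, in time linear in the circuit size.

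The main obstacle is not the combinatorial idea but the underlying algebra: the sparse convolution construction must be a genuine \emph{arithmetic} circuit (so that Baur-Strassen applies) and must be built and evaluated without randomness over a finite field that we can access deterministically. These are exactly the finite-field and prime-access complications that \cref{sec:conv} and~\cite{BringmannFN22} resolve for the Partial Convolution algorithm; invoked as a black box here, they reduce the proof of \cref{thm:all-numbers-3sum} to the short derivative-plus-circuit reasoning above, with the only additional ingredient being the cheap $z$-weighted linear combination layer that turns the convolution circuit into a circuit for the scalar~$T$.
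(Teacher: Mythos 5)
Your proof is correct but takes a genuinely different route at the top level. The paper's proof is modular: it calls the Partial Convolution theorem (\cref{thm:partial-conv}) as a black box twice (once with $(A',B',C')=(-C,-B,A)$ for the $A$-counts, once with $A$ and $B$ swapped for the $B$-counts) and handles the $C$-counts by a plain deterministic sparse convolution that computes $A+B$ with multiplicities. You instead reopen that black box: you build one arithmetic circuit for the scalar trilinear form $\sum_{a+b+c=0}x_ay_bz_c$ (a sparse-convolution circuit of size $\widetilde\Order(|A+B|)$ followed by a $z$-weighted linear combination) and extract all three families of counts as partial derivatives from a \emph{single} Baur--Strassen application. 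This is slicker---one differentiation yields every output at once and makes the three query types manifestly symmetric---at the cost of inlining the finite-field and CRT scaffolding of \cref{sec:conv} rather than invoking \cref{thm:partial-conv} as a primitive, which you correctly acknowledge. Two details worth making explicit in a final write-up: \cref{lem:sparse-conv-circuit} requires the support superset containing $A+B$ as input, so you must precompute $A+B$ deterministically first (exactly as the paper does at the end of \cref{sec:conv:sec:partial-conv}); and your parallel-and-halt device is indeed necessary, since one cannot determine in advance which of the three sumsets is smallest without risking the cost of the larger ones.
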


\subsection{Open Questions}
Our work inspires some interesting open questions.
\begin{enumerate}
    \setlength\parskip{0ex}
    \item\emph{Can the number of log-factors be improved?} It can be checked that our Constellation algorithm runs in time $\Order(n \log^6 N \polyloglog N)$. Shaving one log-factor is easy by merging two ``scaling steps'' in the algorithm,\footnote{More concretely, one can directly combine the upcoming \cref{lem:partial-conv-superset} with the reduction in \cref{lem:constellation-to-partial-conv}. The superset $T$ of~$C - B$ can be maintained throughout the scaling without additional cost. For the sake of simplicity, we decided to stick to this more modular presentation of our result.} but any further improvement would require improving the algebra machinery (see also \cref{sec:conv}). We remark that if a prime $p > N$ and an element with multiplicative order at least $N$ in $\Field_p$ were provided in advance (both can be efficiently precomputed by a Las Vegas randomized algorithm), our algorithm would in fact run in time $\Order(n \log^2 n \log N)$. It is an interesting question whether this running time can be achieved without any assumption, or whether the number of log-factors can even be further reduced?
    \item\emph{Is there an algorithm for the Constellation problem that avoids FFT?}\\
    All Constellation algorithms faster than the quadratic-time baseline solution make direct or indirect use of the FFT. Is there a completely combinatorial algorithm for the constellation problem that, instead of FFT, leverages the structure of the sets $P$ and $Q$?
    \item\emph{Are there more algorithmic applications of the Baur-Strassen theorem?}
\end{enumerate}

\subsection{Outline}
We start with some preliminary definitions in \cref{sec:preliminaries}. In \cref{sec:constellation} we provide the reduction from the Constellation problem to the (structured) Partial Convolution problem. In \cref{sec:conv} we then give our algebraic algorithm for the Partial Convolution problem. Finally, in \cref{sec:corollaries} we state the reductions from Point Pattern Matching and Sparse Wildcard Matching to the Constellation problem.
\section{Preliminaries} \label{sec:preliminaries}
Let $\Int$, $\Int_{> 0}$, $\Real$ denote the integers, positive integers, and reals, respectively, and let $\Field_q$ denote the finite field with~$q$ elements. We write $[N] = \set{0, \dots, N - 1}$ and index all objects (such as vectors) starting with $0$, unless stated otherwise. Finally, we write $\poly(N) = N^{\Order(1)}$, $\polylog(N) = (\log N)^{\Order(1)}$, $\polyloglog(N) = (\log \log N)^{\Order(1)}$, and \smash{$\widetilde\Order(T) = T (\log T)^{\Order(1)}$}. Let us also formally recap the definitions of the Constellation problem, with and without mismatches:

\begin{problem}[Constellation]
Given sets $P, Q \subseteq [N]$, report all shifts $s \in \Int$ that satisfy~\makebox{$P + s \subseteq Q$}.
\end{problem}

\begin{problem}[Constellation with Mismatches] \label{prob:constellation-mismatches}
Given sets $P, Q \subseteq [N]$ and a threshold $0 \leq k < |P|$, report all shifts $s \in \Int$ satisfying $|(P + s) \setminus Q| \leq k$.
\end{problem}

\begin{problem}[Constellation with Weighted Mismatches] \label{prob:constellation-weighted}
Given sets $P, Q \subseteq [N]$, positive weights \makebox{$w : P \to \Int_{> 0}$} and a threshold $0 \leq k < w(P) = \sum_{p \in P} w(p)$, report all shifts $s \in \Int$ satisfying
\begin{equation*}
    \sum_{\substack{p \in P\\p + s \not\in Q}} w(p) \leq k.
\end{equation*}
\end{problem}

Note that \cref{prob:constellation-weighted} indeed generalizes \cref{prob:constellation-mismatches} as by picking unit-weights we recover exactly the condition $|(P + s) \setminus Q| \leq k$.

\paragraph{Sumsets and Convolutions}
Let $A, B \subseteq \Int$. We write $A + B = \set{a + b : a \in A,\, b \in B}$ to denote the \emph{sumset} of $A$ and $B$. For $s \in \Int$ we also write $A + s = \set{a + s : a \in A}$. For two vectors $x, y \in \Int^N$, the \emph{convolution $x \conv y \in \Int^{2N-1}$} is defined coordinate-wise by
\begin{equation*}
    (x \conv y)[k] = \sum_{\substack{i, j \in [N]\\i + j = k}} x[i] \cdot y[j].
\end{equation*}
We also define the \emph{cyclic convolution $x \conv_N y \in \Int^N$} analogously with wrap-around:
\begin{equation*}
    (x \conv_N y)[k] = \sum_{\substack{i, j \in [N]\\i + j \equiv k \mod N}} x[i] \cdot y[j].
\end{equation*}
Finally, for a vector $x \in \Int^n$, let $\supp(x) = \set{i \in [N] : x[i] \neq 0}$ denote its \emph{support.} Recall that the (cyclic) \emph{Partial Convolution} problem is to compute the (cyclic) convolution of two given vectors~$x, y$, restricted to some specific points. Formally: 

\begin{problem}[Partial Convolution]
Given $A, B, C \subseteq [N]$ and $x, y \in \Int^N$ with $\supp(x) \subseteq A$, $\supp(y) \subseteq B$, compute $(x \conv y)[c]$ for all $c \in C$.
\end{problem}

\begin{problem}[Cyclic Partial Convolution]
Given $A, B, C \subseteq [N]$ and $x, y \in \Int^N$ with $\supp(x) \subseteq A$, $\supp(y) \subseteq B$, compute $(x \conv_N y)[c]$ for all $c \in C$.
\end{problem}

For both problems, unless stated otherwise, we assume that the entries of $x$ and $y$ are bounded by $\poly(N)$. It is a simple observation that the non-cyclic and cyclic Partial Convolution problems are asymptotically equivalent for any reasonable parameterization, and we will therefore use both versions interchangeably.\footnote{For the reductions in both directions, it suffices to increase $N$ to $2N$.}

\paragraph{Machine Model}
We work in the word RAM model with word size $\Theta(\log N)$. In this model we can perform basic logical and arithmetic operations on words in constant time. In particular, in~$\Order(1)$ words we can simulate a real number with precision bounded by $\Order(\log N)$ bits.
\section{From Constellation to Partial Convolution} \label{sec:constellation}
In this section we give our reduction from the Constellation problem to Partial Convolution. We start with the reduction in \cref{lem:constellation-to-partial-conv}, and then strengthen the reduction in \cref{sec:constellation:sec:mismatches} to support mismatches.

\lemconstellationtopartialconv*
\begin{proof}
Throughout this proof, let us consider the Constellation problem for cyclic groups $\Int / N \Int$. That is, we are given sets $P, Q \subseteq \Int / N \Int$ and the task is to list all shifts $s \in \Int / N \Int$ satisfying $P + s \subseteq Q \mod N$ (i.e., that for all $p \in P$ there exists some $q \in Q$ with $p + s \equiv q \mod N$). Let us further assume that $N$ is a power of two. An algorithm for this problem directly leads to an algorithm for the standard Constellation problem with the same asymptotic running time, by choosing~$N$ to be the smallest power of two larger than $2 \cdot \max(P \cup Q)$ (as for this choice of $N$ there is no wrap-around).

The key idea behind the reduction is to apply a \emph{scaling trick.} If $N$ is smaller than some constant, we solve the instance by brute-force. Otherwise we will \emph{recursively} solve a Constellation instance over a smaller cyclic group of size $N' = \frac N2$, to obtain a good approximation to the original instance. More specifically, construct the sets
\begin{align*}
    P' &= \set{p \bmod N' : p \in P} \subseteq \Int / N' \Int, \\
    Q' &= \set{q \bmod N' : q \in Q} \subseteq \Int / N' \Int.
\end{align*}
We view $(P', Q')$ as a Constellation instance and solve it recursively. In this way we compute the maximal set~$S'$ satisfying~\makebox{$P' + S' \subseteq Q'$}. Note that whenever $s$ is a feasible shift in the original instance $(P, Q)$, then~$s \bmod N'$ is a feasible shift in the new instance $(P', Q')$ (but not necessarily vice versa). Therefore, the set $S' + \set{0, N'}$ surely contains all feasible shifts, plus possibly some false positives, and the remaining task is to filter all non-feasible shifts. To this end, we construct a Partial Convolution instance with
\begin{align*}
    A &= Q \subseteq \Int / N \Int, & x &= \One_Q \quad\text{(the indicator vector of $Q$)}, \\
    B &= -P \subseteq \Int / N \Int, & y &= \One_{-P} \quad\text{(the indicator vector of $-P$)}, \\
    C &= S' + \set{0, N'} \subseteq \Int / N \Int.
\end{align*}
Using the oracle, we can efficiently compute $(x \conv_N y)[s]$ for all $s \in C$. (Here we use the previous observation that computing cyclic and non-cyclic convolutions is equivalent.) Finally, we return
\begin{equation*}
    S = \set{s \in C : (x \conv_N y)[s] = |P|}.
\end{equation*}
This completes the description of the algorithm; in the next two steps we will prove that it is correct and efficient.

\begin{claim}[Correctness] \label{lem:constellation-to-partial-conv:clm:correctness}
$S = \set{s \in \Int / N \Int : P + s \subseteq Q \mod N}$.
\end{claim}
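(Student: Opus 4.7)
The plan is to show that the cyclic convolution value $(x \conv_N y)[s]$ counts precisely how many points of $P$ match into $Q$ under shift $s$, so that thresholding against $|P|$ identifies exactly the feasible shifts that lie in $C$; then separately argue that $C$ is guaranteed to contain every feasible shift, so no solutions are lost by the filtering step.

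For the first part, I would unfold the definition of the cyclic convolution directly: since $x = \One_Q$ and $y = \One_{-P}$, we get
\begin{equation*}
    (x \conv_N y)[s] = \sum_{\substack{i, j \in [N]\\ i + j \equiv s \bmod N}} \One_Q(i) \cdot \One_{-P}(j) = \big|\set{(p, q) \in P \times Q : q - p \equiv s \bmod N}\big|.
\end{equation*}
Because $Q$ is a set, for any fixed $p \in P$ there is at most one $q \in Q$ with $q \equiv p + s \bmod N$, so the count simplifies to $|\set{p \in P : p + s \in Q \bmod N}|$. In particular, $(x \conv_N y)[s] \leq |P|$ with equality if and only if every $p \in P$ satisfies $p + s \in Q \bmod N$, i.e.\ $P + s \subseteq Q \bmod N$. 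This immediately yields the inclusion $S \subseteq \set{s : P + s \subseteq Q \bmod N}$ and, conditional on $C$ being large enough, also the reverse inclusion.

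For the second part, I would verify that every feasible shift lies in $C = S' + \set{0, N'}$. Suppose $s \in \Int / N \Int$ satisfies $P + s \subseteq Q \bmod N$. Reducing modulo $N'$ (which is well-defined since $N' \mid N$) and using the definitions of $P'$ and $Q'$, we get $P' + (s \bmod N') \subseteq Q' \bmod N'$. By the induction hypothesis on the recursive call, $S'$ is precisely the maximal set of feasible shifts for $(P', Q')$, so $s \bmod N' \in S'$. Since every element of $\Int / N \Int$ whose reduction mod $N'$ equals $t \in S'$ must be either $t$ or $t + N'$, we conclude $s \in S' + \set{0, N'} = C$, as required.

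The only mildly delicate point is the recursion's invariant, namely that $S'$ really is the \emph{maximal} set satisfying $P' + S' \subseteq Q' \bmod N'$ rather than some arbitrary subset; this follows by structural induction on the recursion depth, with the base case handled by brute force and the inductive step supplied by precisely the equivalence proved in the previous two paragraphs. Once this invariant is in hand, combining the two inclusions gives $S = \set{s \in \Int / N \Int : P + s \subseteq Q \bmod N}$, completing the proof of the claim.
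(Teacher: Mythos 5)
Your proof is correct and follows essentially the same two-step structure as the paper's: first showing $C$ contains all feasible shifts because feasibility descends to the reduced instance $(P',Q')$ and the recursion returns the maximal $S'$, and second unfolding the cyclic convolution to see that $(x \conv_N y)[s]$ counts matched points, equaling $|P|$ exactly when $s$ is feasible. The extra care you take to spell out the induction invariant and the uniqueness of $q$ for each $p$ is a harmless elaboration of what the paper leaves implicit.
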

\begin{proof}
As mentioned before, for any shift $s$ with $P + s \subseteq Q \mod N$ it is clear that $P' + (s \bmod N') \subseteq Q' \mod{N'}$. Therefore, $C$ is a superset of the feasible shifts. Consider any shift~\makebox{$s \in C$}. By a straightforward calculation, we have that
\begin{equation*}
    (x \conv_N y)[s] = \sum_{\substack{a, b \in \Int / N \Int\\a + b \equiv s \mod N}} x[a] \cdot y[b] = \sum_{\substack{q \in Q\\p \in P\\q-p\equiv s \mod N}} 1 = \sum_{\substack{p \in P\\q \in Q\\p + s \equiv q \mod N}} 1.
\end{equation*}
That is, $(x \conv_N y)[s]$ equals the number of solutions to the equation $p + s \equiv q \mod N$, for $p \in P$ and~\makebox{$q \in Q$}. If~$s$ is feasible, then clearly there must be a solution for every $p \in P$, and the number of solutions is exactly~$|P|$. On the other hand, if $s$ is not feasible, then there exists some $p \in P$ with~\makebox{$p + s \not\in Q$}, and therefore the number of solutions is strictly less than $|P|$.
\end{proof}

\begin{claim}[Running Time] \label{lem:constellation-to-partial-conv:clm:time}
The algorithm runs in time $\Order(n \log^{c+1} N)$.
\end{claim}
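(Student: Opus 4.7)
The plan is to set up the natural recursion and then spend most of the work bounding the size of $C - B$ so that each invocation of the Partial Convolution oracle costs only $O(n \log^c N)$. Let $T(N)$ denote the running time of the algorithm on an instance with universe size $N$ (where $|P|, |Q| \le n$). Since the recursive instance $(P', Q')$ lives in $\Int / N' \Int$ with $|P'|, |Q'| \le n$, we get the recurrence
\begin{equation*}
    T(N) \le T(N/2) + (\text{cost of one Partial Convolution call on universe } N) + \Order(n),
\end{equation*}
where the $\Order(n)$ term accounts for constructing $P'$, $Q'$, $C$ and the final filtering step. The base case $N = \Order(1)$ is handled by brute force in $\Order(1)$ time.

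By hypothesis, the Partial Convolution call runs in time $\Order((|A| + |C - B|) \log^c N)$. Clearly $|A| = |Q| \le n$, so the heart of the argument is to show $|C - B| \le 2n$. The key structural observation is that every shift $s' \in S'$ is, by definition, feasible for the reduced instance: $P' + s' \subseteq Q' \pmod{N'}$. Lifting back to $\Int / N \Int$ (where $N = 2N'$), this exactly means
\begin{equation*}
    P + S' \;\subseteq\; Q \cup (Q + N') \pmod{N}.
\end{equation*}
Since $2 N' \equiv 0 \pmod N$, adding an extra offset from $\set{0, N'}$ does not leave this union, so
\begin{equation*}
    C - B \;=\; (S' + \set{0, N'}) + P \;\subseteq\; Q \cup (Q + N') \pmod{N},
\end{equation*}
which has size at most $2|Q| \le 2n$. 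Hence $|A| + |C - B| \le 3n$, and the Partial Convolution call costs $\Order(n \log^c N)$.

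Plugging this into the recurrence gives $T(N) \le T(N/2) + \Order(n \log^c N)$. Since $N$ is halved at each level and the recursion terminates once $N$ drops to a constant, there are $\Order(\log N)$ levels, each contributing $\Order(n \log^c N)$, for a total of $\Order(n \log^{c+1} N)$ as claimed. The one nontrivial step I expect to justify carefully is the containment $P + S' \subseteq Q \cup (Q + N') \pmod N$, since everything else is a routine recursion and bookkeeping calculation; it is this structural property that forces the Partial Convolution instances produced by the reduction to be exactly of the ``small sumset'' kind needed by the hypothesis.
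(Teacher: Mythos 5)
Your argument is correct and follows essentially the same route as the paper: the key step in both is that $S'$ consists only of feasible shifts for the reduced instance, so $S' + P' \subseteq Q' \pmod{N'}$, which after lifting to $\Int/N\Int$ forces $C - B$ to land inside a set of size $\Order(|Q|)$, giving each oracle call cost $\Order(n \log^c N)$ and the $\log N$ recursion depth the stated total. The only cosmetic difference is that you bound $C - B$ by $Q \cup (Q + N')$ (size $\le 2n$) whereas the paper uses $Q' + \set{0, N', 2N'}$ (size $\le 3n$); both are the same observation packaged slightly differently.
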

\begin{proof}
Recall that solving the Partial Convolution instance takes time $\Order((|A| + |C - B|) \cdot \log^c N)$. While it looks like $|C - B|$ could have quadratic size, due to the way that we chose $C$ we can argue that $|C - B| = \Order(n)$. Indeed, recalling that $C = S' + \set{0, N'}$, $B = -P = -(P' + \set{0, N'})$ and $P' + S' \subseteq Q'$, we have that
\begin{equation*}
    C - B = S' + P' + \set{0, N', 2N'} \subseteq Q' + \set{0, N', 2N'}.
\end{equation*}
Since $|Q'| = |Q| \leq n$, we have that $|C - B| \leq 3n$ as claimed. This proves that a single call to the Partial Convolution oracle takes time $\Order(n \log^c N)$. Moreover, since the algorithm reaches recursion depth at most~$\log N$, the total running time is $\Order(n \log^{c+1} N)$ as claimed.
\end{proof}

\noindent
In combination, \cref{lem:constellation-to-partial-conv:clm:correctness,lem:constellation-to-partial-conv:clm:time} complete the proof of \cref{lem:constellation-to-partial-conv}.
\end{proof}

\subsection{Constellation with Mismatches} \label{sec:constellation:sec:mismatches}
Next, in \cref{lem:constellation-mismatches-to-partial-conv}, we strengthen the previous reduction to deal with mismatches. The following simple lemma about the number of solutions in the presence of mismatches will come in handy. Throughout, we deal with the more general case of \emph{weighted} mismatches (see \cref{prob:constellation-weighted}); the unweighted case is the restriction to unit weights ($w(p) = 1$).

\begin{lemma} \label{lem:constellation-mismatches-number}
Let $P, Q \subseteq [N]$ be sets, let $w : P \to \Int_{\geq 0}$ and let $0 \leq k < w(P) = \sum_{p \in P} w(p)$. Then there are at most \smash{$\frac{w(P) \cdot |Q|}{w(P) - k}$} shifts $s$ with $\sum_{p \in P : p + s \not\in Q} w(p) \leq k$.
\end{lemma}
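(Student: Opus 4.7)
The plan is to run a standard double counting / averaging argument on the set $S$ of feasible shifts. Let $S = \set{s \in \Int : \sum_{p \in P,\, p + s \not\in Q} w(p) \leq k}$. First I would rephrase the feasibility condition in its ``positive'' form: a shift $s$ lies in $S$ if and only if the total weight of matched points satisfies
\begin{equation*}
    \sum_{\substack{p \in P\\ p + s \in Q}} w(p) \;\geq\; w(P) - k.
\end{equation*}
Summing this inequality over all $s \in S$ yields the lower bound
\begin{equation*}
    \sum_{s \in S} \sum_{\substack{p \in P\\ p + s \in Q}} w(p) \;\geq\; |S| \cdot (w(P) - k).
\end{equation*}

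Next I would swap the order of summation to obtain the matching upper bound. For each fixed $p \in P$, the shifts $s$ with $p + s \in Q$ are in bijection with elements of $Q$ via $s \mapsto p + s$, so there are at most $|Q|$ such shifts (and hence at most $|Q|$ such shifts in $S$). Therefore
\begin{equation*}
    \sum_{s \in S} \sum_{\substack{p \in P\\ p + s \in Q}} w(p) \;=\; \sum_{p \in P} w(p) \cdot \bigl|\set{s \in S : p + s \in Q}\bigr| \;\leq\; \sum_{p \in P} w(p) \cdot |Q| \;=\; w(P) \cdot |Q|.
\end{equation*}

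Combining the two bounds gives $|S| \cdot (w(P) - k) \leq w(P) \cdot |Q|$, and the assumption $k < w(P)$ lets us divide to conclude $|S| \leq \frac{w(P) \cdot |Q|}{w(P) - k}$, as desired. There is no real obstacle here: the argument is a one-line double counting, and the hypothesis $k < w(P)$ is used only to ensure that the denominator $w(P) - k$ is strictly positive so that division is valid.
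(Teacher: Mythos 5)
Your argument is correct and is essentially identical to the paper's proof: both establish the lower bound $|S|\cdot(w(P)-k)$ by summing the matched-weight condition over $s \in S$, and the upper bound $w(P)\cdot|Q|$ by double counting over pairs $(p,q)$. The only cosmetic difference is that the paper phrases the same double count in convolution language ($\sum_s (x \conv y)[N+s] \leq w(P)\cdot|Q|$), whereas you swap the order of summation directly; the underlying computation is the same.
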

\begin{proof}
Let $S$ denote the set of shifts $s$ with $\sum_{p \in P : p + s \not\in Q} w(p) \leq k$. Let $x = \One_Q$ be the indicator vector of $Q$ and let $y$ be the vector defined by $y[N - p] = w(p)$ if $p \in P$ and $y[N - p] = 0$ otherwise. Since for any integer shift $s$ we have
\begin{equation*}
    (x \conv y)[N + s] = \sum_{\substack{i, j\\i + j = N + s}} x[i] \cdot y[i] = \sum_{\substack{q \in Q\\p \in P\\q + N - p = N + s}} w(p) = \sum_{\substack{p \in P\\q \in Q\\p + s = q}} w(p),
\end{equation*}
it holds that $s \in S$ if and only if $(x \conv y)[N + s] \geq w(P) - k$. Thus, $\sum_{s \in S} (x \conv y)[N + s] \geq |S| \cdot (w(P) - k)$. On the other hand, it holds that $\sum_{s \in \Int} (x \conv y)[N + s] \leq w(P) \cdot |Q|$. The claim follows by combining both bounds.
\end{proof}

\begin{lemma}[Constellation with Weighted Mismatches to Partial Convolution] \label{lem:constellation-mismatches-to-partial-conv}
If the Partial Convolution problem is in deterministic time $\Order((|A| + |C - B|) \cdot \log^c N)$, then the Constellation problem with weighted mismatches is in deterministic time $\Order(n k \log^{c+1} N)$, provided that $0 \leq k \leq (1 - \Omega(1)) w(P)$.
\end{lemma}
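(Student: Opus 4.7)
The plan is to mimic the scaling recursion of \cref{lem:constellation-to-partial-conv}, losing a factor of $k$ to accommodate mismatches. As there, I work in the cyclic group $\Int / N \Int$ with $N$ a power of two, base out by brute force for $N = \Order(1)$, and otherwise recurse to size $N' = N/2$. Set $P' = \set{p \bmod N' : p \in P}$ and $Q' = \set{q \bmod N' : q \in Q}$, and induce weights $w'(p') = \sum_{p \in P :\, p \bmod N' = p'} w(p)$, so that $w'(P') = w(P)$ and every $w'(p')$ remains a positive integer. The key monotonicity is that $p + s \equiv q \mod N$ implies $p + s \equiv q \mod N'$; consequently, for any shift $s$ the weighted mismatch of $s' := s \bmod N'$ in the recursive instance is bounded by that of $s$ in the original. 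Thus recursively solving $(P', Q', w', k)$ returns a set $S'$ that contains the projection $s \bmod N'$ of every feasible $s$, and because $k \leq (1 - \Omega(1)) w(P) = (1 - \Omega(1)) w'(P')$, \cref{lem:constellation-mismatches-number} bounds $|S'| = \Order(|Q'|) = \Order(n)$.

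Next I verify the lifted candidates $C = S' + \set{0, N'}$ with a single Partial Convolution call, analogously to \cref{lem:constellation-to-partial-conv}. Take $A = Q$, $B = \set{N - p : p \in P}$, $x = \One_Q$, and $y[N - p] = w(p)$ for $p \in P$ (and zero elsewhere). A direct calculation, mimicking the proof of \cref{lem:constellation-mismatches-number}, gives
\begin{equation*}
    (x \conv_N y)[s] \;=\; \sum_{\substack{p \in P \\ (p + s) \bmod N \in Q}} w(p) \;=\; w(P) - \sum_{\substack{p \in P \\ (p + s) \bmod N \notin Q}} w(p),
\end{equation*}
so $s$ is feasible if and only if $(x \conv_N y)[s] \geq w(P) - k$. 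Filtering $C$ by this threshold yields the output.

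The main obstacle is bounding $|C - B|$, which controls the cost per recursion level. Using $P \subseteq P' + \set{0, N'}$ I obtain $C - B \subseteq S' + P' + \set{0, N', 2N'}$, exactly as in \cref{lem:constellation-to-partial-conv}; there, the argument closed by $S' + P' \subseteq Q'$, which fails here. However, for every $s' \in S'$ the number of $p' \in P'$ with $p' + s' \notin Q'$ is at most $k$, because the induced weights $w'(p')$ are positive integers whose total over the displaced $p'$ is bounded by the weighted mismatch $\leq k$. Hence
\begin{equation*}
    S' + P' \;\subseteq\; Q' \,\cup\, \set{s' + p' : s' \in S',\, p' \in P',\, p' + s' \notin Q'},
\end{equation*}
and the right-hand side has size $\Order(n) + |S'| \cdot k = \Order(nk)$. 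Consequently $|C - B| = \Order(nk)$, each Partial Convolution call costs $\Order(nk \log^c N)$, and summing over the $\Order(\log N)$ recursion levels yields the claimed $\Order(nk \log^{c+1} N)$ bound. The delicate point is that this per-shift displacement bound of $k$ relies crucially on the weights being positive integers---without integrality, a single shift could accumulate arbitrarily many small-weight mismatches and inflate $|C - B|$ beyond control.
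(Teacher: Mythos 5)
Your proposal is correct and follows essentially the same approach as the paper's proof: the scaling recursion with aggregated weights $w'(p') = w(p') + w(p'+N')$, invoking \cref{lem:constellation-mismatches-number} to bound $|S'| = \Order(|Q'|)$, and bounding $|S' + P'| \leq |Q'| + \sum_{s' \in S'}|(P'+s')\setminus Q'| \leq |Q'| + |S'|\cdot k$ using that positive integer weights force at most $k$ displaced points per feasible shift. You spell out the integrality point (that per-shift mismatch \emph{count} is bounded by the weighted mismatch because each weight is $\geq 1$) more explicitly than the paper, which uses this silently; otherwise the two arguments coincide step for step.
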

\begin{proof}
This reduction is similarly set up as in \cref{lem:constellation-to-partial-conv}, but requires some more care to deal with the mismatches. We will again solve the Constellation problem over $\Int / N \Int$, with up to $k$ weighted mismatches. That is, for two given sets~\makebox{$P, Q \subseteq \Int / N \Int$} and a weight function $w : P \to \Int_{> 0}$, our goal is to compute the set $S$ of shifts $s$ satisfying that
\begin{equation*}
    \sum_{\substack{p \in P\\p + s \not\in Q}} w(p) \leq k    
\end{equation*}
(where addition is modulo $N$).

As before, let $N' = \frac{N}{2}$, $P' = \set{p \bmod N' : p \in P}$ and $Q' = \set{q \bmod N' : q \in Q}$. The weights remain unchanged, except that whenever two elements from $P$ collide under the modulo operation, we add their weights: $w'(p') = w(p') + w(p' + N')$. We solve the instance~$(P', Q', w', k)$ recursively. Note that also for the recursive call we guarantee the technical condition $0 \leq k \leq (1 - \Omega(1)) w'(P')$ since $w(P) = w'(P')$. As the result of the recursive call, we have computed the set $S' \subseteq \Int / N' \Int$ of shifts $s'$ satisfying $\sum_{p' \in P' : p' + s' \not\in Q'} w'(p') \leq k$. Next, we set up the Partial Convolution instance
\begin{align*}
    A &= Q \subseteq \Int / N \Int, \\
    B &= -P \subseteq \Int / N \Int, \\
    C &= S' + \set{0, N'} \subseteq \Int / N \Int,
\end{align*}
where $x$ is the indicator vector of $Q = A$, and $y$ is the vector defined by $y[-p] = w(p)$ if $p \in P$ and~\makebox{$y[-p] = 0$} otherwise (where negation is modulo $N'$). Note that indeed $\supp(x) = A$ and $\supp(y) = B$. We solve this instance using the Partial Convolution oracle, and return
\begin{equation*}
    S = \set{s \in C : (x \conv_N y)[s] \geq w(P) - k}.
\end{equation*}

\begin{claim}[Correctness] \label{lem:constellation-mismatches-to-partial-conv:clm:correctness}
$S = \set{s \in \Int / N \Int : \sum_{p \in P : p + s \not\in Q} w(p) \leq k}$.
\end{claim}
\begin{proof}
Suppose that $s$ satisfies $\sum_{p \in P : p + s \not\in Q} w(p) \leq k$. First, note that the condition $p + s \in Q$ implies that $p' + s' \in Q'$ where $p' = p \bmod N'$ and $s' = s \bmod N'$. In particular, we have that $\sum_{p' \in P' : p' + s' \not\in Q'} w'(p') \leq \sum_{p \in P : p + s \not\in Q} w(p) \leq k$. Assuming that the recursive call is correct, it follows that $s' \in S'$ and thus $s \in C$. It remains to observe that $(x \conv_N y)[s] = \sum_{p \in P : p + s \in Q} w(p)$ for all $s \in C$.
\end{proof}

\begin{claim}[Running Time] \label{lem:constellation-mismatches-to-partial-conv:clm:time}
The algorithm runs in time $\Order(n k \log^{c+1} N)$.
\end{claim}
\begin{proof}
A single call to the Partial Convolution oracle takes time $\Order((|A| + |C - B|) \cdot \log^c N)$. Recalling, as in \cref{lem:constellation-to-partial-conv:clm:time}, that $C = S' + \set{0, N'}$ and $B = -P = -(P' + \set{0, N'})$, we have that
\begin{equation*}
    C - B = S' + P' + \set{0, N', 2N'}.
\end{equation*}
To bound the size of $S' + P'$, we use that
\begin{equation*}
    |S' + P'| \leq |Q'| + \sum_{s \in S'} |(P' + s) \setminus Q'| \leq |Q'| + |S'| \cdot k.
\end{equation*}
Finally, by \cref{lem:constellation-mismatches-number} and the assumption that $k \leq (1 - \Omega(1)) w(P)$, the size of $S'$ is bounded by
\begin{equation*}
    |S'| \leq \frac{w(P') \cdot |Q'|}{w(P') - k} \leq \frac{w(P') \cdot |Q'|}{\Omega(w(P'))} = \Order(|Q'|).
\end{equation*}
By combining the previous three bounds, we obtain that $|C - B| = \Order(|Q'| + |S'| \cdot k) = \Order(|Q'| \cdot k) = \Order(n k)$. Thus, a single call to the oracle takes time $\Order(n k \log^c N)$. Moreover, the recursion depth of the algorithm is bounded by $\Order(\log N)$, and therefore the total running time is $\Order(n k \log^{c+1} N)$ as claimed.
\end{proof}

\noindent
In combination, both claims complete the proof of \cref{lem:constellation-mismatches-to-partial-conv}.
\end{proof}
\section{Partial Convolution and the Baur-Strassen Theorem} \label{sec:conv}
In this section we design our algorithm for the Partial Convolution problem, thereby proving the following \cref{thm:partial-conv}.

\begin{restatable}[Partial Convolution]{theorem}{thmpartialconv} \label{thm:partial-conv}
Let $A, B, C \subseteq [N]$ and $x, y \in \Int^N$ with $\supp(x) \subseteq A$, $\supp(y) \subseteq B$. Then we can compute $(x \conv y)[c]$ for all $c \in C$ in deterministic time $\Order((|A| + |C - B|) \log^5 (N \Delta) \polyloglog(N \Delta))$, where $\Delta$ is the largest entry in $x, y$ in absolute value.
\end{restatable}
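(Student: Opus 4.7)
The plan follows the sketch in the technical overview: build an arithmetic circuit that computes a single scalar trilinear form $w = \sum_{a+b=c} x_a y_b z_c$, so that the desired outputs $(x \conv y)[c]$ appear as its partial derivatives with respect to $z_c$, and then invoke the Baur--Strassen theorem (\cref{thm:baur-strassen}) to obtain a circuit for all these derivatives at once. Throughout I will have to work over a finite field $\Field_p$ rather than the integers; the deterministic choice of $p$ is the subtle point and I defer it to the end.

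First I invoke the deterministic sparse-convolution circuit machinery of Bringmann, Fischer and Nakos~\cite{BringmannFN22}. From $B$ and $C$ alone it produces, in deterministic \smash{$\widetilde\Order(|C-B|)$} time, an arithmetic circuit $\Gamma_0$ of size \smash{$\widetilde\Order(|C-B|)$} over $\Field_p$, with inputs $y_b$ ($b \in B$) and $z_c$ ($c \in C$) and outputs $w_a = \sum_{b \in B,\, c \in C,\, a=c-b} y_b z_c$ for every $a \in C - B$. I augment $\Gamma_0$ to a circuit $\Gamma_1$ by introducing fresh inputs $x_a$ ($a \in A$) together with a single scalar output
\[
w \;:=\; \sum_{a \in A \cap (C-B)} x_a \, w_a \;=\; \sum_{\substack{a \in A,\, b \in B,\, c \in C\\ a+b=c}} x_a \, y_b \, z_c,
\]
at a cost of $\Order(|A|)$ additional gates. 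Since $w$ is trilinear in $x,y,z$, its partial derivatives satisfy $\partial w / \partial z_c = \sum_{a \in A,\, b \in B,\, a+b=c} x_a y_b = (x \conv y)[c]$ for every $c \in C$, independently of the values of the $z$-variables. Applying \cref{thm:baur-strassen} to $\Gamma_1$ then produces, in linear time, a circuit $\Gamma_2$ of size \smash{$\widetilde\Order(|A| + |C-B|)$} that simultaneously outputs $\partial w / \partial z_c$ for all $c \in C$. Evaluating $\Gamma_2$ on the given vectors $x, y$ (with arbitrary $z$) yields all the answers.

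The chief obstacle, as already flagged in the overview, lies entirely in the first step: the BFN circuit internally uses FFT subcircuits, which require a root of unity of order at least $N$, so the ground ring cannot be $\Int$. A deterministic near-linear-time algorithm must therefore (i) locate a prime $p$ together with an element of multiplicative order at least $N$ in $\Field_p$, and (ii) recover the true integer convolution values (of absolute value at most $\Order(n \Delta^2)$) from their residues. Following~\cite{BringmannFN22}, I plan to combine a deterministically constructed family of $\Order(\log(N\Delta))$ small primes via the Chinese remainder theorem, together with deterministic root-of-unity search within each prime. Propagating these costs through the \smash{$\widetilde\Order(|A| + |C-B|)$} gates of $\Gamma_2$, and charging each arithmetic operation on $\Order(\log(N\Delta))$-bit values a cost of $\Order(\log(N\Delta) \polyloglog(N\Delta))$ via fast integer multiplication, accounts for the final running time $\Order((|A| + |C-B|) \log^5(N\Delta) \polyloglog(N\Delta))$.
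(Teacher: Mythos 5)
Your high-level plan matches the paper's: build the BFN circuit, append fresh inputs $x_a$ and a single trilinear output $w = \sum_{a+b=c} x_a y_b z_c$, apply the Baur--Strassen theorem (\cref{thm:baur-strassen}) to read out $\partial w/\partial z_c = (x\conv y)[c]$, and then run over $\Order(\log(N\Delta))$ primes and reconstruct via CRT. So the structural decomposition is essentially the paper's (\cref{lem:sparse-conv-circuit,lem:partial-conv-modulo-p,lem:partial-conv-superset}).

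However, your plan has a genuine gap at exactly the point you defer: ``deterministic root-of-unity search within each prime.'' If the primes $p_1,\dots,p_r$ are of size $\polylog(N\Delta)$ (which is what you need in order to construct $r = \Order(\log(N\Delta))$ of them deterministically in polylogarithmic time), then $\Field_{p_i}^*$ is a cyclic group of order $p_i - 1 \ll N$, so it simply contains \emph{no} element of multiplicative order $\geq N$ --- there is nothing to search for. Conversely, if you insist on primes $p > N$, the paper explicitly notes that the best known deterministic prime-finding algorithm runs in time $N^{1/2+o(1)}$, which blows the budget. The paper's resolution, which is the key technical step of \cref{sec:conv}, is to not work in a prime field at all: \cref{lem:large-order} (Cheng's construction) deterministically produces, for a tiny prime $p \approx \log(N\Delta)$, an irreducible $f \in \Field_p[X]$ of degree $p-1$ together with an $\omega \in \Field_p[X]/\langle f\rangle = \Field_{p^{p-1}}$ whose order exceeds $2^p > N$. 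The entire arithmetic circuit from \cref{lem:sparse-conv-circuit} is then built and evaluated over the \emph{extension field} $\Field_{p^{p-1}}$, and the integer answers are recovered componentwise modulo $p$ (not modulo $p^{p-1}$) before CRT. Without this field-extension idea your plan cannot be made to work. A secondary omission: your circuit $\Gamma_0$ ``has outputs $w_a$ for every $a \in C-B$,'' but constructing it first requires knowing $C-B$; the paper handles this by a preliminary deterministic sparse-convolution call~\cite{BringmannFN22} computing $T = C - B$ in time $\Order(|C-B|\log^5(N)\polyloglog(N))$, which is in fact where the fifth log factor in the claimed bound comes from.
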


The algorithm is algebraic in nature and requires some background on algebraic complexity. To this end, we start with a detailed description of arithmetic circuits in \cref{sec:conv:sec:circuits}, and then progressively develop the algorithm in \cref{sec:conv:sec:sparse-conv,sec:conv:sec:large-order,sec:conv:sec:partial-conv}.

\subsection{Arithmetic Circuits} \label{sec:conv:sec:circuits}
The basic algebraic model of computation that we are going to work with is \emph{arithmetic circuits}. While we are ultimately interested in algorithms in the RAM model, it is necessary to phrase major parts of our algorithm in terms of arithmetic circuits so that we can make use of the Baur-Strassen theorem (which only applies in this restricted model). We start with the basic definitions.

\begin{definition}[Arithmetic Circuit]
An \emph{arithmetic circuit $C$} over the field $\Field$ and the variables $x_1, \dots, x_n$ is a directed acyclic graph as follows. The nodes are called \emph{gates}, and are of the following two types: Each gate either has in-degree $0$ and is labeled with a variable $x_i$ or a constant $\alpha \in \Field$, or it has in-degree $2$ and is labeled with an arithmetic operation (${+}, {-}, {\times}, {/}$).
\end{definition}

We refer to the gates labeled by variables $X_i$ as \emph{input gates}, by constants as \emph{constant gates} and by an operation ${\circ} \in \set{{+}, {-}, {\times}, {/}}$ as \emph{$\circ$-gates}. A gate with out-degree $0$ is called an \emph{output gate.} The \emph{size} of an arithmetic circuit $C$, denoted by $|C|$, is the number of gates plus number of edges in $C$. Note that each gate in an arithmetic circuit computes a rational function $f(x_1, \dots, x_n)$ in a natural way: Input gates compute~$x_i$, constant gates compute the constant function $\alpha$, and each, say, $\times$-gate computes the product of the functions computed by its two incoming gates. We say that an arithmetic circuit computes functions~\makebox{$f_1, \dots, f_m$} if there are $m$ output gates computing these respective functions.

\paragraph{Partial Derivatives and the Baur-Strassen Theorem}
To a rational function $f(x_1, \dots, x_n)$ over an arbitrary field, we can naturally associate the formal partial derivatives \smash{$\frac{\partial f}{\partial x_i}(x_1, \dots, x_n)$} defined by the basic derivative rules ($(f + g)' = f' + g'$, $(f g)' = f' g + f g'$ and $(f / g)' = (f' g - f g') / g^2$). The Baur-Strassen theorem provides a way to efficiently compute all partial derivatives of a function computed by an arithmetic circuit:

\thmbaurstrassen*

\paragraph{Arithmetic Circuits for Transposed Vandermonde Matrices}
Another ingredient to our algorithm is the following well-known theorem about the algebraic complexity of evaluating matrix-vector products of \emph{transposed Vandermonde matrices:}

\begin{lemma}[Transposed Vandermonde Systems] \label{lem:transposed-vandermonde}
Let $\Field$ be a field, let $a_1, \dots, a_n \in \Field$ be pairwise distinct and consider the transposed Vandermonde matrix
\begin{equation*}
	V = 
	\begin{bmatrix}
		1 & 1 & \cdots & 1 \\
		a_1 & a_2 & \cdots & a_n \\
		a_1^2 & a_2^2 & \cdots & a_n^2 \\
		\vdots & \vdots & \ddots & \vdots \\
		a_1^{n-1} & a_2^{n-1} & \cdots & a_n^{n-1}
	\end{bmatrix}.
\end{equation*}
Then $V$ has full rank, and there is are arithmetic circuits with inputs $x = (x_1, \dots, x_n)$ computing the linear functions $V x$ and $V^{-1} x$. Both circuits have size $\Order(n \log^2 n)$ can be constructed in time $\Order(n \log^2 n)$.
\end{lemma}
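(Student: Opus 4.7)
The plan is to reduce this lemma to two classical facts from fast polynomial arithmetic---multipoint evaluation and interpolation---and then transpose the resulting circuits by invoking \cref{thm:baur-strassen}. I first observe that $V = E^{\top}$, where $E$ is the evaluation matrix $E_{ij} = a_i^{j-1}$: writing $p_c(z) = c_1 + c_2 z + \cdots + c_n z^{n-1}$, we have $(E c)_i = p_c(a_i)$. The full-rank claim follows from the classical Vandermonde determinant $\det(V) = \prod_{i < j}(a_j - a_i) \neq 0$, which is nonzero since the $a_i$ are pairwise distinct.

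Next, I would build an arithmetic circuit $C_E$ of size $O(n \log^2 n)$ computing $Ec$ via the standard multipoint evaluation algorithm: construct the subproduct tree of the monic polynomial $\prod_i (z - a_i)$ using FFT-based polynomial multiplication, then compute $p_c \bmod M_v$ at each node $v$ in a top-down sweep via fast polynomial remainder. Because each subproduct polynomial $M_v$ is monic, polynomial division requires only $+$ and $\times$ gates, and both the total circuit size and its construction time are $O(n \log^2 n)$. Analogously, I would build a circuit $C_I$ of size $O(n \log^2 n)$ computing $E^{-1} y$ via fast polynomial interpolation, which reuses the same subproduct tree together with the precomputed constants $M'(a_i)$ and a divide-and-conquer combination step.

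With these two circuits in hand, I would transpose them using \cref{thm:baur-strassen}. Consider the bilinear function $g(x, c) = x^{\top}(E c) = \sum_i x_i (E c)_i$. Appending a size-$O(n)$ inner product on top of $C_E$ produces a circuit of size $O(n \log^2 n)$ computing $g$, and \cref{thm:baur-strassen} then yields a circuit of the same asymptotic size that simultaneously computes every partial derivative of $g$. Since $\partial g / \partial c_j = (E^{\top} x)_j = (V x)_j$, I retain only the output gates indexed by $c$ and hardcode the auxiliary inputs $c$ to $0$; this is valid because $g$ is bilinear, so these derivatives are independent of $c$. The same template, applied to $h(x, y) = x^{\top}(E^{-1} y)$ starting from $C_I$, produces an arithmetic circuit for $V^{-1} x = (E^{-1})^{\top} x$ of size $O(n \log^2 n)$, constructible in the same time.

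The main technical subtlety I would need to address is the presence of divisions in the interpolation branch. The circuit $C_I$ intrinsically divides by the nonzero constants $M'(a_i) = \prod_{j \neq i}(a_i - a_j)$, and the Baur--Strassen construction applied to a division gate $u/v$ introduces further divisions by $v$ and $v^2$ through the chain rule. In our setting, however, every denominator $v$ is a field constant depending only on the pairwise distinct $a_i$ rather than on any variable input, so all these divisions remain divisions by nonzero constants; consequently, the resulting circuit is well-defined and hardcoding the auxiliary inputs is safe. Apart from this bookkeeping, the argument is a clean assembly of standard ingredients.
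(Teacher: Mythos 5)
Your proof is correct, and the overall skeleton (reduce to multipoint evaluation and interpolation for the non-transposed Vandermonde matrix, then transpose) matches the paper's. The genuine difference lies in how the transposition is performed: the paper cites the classical \emph{transposition principle} (Tellegen's principle) as a black box, whereas you implement the transposition by forming the bilinear form $g(x,c) = x^{\top} E c$ and invoking the Baur--Strassen theorem to extract $\partial g/\partial c_j = (V x)_j$. These two routes are closely related---the transposition principle for linear circuits can be derived as a corollary of Baur--Strassen---but they are distinct tools, and the paper's reference is the more elementary of the two (it applies only to linear maps and does not need to reason about derivatives of division gates). Your route is arguably more self-contained within this paper, since Baur--Strassen is already the central tool of \cref{sec:conv}, and you correctly handle the one pitfall it introduces: the interpolation circuit contains divisions by the constants $M'(a_i)$, and the derivative circuit inherits divisions by $v$ and $v^2$; because those $v$ are field constants independent of all circuit inputs, these gates are well-defined and unaffected when you subsequently hardcode $c=0$. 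One small stylistic simplification you could make is to note that divisions by the precomputed constants $M'(a_i)$ can be replaced by multiplications with their reciprocals (which are themselves constants), so $C_E$ and $C_I$ can both be taken to be division-free, which removes the need for the careful bookkeeping in your final paragraph.
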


This lemma can be proven in several ways, see e.g.~\cite{KaltofenL88,Li00,Pan01}. The perhaps simplest way is to observe that the same statement for non-transposed Vandermonde matrices is better known as multipoint evaluation and interpolation of univariate polynomials, which is a textbook result~\cite{vonzurGathenG13}. Moreover, by the \emph{transposition principle}~\cite[Theorem~3.4.1]{Pan01} it is known that the complexities of evaluating a linear map and its transpose are asymptotically the same.

\subsection{Arithmetic Circuits for Sparse Convolution} \label{sec:conv:sec:sparse-conv}
The following lemma is implicit in \cite{BringmannFN22}, but since in their paper Bringmann et al.\ were not forced to explicitly construct arithmetic circuits, we quickly repeat the construction.

\begin{lemma}[Sparse Convolution Circuit] \label{lem:sparse-conv-circuit}
Let $A, B, T \subseteq [N]$ such that $A + B \subseteq T$, let $\Field$ be a field and let~$\omega \in \Field$ have multiplicative order at least~$N$. Then we can construct an arithmetic circuit over $\Field$ with inputs $x_a$ (for~$a \in A$) and $y_b$ (for~$b \in B$), and outputs $z_c$ (for $c \in T$) defined by
\begin{equation*}
	z_c = \sum_{\substack{a \in A\\b \in B\\a + b = c}} x_a \cdot y_b.
\end{equation*}
The circuit has size $\Order(|T| \log^2 |T|)$, and it takes time $\Order(|T| \log^2 |T| + |T| \log N)$ to construct the circuit.
\end{lemma}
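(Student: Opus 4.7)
The plan is to mimic the FFT-based convolution algorithm, but using a transposed Vandermonde system tailored to the sparse supports $A$, $B$ and the output superset $T$. Set $n := |T|$. The crucial observation is that since $\omega$ has multiplicative order at least $N$ and $T \subseteq [N]$, the powers $\omega^c$ for $c \in T$ are pairwise distinct; moreover, any $n$ distinct exponents drawn from $[N]$ yield $n$ distinct powers of $\omega$. This distinctness is exactly what will license our invocations of \cref{lem:transposed-vandermonde}.

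In the first phase I would build a circuit computing the evaluations $\hat{X}[i] := \sum_{a \in A} x_a\, \omega^{ia}$ for $i \in [n]$. Since the underlying $n \times |A|$ matrix is rectangular, I would extend $A$ to an arbitrary superset $A' \subseteq [N]$ of size $n$ (possible because $|T| \leq N$), and pad the input with zero constant gates on $A' \setminus A$. The resulting $n \times n$ matrix is then a transposed Vandermonde matrix with nodes $\omega^{a'}$ for $a' \in A'$, which are pairwise distinct by the observation above. \cref{lem:transposed-vandermonde} then furnishes an arithmetic circuit of size $\Order(n \log^2 n)$ computing all $\hat{X}[i]$. An analogous construction, with $B$ extended to some $B' \subseteq [N]$ of size $n$, produces a circuit of the same size computing $\hat{Y}[i] := \sum_{b \in B} y_b\, \omega^{ib}$.

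In the second phase I would introduce $n$ multiplication gates forming the pointwise products $\hat{Z}[i] := \hat{X}[i] \cdot \hat{Y}[i]$. Because $X(u)\,Y(u) = \sum_c z_c u^c$ and $\supp(z) \subseteq A+B \subseteq T$, one has $\hat{Z}[i] = \sum_{c \in T} z_c\, \omega^{ic}$, i.e.\ a square transposed Vandermonde system with distinct nodes $\omega^c$ for $c \in T$. Appending the \emph{inverse} circuit from \cref{lem:transposed-vandermonde} contributes another $\Order(n \log^2 n)$ gates and emits exactly the desired outputs $z_c$ for $c \in T$. Concatenating the three sub-circuits yields the stated size bound $\Order(|T| \log^2 |T|)$.

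For the construction time, each of the three transposed-Vandermonde circuits is built in $\Order(n \log^2 n)$ by \cref{lem:transposed-vandermonde}. Additionally, we must precompute the nodes $\omega^{a'}$ for $a' \in A'$, $\omega^{b'}$ for $b' \in B'$, and $\omega^c$ for $c \in T$; each such value costs $\Order(\log N)$ field operations by fast exponentiation, contributing $\Order(n \log N)$ overall. Summing these gives the announced $\Order(|T| \log^2 |T| + |T| \log N)$ construction time. No step presents a serious obstacle; the only detail to verify is that the padding sets $A'$ and $B'$ can be chosen inside $[N]$, which is immediate since $|T| \leq N$, so all Vandermonde nodes are genuinely distinct.
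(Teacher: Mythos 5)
Your proof is correct and takes essentially the same route as the paper: a transposed Vandermonde evaluation step for each of $x$ and $y$, a pointwise product, and an inverse transposed Vandermonde step, all justified by the distinctness of powers of $\omega$ over exponents in $[N]$. The only cosmetic difference is that the paper first pads $A$ and $B$ into $T$ and reuses one Vandermonde matrix $V$ (with nodes $\{\omega^c : c \in T\}$) for all three steps, whereas you pad $A, B$ to independent supersets $A', B'$ and use three different node sets — both yield the same size and construction-time bounds.
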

\begin{proof}
Let us assume that $A, B \subseteq T$; otherwise simply include the missing elements which does not increase the size of $T$ asymptotically. We write $T = \set{c_1, \dots, c_t}$ where $t = |T|$. The circuit is constructed in several steps. Our first step is to precompute all powers $\omega^{c_1}, \dots, \omega^{c_t}$. Note that these powers are pairwise distinct since $\omega$ has multiplicative order at least~$N$ and $c_1, \dots, c_t < N$. Writing
\begin{equation*}
	V = 
	\begin{bmatrix}
		1 & 1 & \cdots & 1 \\
		\omega^{c_1} & \omega^{c_2} & \cdots & \omega^{c_t} \\
		\vdots & \vdots & \ddots & \vdots \\
		\omega^{(t-1)c_1} & \omega^{(t-1)c_2} & \cdots & \omega^{(t-1)c_t}
	\end{bmatrix},
\end{equation*}
by \cref{lem:transposed-vandermonde} we can compute arithmetic circuits that respectively compute the matrix-vector products with $V$ and $V^{-1}$. Using these circuits, our first step is to construct circuits computing the values~$\widehat x_i$ and~$\widehat y_i$~(for~$i \in [t]$) defined by
\begin{equation*}
	\widehat x_i = \sum_{a \in A} \omega^{ia} \cdot x_a, \quad \widehat y_j = \sum_{b \in B} \omega^{jb} \cdot y_b.
\end{equation*}
Indeed, these circuits follow immediately, since $\widehat x = (\widehat x_0, \dots, \widehat x_{t-1})$ is the result of the matrix-vector product~$V$ times the length-$t$ vector with $i$-th entry $x_{c_i}$ if $c_i \in A$ and $0$ otherwise (and similarly for~$\widehat y$). Next, we compute for all $i \in [t]$ the values
\begin{equation*}
	\widehat z_i = \widehat x_i \cdot \widehat y_i = \parens*{\sum_{a \in A} \omega^{ia} \cdot x_a} \parens*{\sum_{b \in B} \omega^{ib} \cdot y_b} = \sum_{\substack{a \in A\\b \in B}} \omega^{i(a + b)} x_a \cdot y_b = \sum_{c \in T} \omega^{ic} \cdot z_c,
\end{equation*}
where $z_c$ is as in the lemma statement. Observe that similarly to before we can express $\widehat z = (\widehat z_0, \dots, \widehat z_{t-1})$ as the matrix-vector product $V \cdot (z_{c_1}, \dots, z_{c_t})^T$. It follows that $(z_{c_1}, \dots, z_{c_t})$ can be computed via the matrix-vector product $V^{-1} \widehat z^T$ for which we also have an arithmetic circuit available. This completes the description of the sparse convolution circuit.

It was constructed by appropriately composing three copies of the circuits constructed in \cref{lem:transposed-vandermonde} all of which have size $\Order(t \log^2 t)$ and that can be constructed in the same time. The only other contribution to the running time was the precomputation of the powers $\omega^{c_1}, \dots, \omega^{c_t}$ that takes $\Order(t \log N)$ by repeated squaring.
\end{proof}

\subsection{Finding Large-Order Elements} \label{sec:conv:sec:large-order}
A crucial ingredient of \cref{lem:sparse-conv-circuit} is that we need to provide a field element $\omega$ with large multiplicative order. It is known that in any finite field $\Field_p$, an almost-constant fraction of elements is primitive (i.e., has maximum multiplicative order $p-1$), and therefore we can find an element with large order in randomized time \smash{$\widetilde\Order(1)$} by sampling. Unfortunately, the deterministic construction of such elements is known to be a notoriously hard problem. Even worse: For the field $\Field_p$ to contain an element with multiplicative order at least~$N$, we have to find a prime $p$ of size at least~$N$. Again, this is in polylogarithmic randomized time, but the best deterministic algorithm takes time $N^{\frac12+\order(1)}$~\cite{LagariasO87,TaoCH12}.

Luckily, a line of research has investigated how to construct large-order elements in finite fields of prime power order $q = p^d$ with comparably small characteristic $p$; see e.g.~\cite{Shoup90,Shparlinski96,Gao99,Cheng05,Cheng07}. Before we go into more details, let us first recap some basics on how to compute with finite fields in the RAM model.

\paragraph{Finite Field Arithmetic}
Let $q = p^d$ be a prime power. Recall that the finite field $\Field_p$ can be represented as $\Int / p \Int$, the integer modulo $p$. The finite field $\Field_q$ can be represented as $\Field_p[X] / \ideal f$ where $f \in \Field_p[X]$ is an irreducible polynomial of degree $d$. In the word RAM model with words storing numbers up to $p$, we can thus represent field elements from $\Field_q$ as length-$d$ lists of coefficients from $\Field_p$. Using basic polynomial arithmetic, we can compute the field operations (${+}, {-}, {\times}, {/}$) of~$\Field_q$ in time \smash{$\widetilde\Order(\log q)$}.

\paragraph{Constructing Large-Order Elements}
Almost any of the results~\cite{Shoup90,Shparlinski96,Gao99,Cheng05,Cheng07} is sufficient for our purposes. We will use a result of Cheng~\cite{Cheng07}, which gives the best running time in terms of lower-order factors $\log N$. See also \cite[Lemmas~8,~9]{BringmannFN22} for a compact proof.

\begin{lemma}[Constructing Large-Order Elements~\cite{Cheng07}] \label{lem:large-order}
Let $p \geq 7$ be a prime. In time $\widetilde\Order(p)$ we can construct an irreducible polynomial $f \in \Field_p[X]$ of degree $p-1$, and an element $\omega \in \Field_p[X] / \ideal f = \Field_{p^{p-1}}$ of order at least $2^p$.
\end{lemma}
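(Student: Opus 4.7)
The plan is to handle the two parts of the statement separately: first, construct the irreducible polynomial $f \in \Field_p[X]$ of degree $p-1$, and second, exhibit an element $\omega$ in the resulting field $\Field_{p^{p-1}}$ of multiplicative order at least $2^p$. Both tasks should be carried out in deterministic time $\widetilde\Order(p)$ using only elementary arithmetic in $\Field_p$ and small extensions, each operation costing $\widetilde\Order(\log p)$ word operations.

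For the polynomial $f$, a clean approach is the Kummer construction. I would first factorize $p-1$ by trial division in time $\widetilde\Order(\sqrt p)$, then enumerate $a = 2, 3, \dots$ and test each for being a primitive root of $\Field_p^*$ by checking $a^{(p-1)/q} \neq 1$ for every prime divisor $q$ of $p-1$; the first primitive root $\alpha$ is found within time $\widetilde\Order(p)$. The standard irreducibility criterion for binomials over a finite field (every prime divisor of $n$ must divide the order of the constant term but not $(p-1)/\operatorname{ord}(\alpha)$, together with a technical condition when $4 \mid n$) guarantees that $f(X) := X^{p-1} - \alpha$ is irreducible over $\Field_p$, precisely because $\alpha$ has maximal order $p-1$. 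This yields an explicit representation $\Field_{p^{p-1}} = \Field_p[X]/\ideal f$.

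For the element $\omega$, the naive choice $\omega = X \bmod f$ has multiplicative order only $(p-1)^2 < 2^p$, so more work is required. Following Cheng's construction, I would take $\omega$ to be a specific low-complexity combination such as $\omega = X + 1$ (or a Gauss-period-like sum built from small shifts of $X$) and prove a lower bound on its order by a counting argument. Concretely, one exhibits $2^p$ distinct field elements, each expressible as a short polynomial in $\omega$, that all lie in the multiplicative orbit $\set{\omega^k : k \in \Int_{> 0}}$; this forces the orbit size, i.e.\ the order of $\omega$, to be at least $2^p$. The assumption $p \geq 7$ ensures the field is large enough that there is room for $2^p$ such distinct images among polynomials of degree $< p-1$.

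The main obstacle is the order lower bound in the second step: proving that the $2^p$ candidate powers of $\omega$ are pairwise distinct requires a nontrivial algebraic argument and is the technical heart of \cite{Cheng07}. I would follow the compact presentation given in \cite[Lemmas~8,~9]{BringmannFN22}, which packages Cheng's construction in a self-contained manner suitable for precisely this application. Once $\omega$ is identified, verifying the order bound and extracting $\omega$ as a concrete element of $\Field_p[X]/\ideal f$ are both routine, and a careful bookkeeping of the field arithmetic keeps the total running time within the $\widetilde\Order(p)$ budget.
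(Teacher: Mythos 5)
The paper does not itself prove this lemma; it is stated as a black-box citation to Cheng~\cite{Cheng07}, with a pointer to \cite[Lemmas~8,~9]{BringmannFN22} for a compact self-contained proof. Your plan unpacks exactly that construction: finding a primitive root $\alpha$ of $\Field_p^*$ by brute force in $\widetilde\Order(p)$ time, taking the Kummer polynomial $f(X) = X^{p-1} - \alpha$ (irreducible by the standard binomial criterion precisely because $\alpha$ has maximal order $p-1$), choosing $\omega = X + 1$, and lower-bounding its order via the Gao/Cheng counting argument (that the Frobenius images $\omega^{p^j} = 1 + \alpha^j\theta$ and their subset-products form exponentially many distinct powers of $\omega$), deferring to \cite[Lemmas~8,~9]{BringmannFN22} for the last step just as the paper does. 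This matches the cited route faithfully and is correct.
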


\subsection{Partial Convolution} \label{sec:conv:sec:partial-conv}
We are finally ready to give our algorithm for the Partial Convolution problem. Recall that in this problem, we consider three sets $A, B, C \subseteq [N]$ and two vectors $x, y \in \Int^N$ with $\supp(x) \subseteq A$ and $\supp(y) \subseteq B$. The task is to compute the convolution vector $x \conv y$ \emph{restricted} to the positions in $C$. 

We start to solve a relaxation of this problem; see the following \cref{lem:partial-conv-modulo-p}. Here the goal is to compute the numbers \emph{modulo $p$}, for some prime $p$, and we can additionally assume that a small superset $T$ of $A + B$ is known. We remove these assumptions later in \cref{lem:partial-conv-superset,thm:partial-conv}.

\begin{lemma}[Partial Convolution modulo Prime] \label{lem:partial-conv-modulo-p}
Let $A, B, C, T \subseteq [N]$ and $x, y \in \Int^N$ with $\supp(x) \subseteq A$, $\supp(y) \subseteq B$ and $C - B \subseteq T$. Let $p > \log N$ be a prime. Then we can compute $(x \conv y)[c] \bmod p$ for all $c \in C$ in deterministic time $\Order((|A| + |T|) \log^2(N) \cdot p \polylog(p))$.
\end{lemma}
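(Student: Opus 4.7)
The plan is to carry out the strategy outlined in \cref{sec:introduction:sec:technical}. I would construct an arithmetic circuit computing, as a formal polynomial, the cubic form
\begin{equation*}
w \;=\; \sum_{\substack{a \in A,\, b \in B,\, c \in C\\ a + b = c}} x_a \, y_b \, z_c,
\end{equation*}
apply the Baur-Strassen theorem to obtain a same-size circuit that simultaneously computes all of its partial derivatives, and evaluate the resulting derivative circuit at the integer inputs reduced modulo $p$ (with arbitrary values assigned to the $z_c$). Because $\partial w/\partial z_c = (x \conv y)[c]$ as a polynomial in $x, y$, the evaluation produces exactly the desired outputs.

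For the arithmetic, I would first invoke \cref{lem:large-order} on a prime of size $\Theta(p)$ to obtain in time $\widetilde\Order(p)$ the field $\Field_q = \Field_{p^{p-1}}$ together with an element $\omega \in \Field_q$ of multiplicative order at least $2^p > N$ (justified by the hypothesis $p > \log N$); each operation in $\Field_q$ then costs $\widetilde\Order(\log q) = \widetilde\Order(p \log p)$ word-RAM time. Next I would apply \cref{lem:sparse-conv-circuit} over $\Field_q$ with input sets $-B$ and $C$ (appropriately shifted into a nonnegative universe of size $\Order(N)$, which only affects the required order of $\omega$ by a constant factor) and target set $T \supseteq C - B$. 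This produces a circuit of size $\Order(|T| \log^2 |T|)$ whose outputs $w_a$, indexed by $a \in T$, satisfy $w_a = \sum_{b \in B,\, c \in C,\, c-b = a} y_b z_c$ as polynomials in $y, z$. I would then append $\Order(|A|)$ gates introducing fresh inputs $x_a$ for every $a \in A \cap T$ and accumulating the sum $w = \sum_{a \in A \cap T} x_a w_a$; this matches the target polynomial because any $a \in A \setminus T$ lies outside $C - B$ and so contributes $w_a = 0$ symbolically.

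Applying \cref{thm:baur-strassen} to this circuit of size $\Order(|A| + |T| \log^2 |T|)$ yields, at the same asymptotic cost, a circuit computing every $\partial w/\partial z_c$ simultaneously. Evaluating that derivative circuit in $\Field_q$ at $x_a \bmod p$, $y_b \bmod p$, and any values for the $z_c$ returns $(x \conv y)[c] \bmod p$ for every $c \in C$; the outputs automatically lie in $\Field_p \subseteq \Field_q$ since the derivative polynomials have integer coefficients and all substituted inputs come from $\Field_p$. The evaluation dominates the running time: $\Order((|A| + |T|) \log^2 N)$ gates at $\widetilde\Order(p \log p)$ per gate give a total of $\Order((|A| + |T|) \log^2(N) \cdot p \polylog(p))$, matching the claim; the $\widetilde\Order(p)$ field-setup and the $\Order(|T| \log N)$ precomputation of $\omega$-powers are absorbed.

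The main subtlety I expect is that the argument rests on \emph{polynomial} (not merely evaluation) correctness throughout the chain: \cref{lem:sparse-conv-circuit} must compute the $w_a$ as symbolic polynomials in $y, z$ (so that the constants arising from $\omega$ and the Vandermonde inverses cancel in the formal identity), and Baur-Strassen must operate on those formal derivatives without reference to the specific inputs later supplied. Both properties are inherited from the cited lemmas, so this subtlety is conceptual rather than technical. A minor bookkeeping nuisance is the reindexing required to embed $-B$ into a nonnegative index range, which doubles the effective universe to $2N$; the order bound $2^p > 2N$ still holds comfortably from $p > \log N$.
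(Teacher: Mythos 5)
Your proposal is correct and follows essentially the same route as the paper: build the sparse-convolution circuit from \cref{lem:sparse-conv-circuit} over $\Field_{p^{p-1}}$ with the large-order element from \cref{lem:large-order}, fold in the $x_a$ inputs to get the single output $w$, apply Baur-Strassen, and evaluate. The only additions beyond the paper's write-up are clarifications the paper leaves implicit (explicitly shifting $-B$ into a nonnegative universe, and noting that the outputs land in $\Field_p \subseteq \Field_q$), neither of which constitutes a different argument.
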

\begin{proof}
As a first step, we set up an appropriate finite field containing an element of large multiplicative order. To this end, we apply \cref{lem:large-order} to find an irreducible polynomial $f \in \Field_p[X]$ of degree $p - 1$, and an element $\omega \in \Field_p[X] / \ideal f$ whose multiplicative order is guaranteed to be at least $2^p$. Let us represent the finite field $\Field_{p^{p-1}}$ by $\Field_p[X] / \ideal f$; then we have access to an element $\omega \in \Field_{p^{p-1}}$ with order at least $2^p > N$.

Next, we will use $\omega$ to construct a sparse convolution circuit over $\Field_{p^{p-1}}$. Applying \cref{lem:sparse-conv-circuit} with~$-B$,~$C$ and $T$ (note that we indeed have $-B + C = C - B \subseteq T$) yields an arithmetic circuit with inputs $y_b$ (for~$b \in B$) and~$z_c$~(for~$c \in C$), and outputs $w_a$ (for~$a \in T$) computing the bilinear functions
\begin{equation*}
	w_a = \sum_{\substack{b \in B\\c \in C\\-b + c = a}} y_b \cdot z_c = \sum_{\substack{b \in B\\c \in C\\a + b = c}} y_b \cdot z_c.
\end{equation*}
We will modify this circuit in several steps. First, we add new input gates $x_a$ (for $a \in A$) to the circuit. Then, by adding the appropriate wires and a new output gate $w$, we let the circuit compute
\begin{equation*}
	w = \sum_{a \in A \cap T} x_a \cdot w_a = \sum_{\substack{a \in A \cap T\\b \in B\\c \in C\\a + b = c}} x_a \cdot y_b \cdot z_c = \sum_{\substack{a \in A\\b \in B\\c \in C\\a + b = c}} x_a \cdot y_b \cdot z_c.
\end{equation*}
We delete all output gates other than $w$, so that the circuit computes the single output $w$. This allows us to apply the Baur-Strassen theorem to construct an augmented circuit that computes the partial derivatives~$\frac{\partial w}{\partial z_c}$ for all $c \in C$. The description of the arithmetic circuit is complete. Note that it computes
\begin{equation*}
	\frac{\partial w}{\partial z_c} = \frac{\partial}{\partial z_c} \sum_{\substack{a \in A\\b \in B\\c \in C\\a + b = c}} x_a \cdot y_b \cdot z_c = \sum_{\substack{a \in A\\b \in B\\a + b = c}} x_a \cdot y_b. 
\end{equation*}
This is exactly the convolution function that we set out to compute. By plugging in the vector $x$ and $y$ into the same-named inputs $x_a$ and $y_b$, we can therefore read of the convolution $(x \conv y)[c]$ from the outputs $\frac{\partial w}{\partial z_c}$. Recall that all computations are modulo $p$, and thus we in fact only have access to $(x \conv y)[c] \bmod p$.

Let us finally analyze the running time. Constructing the finite field $\Field_{p^{p-1}}$ as well as the large-order element $\omega$ takes time \smash{$\widetilde\Order(p)$}, which is negligible. The construction of the sparse convolution circuit takes time~$\Order(|T| \log^2 N)$ and leads to a circuit of size~$\Order(|T| \log^2 N)$. All subsequent modifications, including adding outputs as well as applying the Baur-Strassen \cref{thm:baur-strassen}, do not increase the size of the circuit asymptotically and run in linear time in the circuit size, $\Order(|T| \log^2 N + |A|)$. Finally, evaluating the circuit takes time $\Order(|T| \log^2 (N) + |A|)$ plus $\Order(|T| \log^2(N) + |A|)$ field operations over $\Field_{p^{p-1}}$. Recall that each field operation over a finite field \smash{$\Field_{p^d}$} takes time \smash{$d \polylog(d)$}, and therefore the total running time is bounded by~\smash{$\Order((|A| + |T|) \log^2(N) \cdot p \polylog(p))$} as claimed. 
\end{proof}

We will now remove the restriction of the previous lemma that it computes the outputs modulo some prime $p$, by applying it repeatedly for several primes $p$, and by using the Chinese Remainder Theorem.

\begin{lemma}[Partial Convolution with Superset] \label{lem:partial-conv-superset}
Let $A, B, C, T \subseteq [N]$ and $x, y \in \Int^N$ with \makebox{$\supp(x) \subseteq A$}, \makebox{$\supp(y) \subseteq B$} and~\makebox{$C - B \subseteq T$}. Then we can compute $(x \conv y)[c]$ for all $c \in C$ in deterministic time $\Order((|A| + |T|) \log^4 (N \Delta) \polyloglog(N \Delta))$, where $\Delta$ is the largest entry in $x, y$ in absolute value.
\end{lemma}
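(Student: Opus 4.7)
The plan is to lift Lemma~\ref{lem:partial-conv-modulo-p}, which computes the outputs modulo a single prime $p$, to an integer-valued algorithm via the Chinese Remainder Theorem. The key observation is that each target value satisfies
\[
|(x \conv y)[c]| \;\leq\; \min(|A|,|B|) \cdot \Delta^2 \;\leq\; N \Delta^2,
\]
so it suffices to recover $(x \conv y)[c]$ modulo any integer $M \geq 2N\Delta^2 + 1$ and then, if the computed residue exceeds $M/2$, subtract $M$ to obtain the signed integer.

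Concretely, I would first deterministically enumerate primes by trial division starting just above $\log N$, collecting the smallest primes $p_1 < p_2 < \cdots < p_k$ whose product exceeds $2N\Delta^2 + 1$. By Chebyshev-type bounds on the prime counting function there are $\Theta(L / \log L)$ primes up to $L$, so all required primes lie in the range $(\log N,\, \Order(\log(N\Delta)))$ and the count is $k = \Order(\log(N\Delta)/\log\log(N\Delta))$. Finding them deterministically by trial division costs $\polylog(N\Delta)$, which is negligible. For each $p_i$, I then invoke Lemma~\ref{lem:partial-conv-modulo-p} on the same instance $(A, B, C, T, x, y)$ to obtain $(x \conv y)[c] \bmod p_i$ for every $c \in C$. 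Finally, for each $c \in C$ I perform a standard CRT reconstruction to recover $(x \conv y)[c] \bmod \prod_i p_i$, followed by the signed-residue correction described above. (The corner case $B = \emptyset$ can be handled separately, in which case all outputs are $0$; otherwise $C - B \subseteq T$ implies $|C| \leq |T|$.)

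For the running time, each call to Lemma~\ref{lem:partial-conv-modulo-p} with $p_i = \Order(\log(N\Delta))$ costs
\[
\Order\bigl((|A| + |T|) \log^2(N) \cdot \log(N\Delta)\, \polyloglog(N\Delta)\bigr).
\]
Summing over the $k = \Order(\log(N\Delta)/\log\log(N\Delta))$ primes gives
\[
\Order\bigl((|A| + |T|) \log^2(N) \cdot \log^2(N\Delta)/\log\log(N\Delta) \cdot \polyloglog(N\Delta)\bigr),
\]
and since $\log N \leq \log(N\Delta)$ and $1/\log\log(N\Delta)$ is absorbed into $\polyloglog(N\Delta)$, this matches the target bound $\Order((|A| + |T|)\log^4(N\Delta)\polyloglog(N\Delta))$. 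The CRT reconstruction, at $\Order(k^2) \leq \polylog(N\Delta)$ word operations per output over $|C| \leq |T|$ coordinates, is strictly dominated by the above.

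The main obstacle is purely a bookkeeping one: making sure the CRT lift does not inflate the running time beyond $\log^4(N\Delta)\polyloglog(N\Delta)$. This hinges on using many small primes of size $\Theta(\log(N\Delta))$ rather than a single large prime of size $\Omega(N\Delta^2)$, since in Lemma~\ref{lem:partial-conv-modulo-p} the prime $p$ enters the running time linearly. All other ingredients---deterministic enumeration of small primes, the elementary bound on convolution entries, and the signed-residue correction step---are entirely standard and should proceed without difficulty.
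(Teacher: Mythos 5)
Your proposal is correct and follows essentially the same approach as the paper's proof: both reduce to Lemma~\ref{lem:partial-conv-modulo-p} for a small collection of primes above $\log N$, each of size $\Theta(\log(N\Delta))$ up to $\log\log$ factors, and then recover the integer values by Chinese remaindering against the bound on $|(x\conv y)[c]|$. The only differences are cosmetic: you use trial division while the paper uses a sieve, you correctly note the entry bound is $N\Delta^2$ (the paper loosely writes $\Delta N$, which is immaterial since it takes $\lceil\log(2N\Delta)\rceil$ primes, more than enough), and you explicitly treat the signed-residue correction and the $B=\emptyset$ corner case; none of these change the argument or the final running time.
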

\begin{proof}
First, precompute a list of $r = \ceil{\log (2N\Delta)}$ distinct primes $p_1, \dots, p_r$ larger than $\log N$. By the Prime Number Theorem, we can pick primes $p_1, \dots, p_r$ of size at most $\Order(\log(N \Delta) \log\log (N \Delta))$. Moreover, we can find these primes in time $\Order(\log (N \Delta) \polyloglog(N \Delta))$ using for instance the Sieve of Eratosthenes.

We apply \cref{lem:partial-conv-modulo-p} $r$ times, for the sets $A, B, C, T$ and the primes $p = p_1, \dots, p_r$ respectively. We thereby compute, for each $c \in C$, the values $(x \conv y)[c] \bmod p_i$ (for $i = 1, \dots, r$). Since $(x \conv y)[c]$ is at most~$\Delta N$ in absolute value, by the Chinese Remainder Theorem these modular equations uniquely determine the integer value $(x \conv y)[c]$.

Finally, we analyze the running time.
Repeatedly calling \cref{lem:partial-conv-modulo-p} for $r = \Order(\log (N \Delta))$ times takes time $\Order((|A| + |T|) \log^4(N \Delta) \polyloglog(N \Delta))$. Afterwards, solving the congruences with the Chinese Remainder Theorem takes time \smash{$\Order(\log^2 (\prod_{i=1}^r p_i)) = \Order(\log^2(N) \polyloglog(N))$}~\cite{vonzurGathenG13} for each element $c \in C$, and thus negligible time~\makebox{$\Order(|C| \log^2(N \Delta) \polyloglog(N \Delta))$} in total. 
\end{proof}

Finally, we can remove the assumption that a small superset $T$ of $C - B$ is known, simply by precomputing the set $T = C - B$ exactly. To this end, we use the deterministic sparse convolution algorithm by Bringmann, Fischer and Nakos~\cite{BringmannFN22} that runs in time $\Order(|C - B| \log^5(N) \polyloglog(N))$. All in all, we obtain the claimed \cref{thm:partial-conv}.

\subsection{Corollary for 3SUM}
As a direct consequence of \cref{thm:partial-conv} we obtain an algorithm for the \#AllNumbers3SUM problem, running in time $\widetilde\Order(n + \min\set{|A + B|, |A + C|, |B + C|})$ (see \cref{thm:all-numbers-3sum}).

\begin{proof}[Proof of \cref{thm:all-numbers-3sum}]
It suffices to solve the problem in time $\widetilde\Order(|A + B| + |C|)$; the other running times can be obtained by exchanging the roles of $A, B, C$. The output of the problem consists of the following three types of queries:
\begin{enumerate}
\item For all $a \in A$, compute $\abs{\set{(b, c) \in B \times C : -c - b = a}}$.
\item For all $b \in B$, compute $\abs{\set{(a, c) \in A \times C : -c - a = b}}$.
\item For all $c \in C$, compute $\abs{\set{(a, b) \in A \times B : a + b = -c}}$.
\end{enumerate}
It is known how to solve the third type using sparse convolution algorithms~\cite{BringmannFN22}: In time $\widetilde\Order(|A + B|)$ we can compute the sumset $A + B$ with multiplicities. Afterwards we can read off, for each $c \in C$, the multiplicity of $-c$ in the sumset $A + B$.

The first and second types require our new technology. Specifically, for the queries of the first type, we apply \cref{thm:partial-conv} with $A' = -C$, $B' = -B$ and $C' = A$, and let $x = \One_{-C}$ and $y = \One_{-B}$ be the indicator vectors of $A' = -C$ and $B' = -B$, respectively. (Here, for simplicity we assume that vectors can be indexed by negative numbers. This requirement can easily be avoided by shifting all sets by a common large offset.) The output consists of $(x \conv y)[a]$, for all $a \in C' = A$. By construction, we have
\begin{equation*}
	(x \conv y)[a] = \sum_{\substack{i, j \in \Int\\i + j = a}} \One_{-C}(i) \cdot \One_{-B}(j) = \abs{\set{(b, c) \in B \times C : -c - b = a}},
\end{equation*}
which is exactly as desired. The running time is bounded by $\widetilde\Order(|A'| + |C' - B'|) = \widetilde\Order(|C| + |A + B|)$. Finally, the queries of the second type can be computed analogously (by exchanging $A$ and $B$).
\end{proof}
\section{Corollaries for Point and String Pattern Matching} \label{sec:corollaries}
By combining the previous two sections (\cref{lem:constellation-to-partial-conv,lem:constellation-mismatches-to-partial-conv} with \cref{thm:partial-conv}), we have completed the proofs of our two Main \cref{thm:constellation,thm:constellation-mismatches}. Moreover, since \cref{lem:constellation-mismatches-to-partial-conv} works in the presence of \emph{weighted} mismatches, we obtain the following result which summarizes our Main Theorems in their strongest form:

\begin{theorem}[Deterministic Constellation with Mismatches] \label{thm:constellation-weights}
Given sets $P, Q \subseteq [N]$ of size at~most~$n$, positive integer weights $w : P \to \Int_{> 0}$ and $0 \leq k \leq (1 - \Omega(1))w(P)$, we can list all shifts $s$ with $\sum_{p \in P : p + s \not\in Q} w(p) \leq k$ in deterministic time $\Order(n k \polylog(N))$.
\end{theorem}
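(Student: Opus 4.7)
The plan is to obtain this theorem as an immediate corollary of the machinery already assembled in the paper: \cref{lem:constellation-mismatches-to-partial-conv} reduces the Constellation problem with weighted mismatches to Partial Convolution, and \cref{thm:partial-conv} solves Partial Convolution deterministically in near-linear time. Since the statement permits \emph{weighted} mismatches and the reduction in \cref{lem:constellation-mismatches-to-partial-conv} was already phrased in that generality, the proof is essentially a direct composition; no new ideas are needed.

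First I would instantiate \cref{thm:partial-conv}, which yields Partial Convolution in deterministic time $\Order((|A| + |C - B|) \log^5(N\Delta) \polyloglog(N\Delta))$, where $\Delta$ bounds the magnitudes of entries in the input vectors. In the word-RAM model with word size $\Theta(\log N)$ fixed in \cref{sec:preliminaries}, the positive integer weights $w(p)$ fit in $\Order(1)$ words, so $\Delta \leq \poly(N)$ and this bound simplifies to $\Order((|A| + |C - B|) \log^5 N \polyloglog N)$. In particular, \cref{thm:partial-conv} supplies a Partial Convolution oracle of the form required by \cref{lem:constellation-mismatches-to-partial-conv} with $c = \Order(1)$.

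Next I would plug this oracle into \cref{lem:constellation-mismatches-to-partial-conv}. The hypothesis $0 \leq k \leq (1 - \Omega(1)) w(P)$ in the theorem matches the hypothesis of the lemma verbatim, and the lemma transforms any Partial Convolution algorithm of the stated form into a deterministic algorithm for Constellation with weighted mismatches running in time $\Order(nk \log^{c+1} N)$. Combined with the bound from the previous step, this yields total time $\Order(nk \log^6 N \polyloglog N) = \Order(nk \polylog(N))$, matching the claim.

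The main obstacle, if one can call it that, is purely bookkeeping: verifying that the specific vectors produced inside the reduction, namely $x = \One_Q$ and the vector $y$ with $y[-p] = w(p)$, have entries of size at most $\poly(N)$ so that the running time of \cref{thm:partial-conv} collapses as above, and checking that the recursive use of the reduction preserves the condition $k \leq (1 - \Omega(1)) w(P)$ (which was already verified inside the proof of \cref{lem:constellation-mismatches-to-partial-conv} using the invariant $w'(P') = w(P)$). No genuine technical difficulty remains beyond what has already been resolved in \cref{sec:constellation,sec:conv}.
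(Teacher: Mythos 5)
Your proposal is correct and follows the paper's approach exactly: the paper obtains \cref{thm:constellation-weights} by the same direct composition of \cref{lem:constellation-mismatches-to-partial-conv} with \cref{thm:partial-conv}, noting as you do that the reduction was phrased for weighted mismatches from the outset. The one small bookkeeping point you flag---that the entries of the convolution instance (here, the weights) must be $\poly(N)$-bounded so that $\log(N\Delta) = \Order(\log N)$---is indeed required, and is already built into the paper's convention that Partial Convolution inputs have $\poly(N)$-bounded entries.
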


In the remainder of this section, we describe how to obtain \crefrange{cor:ppm-translations-exact}{cor:sparse-wildcard-matching}.

\subsection{Preliminaries on Point Pattern Matching} \label{sec:corollaries:sec:ppm}
We start with a systematic definition of the Point Pattern Matching problems that we consider. Consider the $d$-dimensional Euclidian space, and let $\norm{\cdot}$ denote the Euclidian norm. For two finite sets $P, Q \subseteq \Real^d$, the \emph{directed Hausdorff distance} from $P$ to $Q$ is defined as
\begin{equation*}
    h(P, Q) = \max_{p \in P} \min_{q \in Q} \norm{p - q},
\end{equation*}
and the \emph{undirected Hausdorff distance} between $P$ and $Q$ is defined as
\begin{equation*}
    H(P, Q) = \max\set{h(P, Q), h(Q, P)}.
\end{equation*}
Throughout we fix a group of \emph{transformations}~$T$. We consider two types of transformations: translations (in which case $T$ is the additive group $\Real^d$) and rigid motions---that is, translations, rotations and possibly reflections (in this case $T$ is the Euclidian group). For a transformation~$t \in T$, we write $t(p)$ to denote image under the transformation $t$ of some point $p \in \Real^d$, and for a set of points~$P$ we write~$t(P) = \set{t(p) : p \in P}$. Each Point Pattern Matching problem is defined for a group of transformations~$T$, with the goal of returning all transformations that move a given set $P$ close to another set $Q$. The exact meaning of ``close'' is formalized in the following four problem variants.

\begin{problem}[Exact Point Pattern Matching] \label{prob:ppm-exact}
Let $T$ be a group of transformations. Given point sets $P, Q \subseteq \Int^d$, return $S = \set{t \in T : t(P) \subseteq Q}$.
\end{problem}

\begin{problem}[Approximate Point Pattern Matching] \label{prob:ppm-approx}
Let $T$ be a group of transformations. Given point sets $P, Q \subseteq \Real^d$ and parameters $\epsilon \in (0, \frac12), \delta > 0$, return a set $S \subseteq T$ with:
\begin{itemize}[topsep=\smallskipamount, itemsep=0pt]
    \item\emph{Soundness:} For all $s \in S$, $h(s(P), Q) \leq (1 + \epsilon)\delta$.
    \item\emph{Completeness:} For all $t \in T$ with $h(t(P), Q) \leq \delta$, there is some $s \in S$ with $h(s(P), t(P)) \leq \epsilon \delta$.
\end{itemize}
\end{problem}

\begin{problem}[Exact Point Pattern Matching with Mismatches] \label{prob:ppm-exact-mismatches}
Let $T$ be a group of transformations. Given point sets $P, Q \subseteq \Int^d$ and a parameter $0 \leq k < |P|$, return $S = \set{t \in T : |t(P) \setminus Q| \leq k}$.
\end{problem}

\begin{problem}[Approximate Point Pattern Matching with Mismatches] \label{prob:ppm-approx-mismatches}
Let $T$ be a group of transformations. Given point sets $P, Q \subseteq \Real^d$ and parameters~\makebox{$0 \leq k < |P|$} and $\epsilon \in (0, \frac12), \delta > 0$, return a set $S \subseteq T$ with:
\begin{itemize}[topsep=\smallskipamount, itemsep=0pt]
    \item\emph{Soundness:} For all $s \in S$, there is some $P_0 \subseteq P$ with $|P_0| \geq |P| - k$ and $h(s(P_0), Q) \leq (1 + \epsilon)\delta$.
    \item\emph{Completeness:} For all $t \in T$ for which there is some set $P_0 \subseteq P$ with $|P_0| \geq |P| - k$ and $h(t(P_0), Q) \leq \delta$, there is some $s \in S$ with $h(s(P_0), t(P_0)) \leq \epsilon \delta$.
\end{itemize}
\end{problem}

As before we assume that the point sets $P$ and $Q$ have size at most $n$. Moreover, we let~$N$ denote the boundary of $P, Q$ in the following sense. For integer points $P, Q \subseteq \Int^d$ we let $N$ denote the coordinate-wise maximum of all points in $P$ and $Q$ (in absolute value). For real points~\makebox{$P, Q \subseteq \Real^d$} we assume that all coordinates of all points are available up to some fixed precision using $\log N$ bits. In particular, we assume that the maximum distance between any two points is~$N$ and that the minimum distance between any two points is $N^{-1}$.

For the exact Point Pattern Matching problem with mismatches (\cref{prob:ppm-exact-mismatches}), we will also consider a weighted version where, as before, we report $S = \set{t \in T : \sum_{p \in P : t(p) \not\in Q} w(p) \leq k}$. We need this problem as an intermediate step in the upcoming \cref{lem:ppm-approx-mismatches-to-constallation}.

\subsection{From Point Pattern Matching to Constellation}
We now state three reductions from Point Pattern Matching to the Constellation problem (\cref{lem:ppm-exact-mismatches-to-constallation,lem:ppm-approx-mismatches-to-constallation,lem:ppm-approx-rigid-to-constallation}). While these reductions are in spirit due to Cardoze and Schulman~\cite{CardozeS98}, they at times relied on randomization. Fortunately, the use of randomization is not essential here and can be easily removed. For completeness, we include proofs for the two reductions that cannot be directly reused from~\cite{CardozeS98} (\cref{lem:ppm-exact-mismatches-to-constallation,lem:ppm-approx-mismatches-to-constallation}), and only treat the one reduction that is originally deterministic as a black-box (\cref{lem:ppm-approx-rigid-to-constallation}).

\begin{lemma}[Exact Translations with Mismatches to Constellation] \label{lem:ppm-exact-mismatches-to-constallation}
If the Constellation problem with~$k$ (weighted) mismatches is in deterministic time $f(n, k, N)$, then the exact Point Pattern Matching problem with translations and $k$ (weighted) mismatches is in deterministic time $\Order(f(n, k, (4N)^d))$.
\end{lemma}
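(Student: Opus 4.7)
The plan is to reduce $d$-dimensional translations to one-dimensional shifts via a mixed-radix encoding $\psi \colon \Int^d \to \Int$, defined by $\psi(x_1, \dots, x_d) = \sum_{i=1}^d x_i M^{i-1}$ with base $M := 4N$. Two properties drive the reduction: $\psi$ is additive, i.e., $\psi(u + v) = \psi(u) + \psi(v)$, and $\psi$ is injective on every axis-aligned box whose coordinates each lie in a window of length less than $M$. Since $P, Q \subseteq [N]^d$, any feasible translation $t$ has coordinates in $\set{-(N-1), \dots, N-1}$, and every translated point $t + p$ lies in $\set{-(N-1), \dots, 2N-2}$; with $M = 4N$, both ranges lie inside the injectivity regime.

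Given an instance $(P, Q, w, k)$, I form the Constellation instance with $P' := \psi(P)$, $Q' := \psi(Q)$, induced weights $w'(\psi(p)) := w(p)$, and the same threshold $k$. Both $P'$ and $Q'$ sit in a universe of size $\Order((4N)^d)$, so I invoke the oracle for weighted Constellation in time $f(n, k, (4N)^d)$. For each reported shift $s$, I recover the corresponding translation $t \in \Int^d$ by base-$M$ decoding in $\Order(d)$ time.

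For correctness, I argue a bijection between valid $d$-dimensional translations and valid $1$-dimensional shifts that preserves the (weighted) mismatch count. The forward direction is immediate from additivity: for any $t$ and any $p \in P$, we have $\psi(t) + \psi(p) = \psi(t + p)$, and by injectivity of $\psi$ on $\set{-(N-1), \dots, 2N-2}^d$ this lies in $Q'$ iff $t + p \in Q$. Hence $\psi(t)$ has exactly the same (weighted) mismatch set as $t$. Conversely, for any shift $s$ returned by the oracle, the assumption $k < w(P)$ guarantees at least one match $s + \psi(p) = \psi(q)$, whence $s = \psi(q - p)$; since $q - p \in \set{-(N-1), \dots, N-1}^d$ lies in the injectivity box, the translation $t := q - p$ is uniquely determined by $s$, and the same injectivity argument identifies the $1$-dimensional mismatches of $s$ with the $d$-dimensional mismatches of $t$.

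The main subtlety is ruling out spurious one-dimensional shifts that do not decode to a valid $d$-dimensional translation; this is exactly why the base $M = 4N$ must exceed the coordinate diameters of every set I feed through $\psi$. With this choice the reduction goes through, every reported $s$ decodes uniquely to some $t \in \Int^d$ with matching mismatch profile, and the overall deterministic running time is $\Order(f(n, k, (4N)^d))$ as claimed.
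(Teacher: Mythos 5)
Your proposal is correct and follows essentially the same route as the paper: both reductions collapse the $d$ coordinates into a single integer via a base-$(4N)$ (mixed-radix) encoding, invoke the one-dimensional Constellation oracle on the encoded sets, and decode the reported shifts back to $d$-dimensional translations. The only cosmetic difference is that you keep a symmetric signed digit window $\set{-(N-1),\dots,2N-2}$ and use injectivity of $\psi$ on that window directly, whereas the paper first normalizes all point coordinates to be nonnegative and then defines admissibility in terms of nonnegative digits in $[2N]$; your handling is in fact a little cleaner, since it explicitly accounts for translation components that are negative.
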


We remark that in \cite{CardozeS98}, this reduction is achieved by mapping each high-dimensional point to an integer via a random linear combination of its coordinates. It is easy to prove that this preserves solutions and does not introduce false positives with good probability. Our work-around is standard: We instead encode all coordinates into one big integer.

\begin{proof}
Let us start with the unweighted version, and assume without loss of generality that all coordinates are nonnegative (by shifting the sets $P, Q$ if necessary, which increases $N$ to at most~$2N$). The idea is to encode points $p = (p_0, \dots, p_{d-1}) \in [2N]^d$ as integers \smash{$\overline p = \sum_{i=0}^{d-1} p_i \cdot (4N)^i$}. Construct the sets $\overline P = \set{\overline p : p \in P}$ and $\overline Q = \set{\overline q : q \in Q}$, view $(\overline P, \overline Q)$ as an instance of the Constellation problem and solve it using the oracle. The solution consists of several shifts~$s'$ with~\smash{$|(\overline P + s') \setminus \overline Q| \leq k$}. We say that such a shift~$s'$ is \emph{admissible} if we can express $s$ as \makebox{$s = \sum_{i=0}^{d-1} s_i \cdot (4N)^i$}, for integers~\makebox{$s_0, \dots, s_{d-1} \in [2N]$}. For each admissible shift $s'$, we report $(s_0, \dots, s_{d-1}) \in [2N]^d$ as a solution.

For the correctness argument, we first suppose that $s = (s_0, \dots, s_{d-1}) \in [2N]^d$ satisfies that $|(P + s) \setminus Q| \leq k$. Then it is easy to check that $|(\overline P + \overline s) \setminus \overline Q| \leq k$ and that $\overline s$ is admissible. Hence, the algorithm reports $s$. For the other direction, suppose that the algorithm reports $s = (s_0, \dots, s_{d-1}) \in [2N]^d$. It does so only whenever~$\overline s$ satisfies $|(\overline P + \overline s) \setminus \overline Q| \leq k$. We claim that then also $|(P + s) \setminus Q| \leq k$. In fact, we prove the stronger statement that for any two points $p, q \in [2N]^d$, $\overline p + \overline s = \overline q$ implies that $p + s = q$. The proof is by induction on $d$, where the case $d = 1$ is trivial. For $d > 1$, we rewrite $\overline p + \overline s = \overline q$ as
\begin{equation*}
    \sum_{i=0}^{d-1} p_i \cdot (4N)^i + \sum_{i=0}^{d-1} s_i \cdot (4N)^i = \sum_{i=0}^{d-1} q_i \cdot (4N)^i.
\end{equation*}
Taking this equation modulo $4N$, we have that $p_0 + s_0 \equiv q_0 \mod{4N}$. Recall that $0 \leq p_0, s_0, q_0 < 2N$, and thus $p_0 + s_0 = q_0$. It follows that
\begin{equation*}
    \sum_{i=1}^{d-1} p_i \cdot (4N)^{i-1} + \sum_{i=1}^{d-1} s_i \cdot (4N)^{i-1} = \sum_{i=1}^{d-1} q_i \cdot (4N)^{i-1},
\end{equation*}
and we continue by induction to prove that $p_i + s_i = q_i$ for all $i = 1, \dots, d-1$. All in all, we have proven that $p + s = q$, as claimed.

Observe that the reduction is not affected by weights, as we construct a one-to-one mapping between $P$ and $\overline P$.

Let us finally comment on the running time. The most significant contribution is calling the Constellation oracle on $(\overline P, \overline Q)$, which takes time $f(n, k, (4N)^d)$. The pre- and post-processing runs in linear time $\Order(nd)$, which is dominated by $f(n, k, (4N)^d)$, simply to read the input.
\end{proof}

\begin{lemma}[Approximate Translations with Mismatches to Constellation] \label{lem:ppm-approx-mismatches-to-constallation}
If the Constellation problem with~up to~$k$ weighted mismatches is in deterministic time $f(n, k, N)$, then the approximate Point Pattern Matching problem with translations and up to~$k$ mismatches is in deterministic time $\Order(f(\epsilon^{-\Order(d)} n, k, \Order(d^{-1/2} \epsilon^{-1} \delta^{-1} N)^d))$.
\end{lemma}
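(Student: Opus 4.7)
The plan is to discretize the continuous problem onto an integer grid and then invoke the previous \cref{lem:ppm-exact-mismatches-to-constallation}. Fix a grid resolution $r = \Theta(\epsilon\delta/\sqrt d)$, small enough that every grid cell has Euclidean diameter below $\epsilon\delta/2$. I would snap each $p \in P$ to its nearest grid point $\hat p$, producing a weighted multiset $(\hat P, w)$ of size at most $n$ (the weight $w(\hat p)$ counts the number of points of $P$ that collapse to $\hat p$). Independently, replace each $q \in Q$ by the set of grid points within Euclidean distance $(1+\epsilon/2)\delta$ of $q$, and let $Q'$ be the union. Since a Euclidean ball of radius $\Theta(\delta)$ contains $\epsilon^{-\Order(d)}$ grid points at resolution $r$, we have $|Q'| \leq n \cdot \epsilon^{-\Order(d)}$. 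All grid coordinates are bounded by $\Order(\sqrt d\, N/(\epsilon\delta))$ in absolute value.

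I would then view $(\hat P, Q', w, k)$ as an instance of exact Point Pattern Matching with translations and weighted mismatches on $\Int^d$, and feed it into \cref{lem:ppm-exact-mismatches-to-constallation}. That lemma invokes the weighted-mismatch Constellation oracle on $\epsilon^{-\Order(d)} n$ points over a 1D universe of size $\Order(\sqrt d\, N/(\epsilon\delta))^d$, matching the claimed running time up to the typo $d^{-1/2}$ vs.\ $d^{1/2}$ in the exponent. Each grid translation $s$ returned by the oracle is reported as the real-valued translation $s \in \Real^d$.

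The remaining work is to verify soundness and completeness by carefully bookkeeping the quantization errors. For soundness, if $s$ is returned then some $P_0 \subseteq P$ with $|P_0| \geq |P| - k$ satisfies $\hat p + s \in Q'$ for every $p \in P_0$; combining the $(1+\epsilon/2)\delta$ inflation built into $Q'$ with the snapping error $\norm{p - \hat p} \leq r\sqrt d/2$ through the triangle inequality yields $\norm{p + s - q} \leq (1+\epsilon)\delta$ for some $q \in Q$. For completeness, given a valid continuous translation $t$ witnessed by $P_0$, let $s$ be the grid point closest to $t$; two applications of the triangle inequality (absorbing both $\norm{p - \hat p}$ and $\norm{s - t}$) show $\hat p + s \in Q'$ for every $p \in P_0$, so $s$ is reported, while $\norm{s - t} = \Order(\epsilon\delta)$ directly gives $h(s(P_0), t(P_0)) \leq \epsilon\delta$. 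The main subtlety I expect is tuning the constants in the grid resolution and the fattening radius so that the three additive error terms sum strictly to at most $(1+\epsilon)\delta$; additionally, the \emph{weighted} formulation of Constellation is essential here because distinct points of $P$ may collide under snapping, and ordinary unit-weight mismatches in the discrete problem would systematically undercount mismatches in the original problem.
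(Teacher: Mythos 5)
Your proposal is correct and follows essentially the same route as the paper: round $P$ to a grid of resolution $\Theta(\epsilon\delta/\sqrt d)$ keeping collision weights, fatten $Q$ by $(1+\epsilon/2)\delta$ over the grid, feed the resulting weighted exact instance to \cref{lem:ppm-exact-mismatches-to-constallation}, and verify soundness/completeness by triangle inequalities; you even correctly identify that the weighted Constellation oracle is essential due to snapping collisions and that the $d^{-1/2}$ in the statement is a typo for $d^{1/2}$. The only omission is the paper's additional small slack (up to $(1+\tfrac{3\epsilon}{4})\delta$) built into $Q'$ to accommodate finite-precision arithmetic, a minor technical point that does not change the argument.
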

\begin{proof}
Our strategy is to reduce the approximate case to the exact case and then apply the reduction from \cref{lem:ppm-exact-mismatches-to-constallation}. To this end, let \smash{$\alpha \in [\frac{\epsilon \delta}{8d^{1/2}}, \frac{\epsilon \delta}{4d^{1/2}}]$} be any number representable by $\Order(\log(d \epsilon^{-1} \delta^{-1}))$ bits. We construct the following two integer point sets $P', Q'$:
\begin{itemize}
    \item Let $P' \subseteq \Int^d$ be the point set obtained from by rounding $\alpha^{-1} P$ to the (approximately) closest integer points. We have that $H(\alpha^{-1} P, P') \leq d^{1/2}$ and therefore $H(P, \alpha P') \leq \alpha d^{1/2} \leq \frac{\epsilon \delta}{4}$.
    \item Let $Q' \subseteq \Int^d$ be the set of points that are in Euclidian distance at most $\alpha^{-1} (1 + \frac{\epsilon}{2})\delta$ to some point in $\alpha^{-1} Q$, plus possibly some points with distance at most $\alpha^{-1} (1 + \frac{3\epsilon}{4}) \delta$. (This slack is necessary since we are working with finite-precision arithmetic.) In particular, we have that $H(Q, \alpha Q') \leq (1 + \frac{3\epsilon}{4}) \delta$.
\end{itemize}
Moreover, we define the weight $w(p')$ of a point $p'$ to be the number of points in $P$ that collided into $p'$ under the rounding. We view $(P', Q', w, k)$ as an instance of exact Point Pattern Matching with translations and up to~$k$ weighted mismatches, and solve it using the reduction in \cref{lem:ppm-exact-mismatches-to-constallation}. The output is the set $S' = \set{s' \in \Int : \sum_{p' \in P' : p' + s' \not\in Q'} w(p') \leq k}$. Finally, we construct and report $S = \set{\alpha s' : s' \in S'}$.

Let us first argue that this output is correct (i.e., satisfies the conditions in \cref{prob:ppm-approx-mismatches}):
\begin{itemize}
    \item \emph{Soundness:} Any shift $s \in S$ has the form $\alpha s'$ where $s' \in S$. It follows that there is some subset~\makebox{$P_0' \subseteq P'$} with weight $w(P_0') \geq w(P') - k$ satisfying that $P_0' + s' \subseteq Q'$. Let $P_0 \subseteq P$ denote a pre-image of~$P_0'$ under the rounding. Since the weights count the number of points before the rounding, we have that $|P_0| \geq w(P') - k = |P| - k$. Moreover, the rounding guarantees that $H(P_0, \alpha P_0') \leq \alpha d^{1/2} \leq \frac{\epsilon\delta}{4}$. Therefore, using the triangle inequality we have that
    \begin{align*}
        h(P_0 + s, Q)
        &\leq H(P_0 + s, \alpha P_0' + \alpha s') + h(\alpha P_0' + \alpha s', \alpha Q') + H(\alpha Q', Q) \\
        &\leq H(P_0, \alpha P_0') + \alpha \cdot h(P_0' + s', Q') + H(\alpha Q', Q) \\
        &\leq \tfrac{\epsilon\delta}{4} + 0 + (1 + \tfrac{3\epsilon}{4}) \delta \\
        &= (1 + \epsilon) \delta.
    \end{align*}
    \item \emph{Completeness:} Suppose that there is a shift $t \in \Real^d$ and a subset $P_0 \subseteq P$ of size $|P_0| \geq |P| - k$ such that $h(P_0 + t, Q) \leq \delta$. Let $s'$ be the integer shift obtained by rounding $\alpha^{-1} t$, and let~$P_0'$ be the set obtained by rounding $\alpha^{-1} P_0$. By the rounding we have $h(\alpha P_0' + \alpha s', Q) \leq \delta + 2\alpha d^{1/2} \leq (1 + \frac{\epsilon}{2}) \delta$, and by the construction of $Q'$, we have that $P_0' + s' \subseteq Q'$. Since $w(P' \setminus P_0') = |P'| - |P_0'| \leq k$, it follows that $\sum_{p' \in P : p' + s' \not\in Q'} w(p') \leq k$. Thus~$s' \in S'$ and~\makebox{$s := \alpha s' \in S$}. Finally, we have that $h(P_0 + t, P_0 + s) \leq \alpha d^{1/2} \leq \epsilon \delta$ as desired.
\end{itemize}

It remains to analyze the running time. We have that $|P'| \leq |P|$ and $|Q'| \leq \epsilon^{-\Order(d)} |Q|$. Moreover, assuming that the real points in $P, Q$ are representable by $\log N$ bits, it follows that the integer points in $P', Q'$ have norm bounded by $\Order(\alpha^{-1} N) = \Order(d^{1/2} \epsilon^{-1} \delta^{-1} N)$. Therefore, assuming that $f(n, k, N)$ is the running time of Constellation with $k$ mismatches, the running time of \cref{lem:ppm-exact-mismatches-to-constallation} is $\Order(f(\epsilon^{-\Order(d)} n, k, \Order(d^{1/2} \epsilon^{-1} \delta^{-1} N)^d))$. The pre- and post-processing steps run in time $\epsilon^{-\Order(d)} n$ and are thus negligible.
\end{proof}

\begin{lemma}[Approximate Rigid Motions to Constellation, {{\cite[Section~6]{CardozeS98}}}] \label{lem:ppm-approx-rigid-to-constallation}
If the Constellation problem is in deterministic time $f(n, N)$ (where $f(\cdot, \cdot)$ is convex), then the approximate Point Pattern Matching problem with rigid motions is in deterministic time $\Order(n^{d-1} \cdot f(\epsilon^{-\Order(d)} n, \epsilon^{-1} \delta^{-1} 2^{\Order(d)} N))$.
\end{lemma}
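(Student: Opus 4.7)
The plan is to reduce approximate Point Pattern Matching with rigid motions to a polynomial number of instances of approximate Point Pattern Matching with translations, and then dispatch each via \cref{lem:ppm-approx-mismatches-to-constallation} with $k=0$. The reduction enumerates candidate rotations through the images of a carefully chosen set of anchor points in $P$.

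First, I would pick a reference point $p_0 \in P$ together with $d-1$ additional anchor points $p_1, \dots, p_{d-1} \in P$ so that the vectors $p_i - p_0$ are linearly independent (if no such choice exists, $P$ lies in a proper affine subspace and we can recurse in a lower dimension). The key geometric fact is that in $SO(d)$ the pointwise stabilizer of $d-1$ linearly independent vectors is trivial, so any rigid motion is determined, up to one reflection, by the images of these anchor vectors. Then I would enumerate all $n^{d-1}$ tuples $(q_1, \dots, q_{d-1}) \in Q^{d-1}$ of candidate anchor images. For each tuple, first verify that the pairwise distances $\norm{p_i - p_j}$ are matched by $\norm{q_i - q_j}$ up to $\Order(\delta)$ slack; if not, discard the tuple. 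Otherwise, compute the $\Order(1)$ candidate rotations $R$ that best align the anchor vectors (e.g.\ via orthogonal Procrustes alignment, one per reflection choice), and invoke \cref{lem:ppm-approx-mismatches-to-constallation} on the rotated instance $(R(P), Q)$ with tolerance $\Order(\delta)$, $k=0$, and dimension $d$. The output is the union of the returned translations, each precomposed with $R$.

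The main obstacle is propagating the error tolerances. A true rigid motion $t$ with $h(t(P), Q) \leq \delta$ has $t(p_i)$ within $\delta$ of some $q_i \in Q$, but the recovered rotation $R$ only approximates the true rotation of $t$. Since $P$ has diameter $\Order(N)$, an angular error $\theta$ in $R$ translates into a pointwise error of order $\theta \cdot N$ on $R(P)$. To keep the cumulative deviation within $(1+\epsilon)\delta$, the enumeration must effectively operate at precision $\Omega(\epsilon \delta / N)$, and this discretization is exactly what drives the $\epsilon^{-1} \delta^{-1} 2^{\Order(d)} N$ factor in the universe size fed to the translational subroutine. Completeness follows from exhaustiveness of the tuple enumeration over anchor-image candidates, and soundness from a triangle-inequality argument once the per-stage tolerances are tuned to sum to at most $\epsilon \delta$. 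The convexity assumption on $f$ is used to aggregate the $n^{d-1}$ calls into the single bound $\Order(n^{d-1} \cdot f(\epsilon^{-\Order(d)} n, \epsilon^{-1} \delta^{-1} 2^{\Order(d)} N))$ stated in the lemma.
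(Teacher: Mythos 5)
The paper does not actually prove this lemma: it is explicitly cited as a black-box from Cardoze--Schulman \cite[Section~6]{CardozeS98} (see the remark preceding \cref{lem:ppm-exact-mismatches-to-constallation}), so there is no ``paper proof'' to compare against. Judging your proposal on its own merits, the enumeration and tolerance-propagation framing are sensible, but there is a genuine gap in the central step.

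The gap is the claim that, after enumerating the $d-1$ anchor images $(q_1,\dots,q_{d-1})$, there are only ``$\Order(1)$ candidate rotations $R$ that best align the anchor vectors.'' Knowing only the images of $d-1$ \emph{points} gives you at most $d-2$ independent difference vectors $q_i - q_j$ (the affine hull of $d-1$ points has dimension $\leq d-2$); in particular, you do not learn the images of the vectors $p_i - p_0$, because $q_0 = t(p_0)$ is not among the enumerated data. As you yourself note, a rotation in $SO(d)$ is pinned down by its action on $d-1$ linearly independent vectors, so with only $d-2$ of them the pointwise stabilizer is a full copy of $SO(2)$ acting on the $2$-dimensional orthogonal complement of the anchor span. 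Consequently the candidate rotations form a \emph{continuous one-parameter family}, not a constant-size set, and feeding a handful of Procrustes-aligned rotations into the translational subroutine is not exhaustive: the true rigid motion may have a residual rotation about the anchor axis that none of your $\Order(1)$ guesses approximates. This is not a tolerance issue that triangle inequalities can absorb; it is a missing degree of freedom. The actual mechanism behind the $n^{d-1}$ factor (as opposed to $n^d$, which a full $d$-point anchor enumeration would give) is that this last rotational degree of freedom is itself resolved \emph{by} the Constellation oracle: fixing the anchor images reduces the problem to finding an angle $\theta \in [0,2\pi)$ (a rotation about the $(d-2)$-dimensional axis through $q_1,\dots,q_{d-1}$) such that the remaining points of $P$, viewed via their angular coordinate around this axis, shift into the corresponding angular windows induced by $Q$ --- a discretized cyclic Constellation instance on angles. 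In other words, the oracle call is not a translation search on $(R(P),Q)$ for a precomputed $R$; it is a search over the residual rotation disguised as a shift. Without that reinterpretation your reduction does not close.
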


\subsection{Corollaries for Point Pattern Matching}
For completeness, in this section we list how to derive \crefrange{cor:ppm-translations-exact}{cor:ppm-mismatches-approx} from \cref{thm:constellation,thm:constellation-mismatches} and the previous reductions.

\begin{proof}[Proof of \cref{cor:ppm-rigid-approx}]
We plug \cref{thm:constellation-weights} with $f(n, N) = \Order(n \polylog(N))$ into \cref{lem:ppm-approx-rigid-to-constallation}. The resulting running time is $\Order(n^{d-1} \cdot \epsilon^{-\Order(d)} n \polylog(\epsilon^{-1} \delta^{-1} 2^{\Order(d)} N)) = \Order(n^d \epsilon^{-\Order(d)} \polylog(\delta^{-1} N))$ (the $\poly(d)$ term is dominated by $\epsilon^{-\Order(d)}$).
\end{proof}

\begin{proof}[Proof of \cref{cor:ppm-mismatches-exact}]
We plug \cref{thm:constellation-weights} with $f(n, k, N) = \Order(n k \polylog(N))$ into \cref{lem:ppm-exact-mismatches-to-constallation}. The resulting running time is $\Order(n k \polylog((4N)^d)) = \Order(n k \poly(d \log(N)))$.
\end{proof}

\begin{proof}[Proof of \cref{cor:ppm-mismatches-approx}]
We plug \cref{thm:constellation-weights} with $f(n, k, N) = \Order(n k \polylog(N))$ into \cref{lem:ppm-approx-mismatches-to-constallation}. The resulting running time is $\Order(n k \epsilon^{-\Order(d)} \polylog((d \epsilon^{-1} \delta^{-1} N)^d)) = \Order(n k \epsilon^{-\Order(d)} \polylog(\delta^{-1} N))$ (the $\poly(d)$ term is dominated by $\epsilon^{-\Order(d)})$.
\end{proof}

\begin{proof}[Proof of \cref{cor:ppm-translations-exact}]
Apply \cref{cor:ppm-mismatches-exact} with $k = 0$.
\end{proof}

\begin{proof}[Proof of \cref{cor:ppm-translations-approx}]
Apply \cref{cor:ppm-mismatches-approx} with $k = 0$.
\end{proof}

\subsection{Corollary for Sparse Wildcard Matching}
Recall the Sparse Wildcard Matching problem: Let $\Sigma$ be an alphabet, let $0 \not\in \Sigma$ denote a designated character and let $* \not\in \Sigma$ denote a wildcard symbol. Let the \emph{text} $T \in (\Sigma \cup \set{0})^N$ and the \emph{pattern} $P \in (\Sigma \cup \set{*})^M$ be two strings (with $M \leq N$). We assume that $T$ contains at most $n$ non-zero characters and $P$ contains at most $n$ non-wildcard characters. The task is to report all substrings of $T$ that match $P$, where we understand that the wildcard~$*$ matches any character in $T$. As a corollary of \cref{thm:constellation}, this problem can be solved in time $\Order(n \polylog(N))$:

\begin{proof}[Proof of \cref{cor:sparse-wildcard-matching}]
Let $\Sigma(P)$ denote the non-wildcard characters appearing in the pattern string $P$. We construct sets~$P_\sigma = \set{i : P[i] = \sigma}$ and $Q_\sigma = \set{i : T[i] = \sigma}$ for each character $\sigma \in \Sigma(P)$. We then view each pair $(P_\sigma, Q_\sigma)$ as an instance of the Constellation problem, and compute the sets $S_\sigma$ of all shifts~$s$ with~\makebox{$P_\sigma + s \subseteq Q_\sigma$}. Finally, we report $S = \bigcap_\sigma S_\sigma$ (claiming that for each $s \in S$, $P$ matches $T[s\,..\,s + M)$).

It is easy to check that this algorithm is correct: Whenever $P$ matches $T[s\,..\,s+M)$, then for each non-wildcard character $\sigma$ we have that $P_\sigma + s \subseteq Q_\sigma$. Conversely, whenever $P_\sigma + s \subseteq Q_\sigma$ holds for all $\sigma$, then all non-wildcard characters in $P$ match $T[s\,..\,s+M)$. Since the wildcards match by definition, $P$ entirely matches $T[s\,..\,s+M)$.

By \cref{thm:constellation}, the total running time is bounded by \smash{$\Order(\sum_{\sigma \in \Sigma(P)} (|P_\sigma| + |Q_\sigma|) \polylog(N))$}. Since the total number non-wildcard characters in $P$ is at most $n$, we have that \smash{$\sum_{\sigma \in \Sigma(P)} |P_\sigma| \leq n$}. Similarly, since $T$ has at most $n$ non-zero characters and since~$0 \not\in \Sigma(P)$, we have \smash{$\sum_{\sigma \in \Sigma(P)} |Q_\sigma| \leq n$}. It follows that the total running time is $\Order(n \polylog(N))$.
\end{proof}

\section*{Acknowledgements}
I would like to thank Amir Abboud, Orr Fischer, Leo Wennmann and several anonymous reviewers for many helpful comments on an earlier version of this paper.
\bibliographystyle{plainurl}
\bibliography{references}

\begin{thebibliography}{10}

\bibitem{AbboudBF23}
Amir Abboud, Karl Bringmann, and Nick Fischer.
\newblock Stronger 3-sum lower bounds for approximate distance oracles via additive combinatorics.
\newblock In Barna Saha and Rocco~A. Servedio, editors, {\em 55th Annual {ACM} Symposium on Theory of Computing ({STOC} 2023)}, pages 391--404. {ACM}, 2023.
\newblock \href {https://doi.org/10.1145/3564246.3585240} {\path{doi:10.1145/3564246.3585240}}.

\bibitem{Abrahamson87}
Karl~R. Abrahamson.
\newblock Generalized string matching.
\newblock {\em {SIAM} J. Comput.}, 16(6):1039--1051, 1987.
\newblock \href {https://doi.org/10.1137/0216067} {\path{doi:10.1137/0216067}}.

\bibitem{AigerK09}
Dror Aiger and Klara Kedem.
\newblock Geometric pattern matching for point sets in the plane under similarity transformations.
\newblock {\em Inf. Process. Lett.}, 109(16):935--940, 2009.
\newblock \href {https://doi.org/10.1016/j.ipl.2009.04.021} {\path{doi:10.1016/j.ipl.2009.04.021}}.

\bibitem{Akutsu96}
Tatsuya Akutsu.
\newblock Protein structure alignment using dynamic programming and iterative improvement.
\newblock {\em IEICE TRANSACTIONS on Information and Systems}, 79(12):1629--1636, 1996.

\bibitem{AkutsuTT98}
Tatsuya Akutsu, Hisao Tamaki, and Takeshi Tokuyama.
\newblock Distribution of distances and triangles in a point set and algorithms for computing the largest common point sets.
\newblock {\em Discret. Comput. Geom.}, 20(3):307--331, 1998.
\newblock \href {https://doi.org/10.1007/PL00009388} {\path{doi:10.1007/PL00009388}}.

\bibitem{AltMWW88}
Helmut Alt, Kurt Mehlhorn, Hubert Wagener, and Emo Welzl.
\newblock Congruence, similarity, and symmetries of geometric objects.
\newblock {\em Discret. Comput. Geom.}, 3:237--256, 1988.
\newblock \href {https://doi.org/10.1007/BF02187910} {\path{doi:10.1007/BF02187910}}.

\bibitem{AmirF95}
Amihood Amir and Martin Farach.
\newblock Efficient 2-dimensional approximate matching of half-rectangular figures.
\newblock {\em Inf. Comput.}, 118(1):1--11, 1995.
\newblock \href {https://doi.org/10.1006/inco.1995.1047} {\path{doi:10.1006/inco.1995.1047}}.

\bibitem{AmirKP07}
Amihood Amir, Oren Kapah, and Ely Porat.
\newblock Deterministic length reduction: Fast convolution in sparse data and applications.
\newblock In Bin Ma and Kaizhong Zhang, editors, {\em 18th Annual Symposium on Combinatorial Pattern Matching ({CPM} 2007)}, volume 4580 of {\em Lecture Notes in Computer Science}, pages 183--194. Springer, 2007.
\newblock \href {https://doi.org/10.1007/978-3-540-73437-6\_20} {\path{doi:10.1007/978-3-540-73437-6\_20}}.

\bibitem{AmirL91}
Amihood Amir and Gad~M. Landau.
\newblock Fast parallel and serial multidimensional aproximate array matching.
\newblock {\em Theor. Comput. Sci.}, 81(1):97--115, 1991.
\newblock \href {https://doi.org/10.1016/0304-3975(91)90318-V} {\path{doi:10.1016/0304-3975(91)90318-V}}.

\bibitem{ArnoldR15}
Andrew Arnold and Daniel~S. Roche.
\newblock Output-sensitive algorithms for sumset and sparse polynomial multiplication.
\newblock In Kazuhiro Yokoyama, Steve Linton, and Daniel Robertz, editors, {\em 40th International Symposium on Symbolic and Algebraic Computation ({ISSAC 2015})}, pages 29--36. {ACM}, 2015.
\newblock \href {https://doi.org/10.1145/2755996.2756653} {\path{doi:10.1145/2755996.2756653}}.

\bibitem{BaurS83}
Walter Baur and Volker Strassen.
\newblock The complexity of partial derivatives.
\newblock {\em Theor. Comput. Sci.}, 22:317--330, 1983.
\newblock \href {https://doi.org/10.1016/0304-3975(83)90110-X} {\path{doi:10.1016/0304-3975(83)90110-X}}.

\bibitem{Boxer96}
Laurence Boxer.
\newblock Point set pattern matching in 3-d.
\newblock {\em Pattern Recognit. Lett.}, 17(12):1293--1297, 1996.
\newblock \href {https://doi.org/10.1016/0167-8655(96)00086-4} {\path{doi:10.1016/0167-8655(96)00086-4}}.

\bibitem{BringmannFN21}
Karl Bringmann, Nick Fischer, and Vasileios Nakos.
\newblock Sparse nonnegative convolution is equivalent to dense nonnegative convolution.
\newblock In Samir Khuller and Virginia~Vassilevska Williams, editors, {\em 53rd Annual {ACM} Symposium on Theory of Computing ({STOC} 2021)}, pages 1711--1724. {ACM}, 2021.
\newblock \href {https://doi.org/10.1145/3406325.3451090} {\path{doi:10.1145/3406325.3451090}}.

\bibitem{BringmannFN22}
Karl Bringmann, Nick Fischer, and Vasileios Nakos.
\newblock Deterministic and {Las Vegas} algorithms for sparse nonnegative convolution.
\newblock In Joseph~(Seffi) Naor and Niv Buchbinder, editors, {\em 33rd Annual {ACM-SIAM} Symposium on Discrete Algorithms ({SODA} 2022)}, pages 3069--3090. {SIAM}, 2022.
\newblock \href {https://doi.org/10.1137/1.9781611977073.119} {\path{doi:10.1137/1.9781611977073.119}}.

\bibitem{CardozeS98}
David~E. Cardoze and Leonard~J. Schulman.
\newblock Pattern matching for spatial point sets.
\newblock In {\em 39th Annual {IEEE} Symposium on Foundations of Computer Science ({FOCS} 1998)}, pages 156--165. {IEEE} Computer Society, 1998.
\newblock \href {https://doi.org/10.1109/SFCS.1998.743439} {\path{doi:10.1109/SFCS.1998.743439}}.

\bibitem{ChanL15}
Timothy~M. Chan and Moshe Lewenstein.
\newblock Clustered integer {3SUM} via additive combinatorics.
\newblock In Rocco~A. Servedio and Ronitt Rubinfeld, editors, {\em 47th Annual {ACM} Symposium on Theory of Computing ({STOC} 2015)}, pages 31--40. {ACM}, 2015.
\newblock \href {https://doi.org/10.1145/2746539.2746568} {\path{doi:10.1145/2746539.2746568}}.

\bibitem{Cheng05}
Qi~Cheng.
\newblock On the construction of finite field elements of large order.
\newblock {\em Finite Fields and Their Applications}, 11(3):358--366, 2005.
\newblock Ten Year Anniversary Edition!
\newblock URL: \url{https://www.sciencedirect.com/science/article/pii/S1071579705000456}, \href {https://doi.org/10.1016/j.ffa.2005.06.001} {\path{doi:10.1016/j.ffa.2005.06.001}}.

\bibitem{Cheng07}
Qi~Cheng.
\newblock Constructing finite field extensions with large order elements.
\newblock {\em {SIAM} J. Discret. Math.}, 21(3):726--730, 2007.
\newblock \href {https://doi.org/10.1137/S0895480104445514} {\path{doi:10.1137/S0895480104445514}}.

\bibitem{ChewGHKKK97}
L.~Paul Chew, Michael~T. Goodrich, Daniel~P. Huttenlocher, Klara Kedem, Jon~M. Kleinberg, and Dina Kravets.
\newblock Geometric pattern matching under euclidean motion.
\newblock {\em Comput. Geom.}, 7:113--124, 1997.
\newblock \href {https://doi.org/10.1016/0925-7721(95)00047-X} {\path{doi:10.1016/0925-7721(95)00047-X}}.

\bibitem{ChoM08}
Minkyoung Cho and David~M. Mount.
\newblock Improved approximation bounds for planar point pattern matching.
\newblock {\em Algorithmica}, 50(2):175--207, 2008.
\newblock \href {https://doi.org/10.1007/s00453-007-9059-9} {\path{doi:10.1007/s00453-007-9059-9}}.

\bibitem{CliffordC07}
Peter Clifford and Rapha{\"{e}}l Clifford.
\newblock Simple deterministic wildcard matching.
\newblock {\em Inf. Process. Lett.}, 101(2):53--54, 2007.
\newblock \href {https://doi.org/10.1016/j.ipl.2006.08.002} {\path{doi:10.1016/j.ipl.2006.08.002}}.

\bibitem{ColeH02}
Richard Cole and Ramesh Hariharan.
\newblock Verifying candidate matches in sparse and wildcard matching.
\newblock In John~H. Reif, editor, {\em 34th Annual {ACM} Symposium on Theory of Computing ({STOC} 2002)}, pages 592--601. {ACM}, 2002.
\newblock \href {https://doi.org/10.1145/509907.509992} {\path{doi:10.1145/509907.509992}}.

\bibitem{ColeHI99}
Richard Cole, Ramesh Hariharan, and Piotr Indyk.
\newblock Tree pattern matching and subset matching in deterministic {$O(n \log^3 n)$}-time.
\newblock In Robert~Endre Tarjan and Tandy~J. Warnow, editors, {\em Proceedings of the Tenth Annual {ACM-SIAM} Symposium on Discrete Algorithms, 17-19 January 1999, Baltimore, Maryland, {USA}}, pages 245--254. {ACM/SIAM}, 1999.
\newblock URL: \url{http://dl.acm.org/citation.cfm?id=314500.314565}.

\bibitem{CyganGS15}
Marek Cygan, Harold~N. Gabow, and Piotr Sankowski.
\newblock Algorithmic applications of {Baur}-{Strassen}'s theorem: {Shortest} cycles, diameter, and matchings.
\newblock {\em J. {ACM}}, 62(4):28:1--28:30, 2015.
\newblock \href {https://doi.org/10.1145/2736283} {\path{doi:10.1145/2736283}}.

\bibitem{RezendeL95}
Pedro~Jussieu de~Rezende and Der-Tsai Lee.
\newblock Point set pattern matching in d-dimensions.
\newblock {\em Algorithmica}, 13(4):387--404, 1995.
\newblock \href {https://doi.org/10.1007/BF01293487} {\path{doi:10.1007/BF01293487}}.

\bibitem{FinnKLMSVY97}
Paul~W. Finn, Lydia~E. Kavraki, Jean{-}Claude Latombe, Rajeev Motwani, Christian~R. Shelton, Suresh Venkatasubramanian, and Andrew~Chi{-}Chih Yao.
\newblock {RAPID:} randomized pharmacophore identification for drug design.
\newblock In Jean{-}Daniel Boissonnat, editor, {\em 13th Annual Symposium on Computational Geometry ({SoCG} 1997)}, pages 324--333. {ACM}, 1997.
\newblock \href {https://doi.org/10.1145/262839.262993} {\path{doi:10.1145/262839.262993}}.

\bibitem{FischerP74}
Michael~J. Fischer and Michael~S. Paterson.
\newblock String-matching and other products.
\newblock Technical report, Massachusetts Institute of Technology, USA, 1974.

\bibitem{GajentaanO95}
Anka Gajentaan and Mark~H. Overmars.
\newblock On a class of {$O(n^2)$} problems in computational geometry.
\newblock {\em Comput. Geom.}, 5:165--185, 1995.
\newblock \href {https://doi.org/10.1016/0925-7721(95)00022-2} {\path{doi:10.1016/0925-7721(95)00022-2}}.

\bibitem{GalilG88}
Zvi Galil and Raffaele Giancarlo.
\newblock Data structures and algorithms for approximate string matching.
\newblock {\em J. Complex.}, 4(1):33--72, 1988.
\newblock \href {https://doi.org/10.1016/0885-064X(88)90008-8} {\path{doi:10.1016/0885-064X(88)90008-8}}.

\bibitem{Gao99}
Shuhong Gao.
\newblock Elements of provable high orders in finite fields.
\newblock {\em Proc. Amer. Math. Soc}, 127:1615--1623, 1999.

\bibitem{GavrilovIMV99}
Martin Gavrilov, Piotr Indyk, Rajeev Motwani, and Suresh Venkatasubramanian.
\newblock Geometric pattern matching: {A} performance study.
\newblock In Victor Milenkovic, editor, {\em 15th Annual Symposium on Computational Geometry ({SoCG} 1999)}, pages 79--85. {ACM}, 1999.
\newblock \href {https://doi.org/10.1145/304893.304916} {\path{doi:10.1145/304893.304916}}.

\bibitem{Gawrychowski11}
Pawel Gawrychowski.
\newblock Pattern matching in lempel-ziv compressed strings: Fast, simple, and deterministic.
\newblock In Camil Demetrescu and Magn{\'{u}}s~M. Halld{\'{o}}rsson, editors, {\em 19th Annual European Symposium on Algorithms (ESA 2011)}, volume 6942 of {\em Lecture Notes in Computer Science}, pages 421--432. Springer, 2011.
\newblock \href {https://doi.org/10.1007/978-3-642-23719-5\_36} {\path{doi:10.1007/978-3-642-23719-5\_36}}.

\bibitem{GiorgiGC20}
Pascal Giorgi, Bruno Grenet, and Armelle~Perret du~Cray.
\newblock Essentially optimal sparse polynomial multiplication.
\newblock In Ioannis~Z. Emiris and Lihong Zhi, editors, {\em 45th International Symposium on Symbolic and Algebraic Computation ({ISSAC 2020})}, pages 202--209. {ACM}, 2020.
\newblock \href {https://doi.org/10.1145/3373207.3404026} {\path{doi:10.1145/3373207.3404026}}.

\bibitem{GoldsteinKLP16}
Isaac Goldstein, Tsvi Kopelowitz, Moshe Lewenstein, and Ely Porat.
\newblock How hard is it to find (honest) witnesses?
\newblock In Piotr Sankowski and Christos~D. Zaroliagis, editors, {\em 24th Annual European Symposium on Algorithms, {ESA} 2016, August 22-24, 2016, Aarhus, Denmark}, volume~57 of {\em LIPIcs}, pages 45:1--45:16. Schloss Dagstuhl - Leibniz-Zentrum f{\"{u}}r Informatik, 2016.
\newblock \href {https://doi.org/10.4230/LIPIcs.ESA.2016.45} {\path{doi:10.4230/LIPIcs.ESA.2016.45}}.

\bibitem{GoodrichMO94}
Michael~T. Goodrich, Joseph S.~B. Mitchell, and Mark~W. Orletsky.
\newblock Practical methods for approximate geometric pattern matching under rigid motions.
\newblock In Kurt Mehlhorn, editor, {\em 10th Annual Symposium on Computational Geometry ({SoCG} 1994)}, pages 103--112. {ACM}, 1994.
\newblock \href {https://doi.org/10.1145/177424.177572} {\path{doi:10.1145/177424.177572}}.

\bibitem{HajiaghayiSSS21}
MohammadTaghi Hajiaghayi, Hamed Saleh, Saeed Seddighin, and Xiaorui Sun.
\newblock String matching with wildcards in the massively parallel computation model.
\newblock In Kunal Agrawal and Yossi Azar, editors, {\em 33rd {ACM} Symposium on Parallelism in Algorithms and Architectures ({SPAA} 2021)}, pages 275--284. {ACM}, 2021.
\newblock \href {https://doi.org/10.1145/3409964.3461793} {\path{doi:10.1145/3409964.3461793}}.

\bibitem{HuttenlocherKK92}
Daniel~P. Huttenlocher, Klara Kedem, and Jon~M. Kleinberg.
\newblock On dynamic {Voronoi} diagrams and the minimum hausdorff distance for point sets under euclidean motion in the plane.
\newblock In David Avis, editor, {\em Proceedings of the Eighth Annual Symposium on Computational Geometry, Berlin, Germany, June 10-12, 1992}, pages 110--119. {ACM}, 1992.
\newblock \href {https://doi.org/10.1145/142675.142700} {\path{doi:10.1145/142675.142700}}.

\bibitem{Indyk97}
Piotr Indyk.
\newblock Deterministic superimposed coding with applications to pattern matching.
\newblock In {\em 38th Annual {IEEE} Symposium on Foundations of Computer Science ({FOCS} 1997)}, pages 127--136. {IEEE} Computer Society, 1997.
\newblock \href {https://doi.org/10.1109/SFCS.1997.646101} {\path{doi:10.1109/SFCS.1997.646101}}.

\bibitem{Indyk98a}
Piotr Indyk.
\newblock Faster algorithms for string matching problems: {Matching} the convolution bound.
\newblock In {\em 39th Annual {IEEE} Symposium on Foundations of Computer Science ({FOCS} 1998)}, pages 166--173. {IEEE} Computer Society, 1998.
\newblock \href {https://doi.org/10.1109/SFCS.1998.743440} {\path{doi:10.1109/SFCS.1998.743440}}.

\bibitem{IndykMV99}
Piotr Indyk, Rajeev Motwani, and Suresh Venkatasubramanian.
\newblock Geometric matching under noise: {Combinatorial} bounds and algorithms.
\newblock In Robert~Endre Tarjan and Tandy~J. Warnow, editors, {\em 10th Annual {ACM-SIAM} Symposium on Discrete Algorithms ({SODA} 1999)}, pages 457--465. {ACM/SIAM}, 1999.
\newblock URL: \url{http://dl.acm.org/citation.cfm?id=314500.314601}.

\bibitem{IndykV03}
Piotr Indyk and Suresh Venkatasubramanian.
\newblock Approximate congruence in nearly linear time.
\newblock {\em Comput. Geom.}, 24(2):115--128, 2003.
\newblock \href {https://doi.org/10.1016/S0925-7721(02)00095-0} {\path{doi:10.1016/S0925-7721(02)00095-0}}.

\bibitem{IraniR96}
Sandy Irani and Prabhakar Raghavan.
\newblock Combinatorial and experimental results for randomized point matching algorithms.
\newblock In Sue Whitesides, editor, {\em 12th Annual Symposium on Computational Geometry ({SoCG} 1996)}, pages 68--77. {ACM}, 1996.
\newblock \href {https://doi.org/10.1145/237218.237240} {\path{doi:10.1145/237218.237240}}.

\bibitem{JinX23}
Ce~Jin and Yinzhan Xu.
\newblock Removing additive structure in 3sum-based reductions.
\newblock In Barna Saha and Rocco~A. Servedio, editors, {\em 55th Annual {ACM} Symposium on Theory of Computing ({STOC} 2023)}, pages 405--418. {ACM}, 2023.
\newblock \href {https://doi.org/10.1145/3564246.3585157} {\path{doi:10.1145/3564246.3585157}}.

\bibitem{Kalai02}
Adam Kalai.
\newblock Efficient pattern-matching with don't cares.
\newblock In David Eppstein, editor, {\em 13th Annual {ACM-SIAM} Symposium on Discrete Algorithms ({SODA} 2002)}, pages 655--656. {ACM/SIAM}, 2002.
\newblock URL: \url{http://dl.acm.org/citation.cfm?id=545381.545468}.

\bibitem{KaltofenL88}
Erich Kaltofen and Yagati~N. Lakshman.
\newblock Improved sparse multivariate polynomial interpolation algorithms.
\newblock In Patrizia~M. Gianni, editor, {\em 13th International Symposium on Symbolic and Algebraic Computation ({ISSAC 1988})}, volume 358 of {\em Lecture Notes in Computer Science}, pages 467--474. Springer, 1988.
\newblock \href {https://doi.org/10.1007/3-540-51084-2\_44} {\path{doi:10.1007/3-540-51084-2\_44}}.

\bibitem{Kosaraju89}
S.~Rao Kosaraju.
\newblock Efficient tree pattern matching.
\newblock In {\em 30th Annual {IEEE} Symposium on Foundations of Computer Science ({FOCS} 1989)}, pages 178--183. {IEEE} Computer Society, 1989.
\newblock \href {https://doi.org/10.1109/SFCS.1989.63475} {\path{doi:10.1109/SFCS.1989.63475}}.

\bibitem{LagariasO87}
Jeffrey~C. Lagarias and Andrew~M. Odlyzko.
\newblock Computing pi(x): An analytic method.
\newblock {\em J. Algorithms}, 8(2):173--191, 1987.
\newblock \href {https://doi.org/10.1016/0196-6774(87)90037-X} {\path{doi:10.1016/0196-6774(87)90037-X}}.

\bibitem{Li00}
Lei Li.
\newblock On the arithmetic operational complexity for solving {Vandermonde} linear equations.
\newblock {\em Japan Journal of Industrial and Applied Mathematics}, 17(15), 2000.
\newblock \href {https://doi.org/10.1007/BF03167332} {\path{doi:10.1007/BF03167332}}.

\bibitem{Morgenstern85}
Jacques Morgenstern.
\newblock How to compute fast a function and all its derivatives: {A} variation on the theorem of {Baur}-{Strassen}.
\newblock {\em {SIGACT} News}, 16(4):60--62, 1985.
\newblock \href {https://doi.org/10.1145/382242.382836} {\path{doi:10.1145/382242.382836}}.

\bibitem{MountNM98}
David~M. Mount, Nathan~S. Netanyahu, and Jacqueline~Le Moigne.
\newblock Improved algorithms for robust point pattern matching and applications to image registration.
\newblock In Ravi Janardan, editor, {\em 14th Annual Symposium on Computational Geometry ({SoCG} 1998)}, pages 155--164. {ACM}, 1998.
\newblock \href {https://doi.org/10.1145/276884.276902} {\path{doi:10.1145/276884.276902}}.

\bibitem{Muthukrishnan95}
S.~Muthukrishnan.
\newblock New results and open problems related to non-standard stringology.
\newblock In Zvi Galil and Esko Ukkonen, editors, {\em 6th Annual Symposium on Combinatorial Pattern Matching ({CPM} 1995)}, volume 937 of {\em Lecture Notes in Computer Science}, pages 298--317. Springer, 1995.
\newblock \href {https://doi.org/10.1007/3-540-60044-2\_50} {\path{doi:10.1007/3-540-60044-2\_50}}.

\bibitem{MuthukrishnanP94}
S.~Muthukrishnan and Krishna~V. Palem.
\newblock Non-standard stringology: algorithms and complexity.
\newblock In Frank~Thomson Leighton and Michael~T. Goodrich, editors, {\em Proceedings of the Twenty-Sixth Annual {ACM} Symposium on Theory of Computing, 23-25 May 1994, Montr{\'{e}}al, Qu{\'{e}}bec, Canada}, pages 770--779. {ACM}, 1994.
\newblock \href {https://doi.org/10.1145/195058.195457} {\path{doi:10.1145/195058.195457}}.

\bibitem{MuthukrishnanR95}
S.~Muthukrishnan and H.~Ramesh.
\newblock String matching under a general matching relation.
\newblock {\em Inf. Comput.}, 122(1):140--148, 1995.
\newblock \href {https://doi.org/10.1006/inco.1995.1144} {\path{doi:10.1006/inco.1995.1144}}.

\bibitem{Nakos20}
Vasileios Nakos.
\newblock Nearly optimal sparse polynomial multiplication.
\newblock {\em {IEEE} Trans. Inf. Theory}, 66(11):7231--7236, 2020.
\newblock \href {https://doi.org/10.1109/TIT.2020.2989385} {\path{doi:10.1109/TIT.2020.2989385}}.

\bibitem{Pan01}
Victor~Y. Pan.
\newblock {\em Structured Matrices and Polynomials: {Unified} Superfast Algorithms}.
\newblock Springer-Verlag, Berlin, Heidelberg, 2001.

\bibitem{Pinter85}
Ron~Y. Pinter.
\newblock Efficient string matching with don't-care patterns.
\newblock In {\em Combinatorial Algorithms on Words}, pages 11--29. Springer Berlin Heidelberg, 1985.

\bibitem{Rucklidge93}
William Rucklidge.
\newblock Lower bounds for the complexity of the hausdorff distance.
\newblock In {\em 5th Canadian Conference on Computational Geometry ({CCCG} 1993)}, pages 145--150. University of Waterloo, 1993.

\bibitem{Rucklidge96}
William Rucklidge.
\newblock {\em Efficient Visual Recognition Using the Hausdorff Distance}, volume 1173 of {\em Lecture Notes in Computer Science}.
\newblock Springer, 1996.
\newblock \href {https://doi.org/10.1007/BFb0015091} {\path{doi:10.1007/BFb0015091}}.

\bibitem{Rumelhart86}
David~E. Rumelhart, Geoffrey~E. Hinton, and Ronald~J. Williams.
\newblock Learning representations by back-propagating errors.
\newblock {\em Nature}, 323(6088):533--536, 1986.
\newblock \href {https://doi.org/10.1038/323533a0} {\path{doi:10.1038/323533a0}}.

\bibitem{Shoup90}
Victor Shoup.
\newblock Searching for primitive roots in finite fields.
\newblock In Harriet Ortiz, editor, {\em 22nd Annual {ACM} Symposium on Theory of Computing ({STOC} 1990)}, pages 546--554. {ACM}, 1990.
\newblock \href {https://doi.org/10.1145/100216.100293} {\path{doi:10.1145/100216.100293}}.

\bibitem{Shparlinski96}
Igor~E. Shparlinski.
\newblock On finding primitive roots in finite fields.
\newblock {\em Theor. Comput. Sci.}, 157(2):273--275, 1996.
\newblock \href {https://doi.org/10.1016/0304-3975(95)00164-6} {\path{doi:10.1016/0304-3975(95)00164-6}}.

\bibitem{TaoCH12}
Terence Tao, Ernest~Croot III, and Harald Helfgott.
\newblock Deterministic methods to find primes.
\newblock {\em Math. Comput.}, 81(278):1233--1246, 2012.
\newblock \href {https://doi.org/10.1090/S0025-5718-2011-02542-1} {\path{doi:10.1090/S0025-5718-2011-02542-1}}.

\bibitem{Ukkonen10}
Esko Ukkonen.
\newblock Geometric point pattern matching in the {Knuth-Morris-Pratt} way.
\newblock {\em J. Univers. Comput. Sci.}, 16(14):1902--1911, 2010.
\newblock \href {https://doi.org/10.3217/jucs-016-14-1902} {\path{doi:10.3217/jucs-016-14-1902}}.

\bibitem{vonzurGathenG13}
Joachim von~zur Gathen and Jürgen Gerhard.
\newblock {\em Modern Computer Algebra}.
\newblock Cambridge University Press, 3rd edition, 2013.
\newblock \href {https://doi.org/10.1017/CBO9781139856065} {\path{doi:10.1017/CBO9781139856065}}.

\bibitem{Werbos74}
Paul~J. Werbos.
\newblock {\em Beyond regression: {New} Tools for prediction and analysis in the behavioral science}.
\newblock PhD thesis, Harvard University, 1974.

\bibitem{Werbos94}
Paul~J. Werbos.
\newblock {\em The Roots of Backpropagation: {From} Ordered Derivatives to Neural Networks and Political Forecasting}.
\newblock Wiley-Interscience, USA, 1994.

\end{thebibliography}

\end{document}